\documentclass[twocolumn,superscriptaddress,aps,pra]{revtex4-2}

\usepackage{graphicx}
\usepackage{amsmath}
\usepackage{amsthm}
\usepackage{amssymb}
\usepackage{latexsym}
\usepackage{array}
\usepackage{hyperref}
\usepackage{amsfonts}
\usepackage{dsfont}
\usepackage{mathrsfs}
\usepackage{verbatim}
\usepackage{bbold}
\usepackage[normalem]{ulem}
\usepackage{upgreek}
\usepackage{makecell}
\usepackage{adjustbox}
\usepackage{algorithm}
\usepackage{algpseudocode}
\usepackage{color}
\usepackage{bm}
\usepackage{times}
\usepackage{booktabs}
\usepackage{multirow}

\newcommand{\bra}[1]{\left<#1\right|}
\newcommand{\ket}[1]{\left|#1\right>}
\newcommand{\abs}[1]{\left|#1\right|}

\newcommand{\braket}[2]{\left<{#1}|{#2}\right>}
\newcommand{\ketbra}[2]{\ket{#1}\!\!\bra{#2}}
\newcommand{\tr}{\operatorname{Tr}}
\newcommand{\diag}{\operatorname{Diag}}

\newcommand{\Hcsm}{\mathsf{HCSM}}
\newcommand{\Ptms}{\mathsf{PTM}}
\newcommand{\Qtms}{\mathsf{QTM}}
\newcommand{\Senm}{\mathsf{SEnM}}

\newcommand{\id}{\mathrm{id}}

\newtheorem{theorem}{Theorem}
\newtheorem{proposition}{Proposition}
\newtheorem{lemma}{Lemma}
\newtheorem{corollary}{Corollary}
\newtheorem{definition}{Definition}

\newtheorem{remark}{Remark}

\begin{document}

\title{The Classical Boundary of Quantum Computation: History-Resolved Causal Stochastic Machines and the Cost of Classicalization}

\author{Kyoungho~Cho}
\affiliation{Institute for Convergence Research and Education in Advanced Technology, Yonsei University, Seoul 03722, Republic of Korea}
\affiliation{Department of Statistics and Data Science, Yonsei University, Seoul 03722, Republic of Korea}

\author{Junghee~Ryu}
\affiliation{Center for Quantum Information R\&D, Korea Institute of Science and Technology Information (KISTI), Daejeon 34141, Republic of Korea}
\affiliation{Division of Quantum Information, KISTI School, Korea University of Science and Technology, Daejeon 34141, Republic of Korea}

\author{Minsu~Kim}
\affiliation{Institute for Convergence Research and Education in Advanced Technology, Yonsei University, Seoul 03722, Republic of Korea}

\author{Jeongho~Bang}\email{jbang@yonsei.ac.kr}
\affiliation{Institute for Convergence Research and Education in Advanced Technology, Yonsei University, Seoul 03722, Republic of Korea}
\affiliation{Department of Quantum Information, Yonsei University, Incheon 21983, Republic of Korea}

\date{\today}

\begin{abstract}
What separates quantum computation from classical computation when a classical model may retain every branch history and arbitrary causal memory? We study this question using history-resolved causal stochastic machines (HCSMs), which describe positive causal history laws. Probabilistic Turing machines (PTMs) embed in stochastic-ensemble models (SEnMs), which in turn embed in HCSMs. We prove that HCSM history laws coincide exactly with the fixed-basis, prefix-preserving diagonal sector of ideal quantum processes, denoted by $\Qtms_{\rm proc}$ and distinguished here from algorithmic quantum Turing machines (QTMs). Coherent quantum processes extend strictly beyond this sector. The boundary is already visible in reversibility. Born-probability shadows of unitary gates preserve inverse composition exactly for basis permutations with phases. Outside this class, an exact positive stochastic simulation that supports universal quantum continuations must retain predictive information absent from visible probabilities. An inverse-recovery identity quantifies this phase ledger, and universal continuation forces the retained probability law to be tomographically sufficient for the quantum state. Under continuous full-state encoding assumptions, this requirement yields quantum-state dimension lower bounds. The familiar slogan \(\mathsf{PTM}\subseteq\mathsf{QTM}\) is thereby sharpened into a precise classical boundary and a no-free-classicalization principle. Complete classical branching does not remove the predictive information needed to represent quantum coherence.
\end{abstract}

\maketitle

\section{Introduction}\label{sec:intro}

The standard narrative says that quantum computation extends classical computation by allowing quantum superposition, phase, and interference~\cite{Feynman1986,Grover1997,Shor1997}. As a slogan this is correct, but as a mathematical statement it leaves open a more delicate structural question. If one strengthens a classical probabilistic model as far as possible while retaining ordinary positive probabilities, causal updating, and stochastic branch aggregation, does the resulting model approach quantum computation, or does it remain confined to a sharply identifiable classical sector?

This question is intentionally different from the usual complexity-theoretic problem of separating BPP and BQP~\cite{Yao1993,BernsteinVazirani1993}. We do not attempt to prove that efficient probabilistic computation is strictly weaker than efficient quantum computation. A sufficiently slow classical machine can, of course, multiply matrices and numerically reproduce the final output statistics of finite quantum circuits. That is not the issue studied here. We ask a structural question instead. Can a positive stochastic process model represent the same elementary pieces as quantum computation and preserve the way those pieces compose?

Earlier stochastic-ensemble analyses, including the SEnM framework of Ref.~\cite{Bang2018}, already suggested that final output agreement is an insufficient criterion, and that temporal or information-theoretic consistency conditions reveal a more meaningful classical/quantum divide. However, the SEnM analysis leaves open what information a positive stochastic representation must retain to preserve coherent quantum composition. Here, we address this question through an explicit history-based formulation and quantify the predictive information required for inverse recovery and future quantum continuations.

To that end, we define a ``history-resolved causal stochastic machine (HCSM).'' It is a strong classical model in which every branch history is retained explicitly and the next transition may depend on the entire past. HCSM therefore permits history dependence beyond a fixed Markovian or ensemble-snapshot description. This does not assert a computational separation from machines whose internal configurations are enlarged to store that history. Nor is HCSM a nondeterministic Turing machine (NTM). It cannot promote an accepting branch to certainty merely because one exists, and its branch aggregation remains affine in ordinary probability mass. This distinction isolates positive, affine stochastic branch aggregation. Here ``non-transcendental'' refers to the branch semantics, not to an independent computability assumption.

Several established frameworks inform this comparison. The probability chain rule supplies the history-law factorization~\cite{CoverThomas}. Classical channels also have an algebraic description as maps between commutative algebras~\cite{Coecke2016Channels}. Quantum combs and process tensors formalize causal composition with memory~\cite{Chiribella2009Networks,Pollock2018Processes}. Generalized extension theorems explain why interventions require consistency conditions beyond those of passive observations~\cite{Milz2020Extension}. Resource theories quantify basis-dependent coherence and distinguish operational classes of incoherent maps~\cite{Baumgratz2014Coherence,ChitambarGour2016Comparison}, while unistochastic-matrix theory describes probabilities obtained from unitary matrix elements~\cite{Bengtsson2005Unistochastic}. We bring these ingredients together to compare history retention, composition, and recovery within one specified classical interface. The fixed-basis diagonal history sector used here is narrower than the classes of quantum operations that merely preserve incoherent input states.

The simulation requirement also distinguishes our approach from related classicality criteria. Quasiprobability and contextuality approaches constrain positive representations under their operational assumptions~\cite{Spekkens2008Negativity,Veitch2012Negativity}, while dimensional bounds for ontological models depend on regularity and dynamical assumptions~\cite{Montina2011Dimension}. Cobucci \emph{et al.}~\cite{Cobucci2026Operational} instead characterize quantum state sets obtained by stochastic coordination of commuting preparation devices, with a basis that may vary between devices. In our setting, one retained representation must support causal composition, inverse recovery, and subsequent quantum continuations. The contribution is a joint characterization connecting the computational-model hierarchy to the information needed beyond visible probability closure. The history-law correspondence specifies the classical sector. Recovery and continuation then determine what an extension of that sector must retain. The dimension statements concern continuous encodings of the retained probability law, rather than the dimension of an individual sampled hidden state. Placing these questions in one operational setting turns model inclusion into a testable requirement on information preservation.

Our first set of results locates this stronger classical model within quantum process theory. Every finite-horizon HCSM process has a faithful realization as a diagonal density process, and its image is exactly the fixed-basis, prefix-preserving dephased history sector. Classical stochastic computation thus appears as a well-defined commutative, positive-probability subtheory of quantum processes. The next results explain why this inclusion marks a boundary. The probability shadow of a quantum gate looks stochastic under a single basis-to-basis readout, but generally fails to preserve the compositional and reversible structure of quantum dynamics. Repairing that failure requires additional information. We quantify it through a phase ledger and prove that, under universal exact compositional simulation, the probability law of the retained classical information must carry the full operational content of the quantum state.

This gives more content to the familiar slogan $\Ptms \subseteq \Qtms$. Within finite horizons and countable configuration spaces, HCSM already admits arbitrary classical branching, memory, and positive causal stochastic dependence. Adding unrecorded routes cannot repair the structural failure of the visible probability dynamics. Their predictive information has to be retained. That information lies in recoverable relations among branches, namely quantum phase or coherence. We therefore obtain a no-free-classicalization principle. Beyond visible probability closure, exact classicalization must retain the missing quantum predictive information or adopt different weight and composition rules.

\section{HCSM and the maximal classical stochastic sector}\label{sec:hcsm-sector}

\subsection{Process simulation rather than complexity separation}\label{subsec:process-sim} 

We begin by distinguishing output agreement from compositional process simulation. We reserve $\Qtms$ for algorithmic quantum Turing machines with effective finite machine descriptions and specified admissible amplitudes~\cite{Benioff1982,Deutsch1985}. In contrast, $\Qtms_{\rm proc}$ denotes ideal quantum processes, allowing states and CPTP maps on finite-dimensional or separable Hilbert spaces, with ancillas and discarding~\cite{Watrous2018}. The structural embeddings below take values in $\Qtms_{\rm proc}$. They do not assume that arbitrary real transition probabilities admit exact finite-gate implementations. The familiar algorithmic PTM--QTM inclusion is recovered separately for standard fair-coin machines, as stated in Corollary~\ref{cor:effective-PTM-QTM}. Appendix~\ref{app:diagonal-embedding} gives the broader implementation conditions.

\begin{definition}[Output simulation]
\label{def:output-simulation}
A classical model output-simulates a quantum circuit $C$ on input $x$ if it reproduces the final readout distribution,
\begin{eqnarray}
P_{\rm cl}(y | C, x) = P_{\rm Q}(y | C, x).
\end{eqnarray}
No requirement is imposed on intermediate structure or on how the simulator represents circuit composition.
\end{definition}

Output simulation alone is too weak for the present purpose because it does not constrain the representation of intermediate operations. Numerical amplitude tracking can also satisfy compositional simulation when the tracked amplitudes remain part of the retained classical configuration. The earlier SEnM hypothesis in Ref.~\cite{Bang2018} belongs to a stronger category, which we call temporal-transition simulation. Selected intermediate conditional readout probabilities and the final input-output probability must agree with those of the quantum computation model.

\begin{definition}[Compositional process simulation]
\label{def:compositional-process-simulation}
Let $\mathcal G$ be a family of quantum circuit fragments. On a fixed configuration space, a positive stochastic representation $F$, with specified preparation and readout rules, is a compositional process simulation on $\mathcal G$ if
\begin{eqnarray}
F(\hat{\mathds{1}})=\hat{\mathds{1}}, \quad F(\hat{V}\hat{U})=F(\hat{V})F(\hat{U})
\end{eqnarray}
whenever $\hat{U}, \hat{V}, \hat{V}\hat{U} \in \mathcal G$, and if the associated readout probabilities agree with the intended quantum probabilities in the chosen basis. When complete bookkeeping histories are retained, these identities are required on operational equivalence classes of reachable laws: two laws are equivalent when every allowed future continuation has the same readout distribution. An inverse may restore all predictions without erasing the gate history.
\end{definition}

This study works mainly at this level. A classical computer may still calculate amplitudes. The question is whether a positive stochastic process can represent the same elementary pieces and preserve their composition. At that level, interference cannot be hidden.

\subsection{History-resolved causal stochastic machines}\label{subsec:hcsm-def} 

We now define the central classical model.

\begin{definition}[History-resolved causal stochastic machine]
\label{def:hcsm}
Fix a finite time horizon $T$. For each time $t=0,\ldots,T$, let $X_t$ be a nonempty finite or countable configuration set. A history at time $t$ is
\begin{eqnarray}
h_t=(x_0,x_1,\ldots,x_t) \in \mathcal H_t:=X_0 \times \cdots \times X_t.
\end{eqnarray}
A history-resolved causal stochastic machine (HCSM) is a tuple
\begin{eqnarray}
{\mathcal M} = \bigl(\{X_t\}_{t=0}^T, \mu_0, \{K_{t+1}\}_{t=0}^{T-1}, \{R_t\}_{t=0}^T, A\bigr),
\end{eqnarray}
where $\mu_0 \in \Delta(X_0)$ is an initial distribution, $A \subseteq {\mathcal H}_T$ is an accepting set, $R_t:\mathcal H_t \to M_t$ is a readout map, and
\begin{eqnarray}
K_{t+1}(x_{t+1}|h_t) \geq 0,
\quad
\sum_{x_{t+1}\in X_{t+1}}K_{t+1}(x_{t+1}|h_t)=1
\end{eqnarray}
is a causal stochastic kernel. The induced history distribution is
\begin{eqnarray}
\mu_{t+1}(h_t,x_{t+1}) = \mu_t(h_t)K_{t+1}(x_{t+1}|h_t),
\label{eq:hcsm-update-v03}
\end{eqnarray}
and the acceptance probability is
\begin{eqnarray}
P_{\rm acc}(\mathcal M) = \sum_{h_T \in A}\mu_T(h_T).
\end{eqnarray}
\end{definition}

\begin{figure}[t]
\centering
\includegraphics[width=0.85\linewidth]{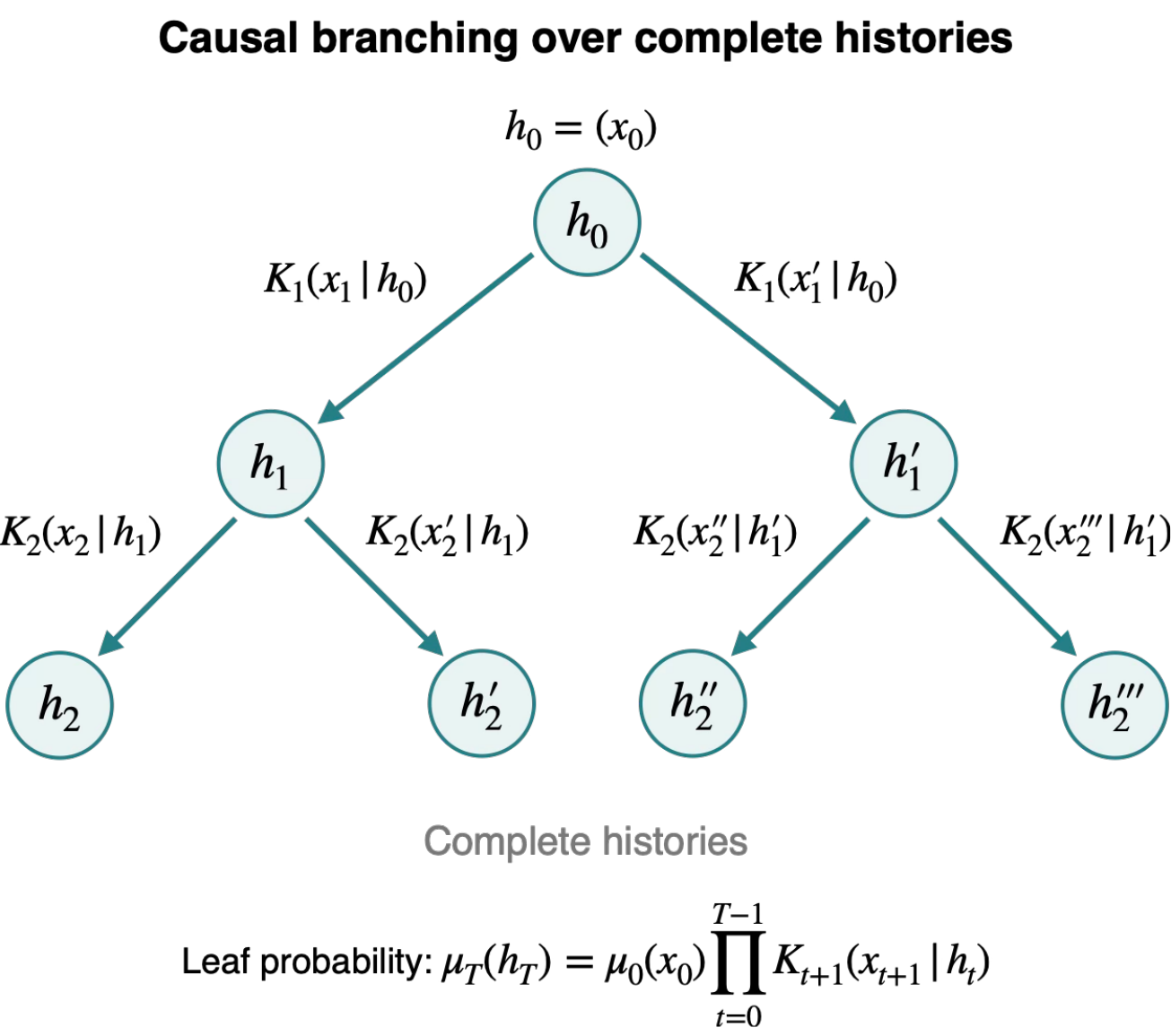}
\caption{An HCSM is a causal stochastic tree over full histories. Each edge is labeled by a kernel $K_{t+1}(x_{t+1}|h_t)$ that may depend on the entire previous history $h_t$. A leaf $h_T$ receives probability $\mu_T(h_T)=\mu_0(x_0)\prod_{t=0}^{T-1}K_{t+1}(x_{t+1} | h_t)$. This makes HCSM non-Markovian in general, while preserving ordinary positive probability.}
\label{fig:hcsm-tree}
\end{figure}

For externally selected controls, causality additionally requires nonanticipation: an update, initial law, or readout may use only controls already supplied. Kernels for control sequences sharing the same supplied prefix must agree up to that step. The chain-rule factorization of one separately fixed protocol does not by itself establish this cross-protocol condition.

The term ``history-resolved'' means that HCSM may keep the full branch history rather than only the current configuration. Causality restricts the next transition to dependence on the past. Stochasticity requires nonnegative probability weights and aggregation by summation of probability mass. Fig.~\ref{fig:hcsm-tree} depicts the history-tree form of the update rule in Eq.~\eqref{eq:hcsm-update-v03}.

\begin{theorem}[PTM embedding into HCSM]
\label{thm:PTMinHCSM}
Every finite-horizon probabilistic Turing-machine process is a special case of HCSM.
\end{theorem}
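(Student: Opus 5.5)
The plan is to construct the HCSM directly from the PTM's configuration dynamics, using Definition~\ref{def:hcsm} as written. Fix a PTM $\mathcal T$ with finite state set $Q$, tape alphabet $\Gamma$, and transition function that, for each (state, scanned symbol) pair, assigns probabilities $p(q',a',d\mid q,a)\ge 0$ summing to one. Fix a finite horizon $T$ and an input $x$. For every $t$ we take $X_t$ to be the set $\mathcal C$ of instantaneous descriptions: a state, a head position in $\mathbb Z$, and a finitely supported tape content. This set is countable, as Definition~\ref{def:hcsm} requires. (One could even restrict to configurations reachable within $t$ steps, which form a finite set.) The initial law is $\mu_0=\delta_{c_0(x)}$, where $c_0(x)$ is the start configuration on input $x$.

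The kernel is $K_{t+1}(c'\mid h_t)=P_{\mathcal T}(c'\mid x_t)$, the one-step configuration transition probability induced by $p$, which depends only on the last entry $x_t$ of $h_t$. For halting configurations, and for configurations with no applicable rule, we set $K_{t+1}(c'\mid h_t)=\delta_{c',x_t}$, so a halted machine stays put. We then check the kernel axioms.

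\begin{itemize}
\item Nonnegativity is inherited from $p$.
\item Normalization holds because distinct transition triples $(q',a',d)$ produce distinct successor configurations from a fixed $c$. This gives $\sum_{c'}P_{\mathcal T}(c'\mid c)=\sum_{(q',a',d)}p(q',a',d\mid q,a)=1$.
\end{itemize}

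If the PTM formalism allows several triples to yield the same $c'$ (for example, a stay move that rewrites the same symbol), we instead sum their probabilities. Normalization is unaffected.

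The readout is $R_t(h_t)=x_t$, or any PTM-visible function of it such as the current state. The accepting set is $A=\{h_T: x_T \text{ is in an accepting halting state}\}$, or, if acceptance is defined as having accepted by time $T$, the set of histories that visit an accepting state at some $t\le T$. Both choices are subsets of $\mathcal H_T$, as required.

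The main step is to identify the history law with the PTM's path measure. By induction on $t$ using Eq.~\eqref{eq:hcsm-update-v03}, $\mu_t(h_t)=\delta_{x_0,c_0(x)}\prod_{s<t}P_{\mathcal T}(x_{s+1}\mid x_s)$. This is exactly the probability that the PTM's computation path up to step $t$ is $h_t$. The marginal of $\mu_t$ on $x_t$ is therefore the PTM's configuration distribution at time $t$, and $P_{\rm acc}(\mathcal M)$ equals the PTM's probability of accepting within $T$ steps. In this sense the PTM process is the Markovian special case of HCSM, in which the kernels ignore everything in $h_t$ except $x_t$. The nonanticipation remark is automatic, since no external controls are involved, or since the input is fixed before $t=0$.

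I expect the only delicate point to be formal rather than mathematical. It consists of fixing which PTM convention is meant (coin-flip versus general rational or real transition probabilities; one-sided versus two-sided tapes; acceptance at time $T$ versus by time $T$) and checking that the configuration space stays countable and the kernels stay normalized under that convention, including at halting configurations. I would handle this by stating the construction for a general probabilistic transition function $p$ and noting that fair-coin machines are the case $p\in\{0,\tfrac12,1\}$. Multi-tape machines need only an enlarged configuration tuple.
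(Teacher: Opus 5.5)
Your proposal is correct and is essentially the paper's proof: take the PTM configurations as $X_t$, set $K_{t+1}(x_{t+1}\mid h_t)=P_{\rm PTM}(x_{t+1}\mid x_t)$ depending only on the last entry of $h_t$, copy the readout and acceptance set, and observe that the update rule reproduces the PTM path law. Your additional checks (countability of configurations, normalization including the halting convention, and the choice of acceptance convention) fill in details that the paper leaves implicit.
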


\begin{proof}---Let $x_t$ denote the PTM configuration at time $t$, including internal state, tape contents on the used region and head location. A PTM transition has the form $P_{\rm PTM}(x_{t+1}|x_t)$. Define an HCSM on the same configuration sets by
\begin{eqnarray}
K_{t+1}(x_{t+1}|h_t) := P_{\rm PTM}(x_{t+1}|x_t),
\end{eqnarray}
where $x_t$ is the last component of $h_t$. Substituting this into Eq.~\eqref{eq:hcsm-update-v03} gives exactly the PTM path distribution. Readout and acceptance sets are copied from the PTM. Hence, the PTM process is an HCSM process with Markovian kernels.
\end{proof}

\begin{theorem}[SEnM embedding into HCSM]
\label{thm:SEnMinHCSM}
Every finite-horizon SEnM process whose ensemble-configuration laws have finite or countable support is representable as an HCSM process. For the comparison in this paper, the ensemble class includes its singleton case.
\end{theorem}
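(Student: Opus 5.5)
The plan is to mirror the proof of Theorem~\ref{thm:PTMinHCSM}, with one change: the HCSM configuration at time $t$ will be the \emph{ensemble configuration} of the SEnM, not an individual machine configuration. A finite-horizon SEnM process is specified by an initial ensemble law and a sequence of update rules. At each step, an ensemble configuration $e_t$ is mapped to a law over next ensemble configurations $e_{t+1}$, and readouts are taken from the current ensemble. This update may be conditioned on earlier steps, for example through the selected intermediate conditional readouts that enter temporal-transition simulation. By hypothesis these laws have finite or countable support. I will therefore take $X_t$ to be the union of the supports of all reachable time-$t$ ensemble-configuration laws, which is countable as a countable union over the countably many reachable predecessors. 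The initial distribution $\mu_0$ is the SEnM initial ensemble law.

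Next I define the kernel $K_{t+1}(e_{t+1}|h_t)$ to be the SEnM transition law conditioned on the full history $h_t=(e_0,\ldots,e_t)$. If the SEnM update is Markovian in the ensemble snapshot, this reduces to dependence on $e_t$ alone, exactly as in the PTM case. If the update uses earlier data, the HCSM's history dependence absorbs it. Nonnegativity and normalization of $K_{t+1}$ are inherited directly from the SEnM update law. Iterating Eq.~\eqref{eq:hcsm-update-v03} then gives $\mu_T(h_T)=\mu_0(e_0)\prod_t K_{t+1}(e_{t+1}|h_t)$. By the chain rule, this is exactly the SEnM path law over ensemble trajectories.

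The readouts $R_t$ are defined as the SEnM readout applied to the last component of $h_t$. If the SEnM readout is itself a distribution over outcomes computed from the ensemble, this distribution becomes the value in $M_t$. If instead a sampled readout outcome is required, I refine $X_t$ to pairs (ensemble, outcome), which remains countable. The accepting set $A$ is the preimage of SEnM acceptance, so $P_{\rm acc}$ agrees by summation of probability mass.

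For the singleton case, an ensemble consisting of a single machine configuration, the construction collapses to the one in Theorem~\ref{thm:PTMinHCSM}. Consistency with that theorem then holds automatically.

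The main obstacle will be pinning down the SEnM semantics from Ref.~\cite{Bang2018} precisely enough to carry out the construction. In particular, I need to check that the ensemble update produces a genuine probability law over ensemble configurations, rather than a deterministic map on distributions. If it is a deterministic map on distributions, I first treat it as a Dirac kernel on the (countable) reachable set of ensemble states; this is still a valid HCSM kernel. I also need nonanticipation when controls are chosen externally. This should hold because the SEnM update at step $t$ uses only the controls supplied up to step $t$, so kernels sharing a supplied prefix agree.
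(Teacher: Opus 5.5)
Your construction is essentially the paper's proof: you take $X_t$ to be the countable set of supported ensemble collections and set $K_{t+1}(\cdot|h_t)$ equal to the possibly non-Markovian SEnM transition $C(s_{t+1}|s_t,\ldots,s_0)$. You then recover readouts either by readout maps or by adjoining sampled outcomes to the configuration, as in Proposition~\ref{app:prop:readout-determinization}, and reduce to the PTM embedding in the singleton case. The one detail you leave implicit is the zero-probability histories in $X_0\times\cdots\times X_t$: there the conditional law is undefined and need not be supported in your reachable $X_{t+1}$, so, as the paper does, you should assign an arbitrary normalized kernel on them, which changes no path probability.
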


\begin{proof}---In SEnM, let $s_j$ denote the ensemble state collection at step $j$, and let the non-Markovian ensemble transition be $C(s_j|s_{j-1},\ldots,s_0)$, as in the construction of the joint probability $P(s_0,\ldots,s_j)$ in Ref.~\cite{Bang2018}. Set the HCSM configuration space $X_j$ equal to the stated countable set of supported ensemble collections $s_j$, and define
\begin{eqnarray}
K_j(x_j|h_{j-1}) = C(s_j|s_{j-1},\ldots,s_0).
\end{eqnarray}
On ensemble histories of zero probability, choose any normalized kernel on $X_j$. The HCSM history distribution then reproduces the SEnM joint probability over ensemble collections. The SEnM readout probabilities are recovered by choosing readout maps or readout kernels corresponding to the SEnM ensemble readout rule. Thus, SEnM is included in HCSM. Under the singleton convention, take $s_j=(x_j)$, use the PTM transition $P_{\rm PTM}(x_j|x_{j-1})$, and retain its single-copy readout; the entire PTM path law is reproduced. This supplies $\Ptms\subseteq\Senm\subseteq\Hcsm$. The countability and singleton conventions specify the comparison made here; arbitrary uncountable ensemble spaces are not covered by the countable HCSM construction.
\end{proof}

\begin{theorem}[HCSM is not NTM semantics]
\label{thm:hcsm-not-ntm}
HCSM does not realize nondeterministic existential branch selection. In particular, its acceptance functional is affine in the history distribution, whereas NTM acceptance is not.
\end{theorem}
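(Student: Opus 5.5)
The plan is to compare the two acceptance functionals on the same finite history tree and show that they differ in a structural property. For HCSM, Definition~\ref{def:hcsm} gives $P_{\rm acc}(\mathcal M)=\sum_{h_T\in A}\mu_T(h_T)$. This is the restriction to $A$ of the linear functional $\mu\mapsto\langle \mathbb 1_A,\mu\rangle$ on measures over $\mathcal H_T$. Hence for any two history laws $\mu,\nu$ and $\lambda\in[0,1]$ we have $P_{\rm acc}(\lambda\mu+(1-\lambda)\nu)=\lambda P_{\rm acc}(\mu)+(1-\lambda)P_{\rm acc}(\nu)$. I would also record that mixtures of HCSM history laws are again HCSM history laws: adjoin an initial coin $x_0$ and route through history-dependent kernels. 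This guarantees the affine structure is internal to the model.

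Next I would formalize NTM semantics on the same tree. Take the support of the history law, or the set of branches with nonzero kernel weight, and define the acceptance functional $N(\mu)=1$ if $\mathrm{supp}(\mu)\cap A\neq\emptyset$ and $N(\mu)=0$ otherwise. This is existential branch selection: acceptance is certain whenever some accepting branch exists, regardless of its weight. The key counterexample is a single binary branching at $t=1$ with $K_1(a|x_0)=p$, $K_1(r|x_0)=1-p$, and accepting set $A=\{(x_0,a)\}$. Then HCSM gives $P_{\rm acc}=p$, while $N=1$ for every $p>0$.

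To show $N$ is not affine, consider $\mu=\delta_{(x_0,a)}$ and $\nu=\delta_{(x_0,r)}$. Then $N(\tfrac12\mu+\tfrac12\nu)=1\neq\tfrac12=\tfrac12N(\mu)+\tfrac12N(\nu)$. So no HCSM acceptance functional, being affine, coincides with $N$ on all history laws.

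To conclude that HCSM cannot \emph{realize} existential selection, I would argue as follows. Any realization would need $P_{\rm acc}$ to take value $1$ on the mixture above. However, the mass $\sum_{h\in A}\mu_T(h)$ is bounded by the accepting mass $p$, since kernels only redistribute positive probability and cannot amplify it.

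The main obstacle is making "realize" precise so that the result is not trivially escaped by an HCSM with a different accepting set. The case I expect to need care is allowing a simulating HCSM that sees the same branching structure but may choose kernels and $A$ freely. I would handle this by requiring the simulation to be uniform in the branch weight $p$, that is, to commute with the mixture structure. Affinity then forces $P_{\rm acc}$ to be linear in $p$, which contradicts the step function $N$.
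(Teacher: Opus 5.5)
Your proposal is correct and takes essentially the same route as the paper: for a fixed accepting set, HCSM acceptance $\sum_{h\in A}\mu(h)$ is affine, while the support-level functional $\mathbf 1_{\mu(A)>0}$ fails affinity on a mixture of a fully accepting law with a fully rejecting law. Your two point masses with $\lambda=\tfrac12$ are the paper's counterexample specialized. Your extra discussion of ``realize'' (uniformity in the branch weight) goes beyond the paper's proof, which fixes $A$ and stops at non-affineness, and it is consistent with the paper's remark that the distinction concerns branch semantics rather than computability.
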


\begin{proof}---For a fixed accepting set $A$, HCSM acceptance is
\begin{eqnarray}
\mathcal A(\mu) = \sum_{h \in A}\mu(h).
\end{eqnarray}
Therefore, for probability distributions $\mu,\nu$ and $0 \leq \lambda \leq 1$,
\begin{eqnarray}
\mathcal A(\lambda\mu+(1-\lambda)\nu) = \lambda\mathcal A(\mu) + (1-\lambda)\mathcal A(\nu).
\end{eqnarray}
By contrast, NTM acceptance is represented at the support level by
\begin{eqnarray}
\mathcal A_{\exists}(\mu) = \mathbf 1_{\mu(A)>0}.
\end{eqnarray}
Take $\mu(A)=1$, $\nu(A)=0$ and $0<\lambda<1$. Then, $\mathcal A_{\exists}(\lambda\mu+(1-\lambda)\nu)=1$, while $\lambda{\mathcal A}_{\exists}(\mu) + (1-\lambda)\mathcal A_{\exists}(\nu)=\lambda$. Thus, the NTM functional is not affine. HCSM may retain all histories, but it cannot accept merely because some branch exists.
\end{proof}

HCSM makes the route history part of the retained description. One-time ensemble proportions do not in general determine temporal route correlations, whereas the HCSM path law does. This distinction is relative to the chosen snapshot description: an ensemble copy may itself be enlarged to store its history. Appendix~\ref{app:senm-vs-hcsm} state this comparison explicitly.

\subsection{Normal form and readout consistency}\label{subsec:maximality} 

The next result explains why HCSM is not just a convenient model. It is a normal form for finite-horizon positive causal history processes.

\begin{widetext}
\begin{theorem}[HCSM normal form]
\label{thm:hcsm-normal-form}\label{app:thm:normal-form-countable}
Let $P(x_0,\ldots,x_T)$ be any joint probability distribution on a finite or countable product space. Then, there exist causal kernels $K_{t+1}(x_{t+1}|x_0,\ldots,x_t)$ and an initial distribution $\mu_0$ such that
\begin{eqnarray}
P(x_0,\ldots,x_T) = \mu_0(x_0) \prod_{t=0}^{T-1}K_{t+1}(x_{t+1}|x_0,\ldots,x_t).
\end{eqnarray}
\end{theorem}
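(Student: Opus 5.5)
The plan is to prove the normal form by the probability chain rule, taking care on prefixes of zero probability. For each $t=0,\ldots,T$ I will introduce the prefix marginals
\begin{eqnarray}
P_t(x_0,\ldots,x_t) := \sum_{x_{t+1},\ldots,x_T} P(x_0,\ldots,x_T),
\end{eqnarray}
with $P_T=P$. Because every term is nonnegative and the index sets are countable, these sums are well defined regardless of the order of summation (Tonelli for counting measure). They also satisfy the consistency relation $\sum_{x_{t+1}} P_{t+1}(h_t,x_{t+1}) = P_t(h_t)$. I will set $\mu_0 := P_0$, which is a probability distribution on $X_0$ because the total mass of $P$ is one.

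Next I will define the kernels. On prefixes with $P_t(h_t)>0$, set
\begin{eqnarray}
K_{t+1}(x_{t+1}|h_t) := \frac{P_{t+1}(h_t,x_{t+1})}{P_t(h_t)}.
\end{eqnarray}
This is nonnegative, and by the consistency relation it sums to one over $X_{t+1}$. On prefixes with $P_t(h_t)=0$, I will choose an arbitrary normalized distribution on $X_{t+1}$, for example a point mass at a fixed element; this uses only that $X_{t+1}$ is nonempty. Each $K_{t+1}(\cdot|h_t)$ then depends only on the past $h_t$, so it is a causal stochastic kernel in the sense of Definition~\ref{def:hcsm}.

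The factorization will then be proved by induction on $t$, with the claim $P_t(h_t)=\mu_0(x_0)\prod_{s<t}K_{s+1}(x_{s+1}|h_s)$. The base case $t=0$ holds by definition of $\mu_0$. For the inductive step, suppose the claim holds at $t$. If $P_t(h_t)>0$, then $P_{t+1}(h_t,x_{t+1}) = P_t(h_t)K_{t+1}(x_{t+1}|h_t)$ by construction. If $P_t(h_t)=0$, then $0 \le P_{t+1}(h_t,x_{t+1}) \le P_t(h_t) = 0$, so both sides vanish whatever kernel was chosen. Taking $t=T$ gives the stated identity, and it coincides with the HCSM recursion in Eq.~\eqref{eq:hcsm-update-v03}.

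The only delicate point is the zero-probability prefixes, where conditional probabilities are undefined. The monotonicity bound $P_{t+1}\le P_t$ disposes of it, and I expect no real obstacle. I will also remark that the kernels are unique only on the support, since off-support choices do not affect the law.
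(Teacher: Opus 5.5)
Your proposal is correct and follows the same route as the paper: the chain rule, with $\mu_0=P(x_0)$, conditional kernels on positive-probability prefixes, and an arbitrary normalized kernel on null prefixes. Your explicit induction and the bound $P_{t+1}\le P_t$ spell out why the choice on null prefixes does not matter, which the paper leaves implicit.
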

\end{widetext}

\begin{proof}---Set $\mu_0(x_0)=P(x_0)$. Whenever $P(x_0,\ldots,x_t)>0$, define
\begin{eqnarray}
K_{t+1}(x_{t+1}|x_0,\ldots,x_t) = \frac{P(x_0,\ldots,x_t,x_{t+1})}{P(x_0,\ldots,x_t)}.
\end{eqnarray}
On the zero-probability histories, choose any normalized kernel. The statement is simply the chain rule for probability, with nonnegativity and normalization inherited from conditional probability.
\end{proof}

HCSM is therefore a normal form for finite-horizon positive laws on countable classical histories. Within this scope, enlarging classical memory or allowing arbitrary dependence on the available past does not take a process outside HCSM. For controlled families, the nonanticipation condition must also hold. Computability and extensions to uncountable state spaces are separate questions.

\begin{theorem}[Global readout joint distribution]
\label{thm:global-readout-joint}
Every HCSM induces a global joint probability distribution over any finite sequence of readouts $M_0,\ldots,M_T$.
\end{theorem}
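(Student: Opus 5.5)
The plan is to push the history law $\mu_T$ forward through the readout maps. By Definition~\ref{def:hcsm}, the recursion in Eq.~\eqref{eq:hcsm-update-v03} defines a probability measure $\mu_T$ on the countable set $\mathcal H_T$. Nonnegativity is immediate from $\mu_0\ge 0$ and $K_{t+1}\ge0$. Normalization follows by induction on $t$: summing $\mu_{t+1}(h_t,x_{t+1})$ over $x_{t+1}$ returns $\mu_t(h_t)$, because each kernel is normalized. Since all sums are over countable sets of nonnegative terms, Tonelli lets us reorder them freely. We also check marginal consistency: the marginal of $\mu_T$ on $\mathcal H_t$ equals $\mu_t$. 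This follows by summing out $x_T,\ldots,x_{t+1}$ one at a time.

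Next we define the joint readout variable. Each readout $R_t$ depends on $h_t$, which is the length-$(t+1)$ prefix of $h_T$. So the composite map
\begin{eqnarray}
\Phi:\mathcal H_T\to M_0\times\cdots\times M_T,\quad \Phi(h_T)=\bigl(R_0(h_0),R_1(h_1),\ldots,R_T(h_T)\bigr)
\end{eqnarray}
is well defined, where $h_t$ is the prefix of $h_T$. We then set
\begin{eqnarray}
P(m_0,\ldots,m_T):=\mu_T\bigl(\Phi^{-1}(m_0,\ldots,m_T)\bigr)=\sum_{h_T:\,\Phi(h_T)=(m_0,\ldots,m_T)}\mu_T(h_T).
\end{eqnarray}
The image of the countable set $\mathcal H_T$ is countable, so $P$ is a probability distribution on a countable subset of $\prod_t M_t$. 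Any finite subsequence of readouts, for example at times $t_1<\cdots<t_k$, gets its joint law by marginalizing $P$.

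The point with actual content is that this single object $P$ is consistent with the one-time readout statistics the machine produces at each step. That is, the law of $M_t$ computed from $\mu_t$ alone must match the $t$-marginal of $P$. This is exactly where the marginal consistency of $\mu_T$ from the first step is used. Because $R_t$ factors through the prefix projection $\mathcal H_T\to\mathcal H_t$, the $t$-marginal of $P$ equals the pushforward of $\mu_t$ under $R_t$. The same argument gives consistency for every sub-collection of times. There is therefore one global joint distribution whose marginals reproduce all intermediate readout laws, with no dependence on a measurement context.

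I expect the main obstacle to be handling the stochastic readout kernels that were allowed in the SEnM embedding (Theorem~\ref{thm:SEnMinHCSM}). If readouts are given by kernels $R_t(m_t|h_t)$ rather than deterministic maps, I would use the product form
\begin{eqnarray}
P(m_0,\ldots,m_T)=\sum_{h_T}\mu_T(h_T)\prod_{t}R_t(m_t|h_t).
\end{eqnarray}
This requires arguing that the readout noise is conditionally independent given the history. Equivalently, the readout randomness can be absorbed into enlarged configurations, which Theorem~\ref{thm:hcsm-normal-form} permits, and that reduces this case to the deterministic one. Measurability issues do not arise, because every space involved is countable.
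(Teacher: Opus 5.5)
Your proposal is correct and follows the paper's proof: the paper also defines $P(m_0,\ldots,m_T)=\sum_{h_T}\mu_T(h_T)\prod_{t}\mathbf 1\{R_t(h_t)=m_t\}$ as the pushforward of $\mu_T$ under the prefix-wise readout maps, and your normalization, prefix-marginal consistency, and stochastic-readout remarks are extra detail (the last matches Proposition~\ref{app:prop:readout-determinization}). One caution on that extra case: conditional independence of the readout noise given the history is an assumption you cannot derive, and for correlated noise a joint readout law must be specified before the normal form can absorb it into enlarged configurations.
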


\begin{widetext}
\begin{proof}---Since each readout is a function $R_t:\mathcal H_t\to M_t$, define
\begin{eqnarray}
P(m_0,\ldots,m_T) = \sum_{h_T\in\mathcal H_T}\mu_T(h_T)\prod_{t=0}^{T}\mathbf 1\{R_t(h_t)=m_t\},
\end{eqnarray}
where $h_t$ is the length-$(t+1)$ prefix of $h_T$. This is the pushforward of $\mu_T$ under the readout maps and is therefore a normalized joint probability distribution.
\end{proof}
\end{widetext}

\begin{corollary}[Readout inequality for HCSM]
\label{cor:hcsm-readout-inequality}
For every HCSM, with discrete conditional Shannon entropies understood as nonnegative sums,
\begin{eqnarray}
H(M_T|M_0) \leq \sum_{t=1}^{T}H(M_t | M_{t-1}).
\label{eq:readout-ineq-v03}
\end{eqnarray}
\end{corollary}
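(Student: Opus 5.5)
The plan is to work entirely with the global joint distribution $P(m_0,\ldots,m_T)$ supplied by Theorem~\ref{thm:global-readout-joint}. Once a bona fide joint law on the readout sequence is available, the inequality becomes a statement purely about Shannon entropies of jointly distributed discrete random variables $M_0,\ldots,M_T$. The HCSM structure enters only through the existence of this joint law. No Markov property of the readouts is needed.

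First, bound the left-hand side by a joint conditional entropy. Since $M_T$ is a coordinate of $(M_1,\ldots,M_T)$, monotonicity gives
\begin{eqnarray}
H(M_T|M_0)\le H(M_1,\ldots,M_T|M_0).
\end{eqnarray}
Second, expand the right-hand side of this bound with the chain rule:
\begin{eqnarray}
H(M_1,\ldots,M_T|M_0)=\sum_{t=1}^{T}H(M_t|M_0,\ldots,M_{t-1}).
\end{eqnarray}
Third, use the fact that conditioning does not increase entropy. Each term then satisfies $H(M_t|M_0,\ldots,M_{t-1})\le H(M_t|M_{t-1})$. Summing over $t$ yields Eq.~\eqref{eq:readout-ineq-v03}.

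The main obstacle is the countable case, where some entropies may be infinite. To handle it, I will use the convention stated in the corollary and interpret each conditional entropy as the nonnegative sum $\sum P(a,b)\log\bigl(1/P(a|b)\bigr)$. The identities used above are then nonnegative-series identities, so they hold in $[0,\infty]$ with no cancellation of infinities:
\begin{itemize}
\item The chain rule follows by rearranging nonnegative terms, since $\log(1/P(m_1..m_T|m_0))$ splits as a sum of the nonnegative terms $\log(1/P(m_t|m_0..m_{t-1}))$.
\item Monotonicity follows because $P(m_T|m_0)\ge P(m_1..m_T|m_0)$ pointwise.
\item Conditioning reduction in $[0,\infty]$ follows from Jensen applied to $-\log$ pointwise in the conditioning variable. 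It becomes trivial whenever the right-hand side is infinite.
\end{itemize}
If the right-hand side of Eq.~\eqref{eq:readout-ineq-v03} is infinite, there is nothing to prove. Otherwise every term is finite and the standard manipulations apply.
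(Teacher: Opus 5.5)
Your proof is correct and follows the paper's own proof step for step. Both use the global readout law from Theorem~\ref{thm:global-readout-joint}, the bound $H(M_T|M_0)\le H(M_1,\ldots,M_T|M_0)$, the chain rule, and term-by-term reduction of entropy under conditioning. Your additional bookkeeping in $[0,\infty]$ for countable readouts is a sound refinement of what the paper leaves implicit, since its appendix simply restricts to readouts whose displayed entropies are finite.
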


\begin{proof}---Because the global joint distribution exists by Theorem~\ref{thm:global-readout-joint},
\begin{eqnarray}
H(M_T | M_0) &\leq& H(M_1,\ldots,M_T | M_0) \nonumber \\
	&=& \sum_{t=1}^{T}H(M_t | M_0,\ldots,M_{t-1}) \nonumber \\
	&\leq& \sum_{t=1}^{T}H(M_t | M_{t-1}),
\end{eqnarray}
where the last step uses monotonicity of conditional entropy under additional conditioning. Thus, HCSM inherits the same classical consistency mechanism that underlies the readout inequality in Ref.~\cite{Bang2018}, but now at the level of full history-resolved path measures. All terms refer to one readout protocol. Probabilities from different intermediate measurement protocols can be compared with this bound only when they are required to be marginals of the same joint law.
\end{proof}

A violation of Eq.~\eqref{eq:readout-ineq-v03} assembled from different intermediate measurement protocols therefore excludes an HCSM simulation only if the tested statistics must be marginals of one common joint law. It does not, by itself, rule out a causal classical simulator that represents the interventions explicitly and allows them to alter subsequent readouts. Indeed, the outcomes of any single actual sequential quantum measurement protocol have a joint law and satisfy the same inequality. To establish the structural boundary, we must therefore identify the quantum processes in the canonical HCSM image and the information required to reproduce coherent continuations beyond that image.

\subsection{From HCSM to the dephased quantum sector}\label{subsec:diag-sector} 

We now pass from the classical boundary model to quantum process theory.

\begin{theorem}[HCSM embedding into ideal quantum processes]
\label{thm:HCSMtoQTM}
Every finite-horizon HCSM admits a faithful embedding into $\Qtms_{\rm proc}$ on Hilbert spaces spanned by orthogonal history states. Under this embedding, the HCSM distribution $\mu_t$ becomes a diagonal density matrix
\begin{eqnarray}
\hat{\rho}_t = \sum_{h_t \in {\mathcal H}_t} \mu_t(h_t)\ketbra{h_t}{h_t},
\label{eq:diag-state-v03}
\end{eqnarray}
and each HCSM transition is implemented by a CPTP map.
\end{theorem}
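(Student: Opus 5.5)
The plan is to build the embedding explicitly on history Hilbert spaces and then check, step by step, that it reproduces the HCSM update rule. For each $t$, let $\mathcal K_t:=\ell^2(\mathcal H_t)$ be the separable Hilbert space with orthonormal basis $\{\ket{h_t}:h_t\in\mathcal H_t\}$. This basis is finite or countable because each $X_s$ is, and finite products of countable sets are countable. The initial state is $\hat\rho_0=\sum_{x_0}\mu_0(x_0)\ketbra{x_0}{x_0}$, which is a positive trace-one operator since $\mu_0\in\Delta(X_0)$. Faithfulness will mean that the map $\mu\mapsto\sum_h\mu(h)\ketbra{h}{h}$ from $\Delta(\mathcal H_t)$ into density operators is injective and affine. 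We recover $\mu$ from the diagonal, $\mu(h)=\bra{h}\hat\rho\ket{h}$. Readouts and acceptance also transfer: $P_{\rm acc}=\tr(\hat\Pi_A\hat\rho_T)$ with $\hat\Pi_A=\sum_{h\in A}\ketbra{h}{h}$, and $R_t$ becomes the projective measurement $\{\sum_{R_t(h)=m}\ketbra{h}{h}\}_m$.

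For the transition we write down Kraus operators. For each $h_t$ and $x_{t+1}\in X_{t+1}$ set
\begin{eqnarray}
\hat E_{h_t,x_{t+1}} := \sqrt{K_{t+1}(x_{t+1}|h_t)}\,\ket{h_t,x_{t+1}}\!\bra{h_t},
\end{eqnarray}
mapping $\mathcal K_t\to\mathcal K_{t+1}$, and let $\mathcal E_{t+1}(\hat\rho)=\sum_{h_t,x_{t+1}}\hat E\hat\rho\hat E^\dagger$. Complete positivity is automatic from the Kraus form. For trace preservation we compute
\begin{eqnarray}
\sum_{h_t,x_{t+1}}\hat E^\dagger\hat E=\sum_{h_t}\Bigl(\sum_{x_{t+1}}K_{t+1}(x_{t+1}|h_t)\Bigr)\ketbra{h_t}{h_t}=\hat{\mathds 1},
\end{eqnarray}
using the normalization of the kernel. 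Applying $\mathcal E_{t+1}$ to the diagonal state~\eqref{eq:diag-state-v03} gives $\sum\mu_t(h_t)K_{t+1}(x_{t+1}|h_t)\ketbra{h_t,x_{t+1}}{h_t,x_{t+1}}$, which by Eq.~\eqref{eq:hcsm-update-v03} is exactly $\hat\rho_{t+1}$. Induction on $t$ then finishes the argument. The channel is a measure-and-prepare (dephase, then append) map, so off-diagonal inputs are killed. That is harmless here, and it already anticipates the ``dephased sector'' characterization. If desired, one can give a Stinespring dilation: an isometry $\ket{h_t}\mapsto\sum_{x}\sqrt{K}\ket{h_t,x}\ket{h_t,x}_E$ followed by tracing out $E$. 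This shows the map is realizable with ancillas and discarding, in line with the definition of $\Qtms_{\rm proc}$.

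The main obstacle is the infinite-dimensional case, where the Kraus sum is countable. I would argue that the sum $\sum\hat E^\dagger\hat E$ converges strongly to $\hat{\mathds 1}$: its partial sums are diagonal and increasing, and they converge pointwise on basis vectors. That convergence is enough for $\mathcal E_{t+1}$ to be a well-defined normal CPTP map on trace-class operators. The image is trace-norm convergent because the images of the diagonal terms are positive with summable traces. A last remark is that $\mu_t$ is itself defined on countable sets with probabilities summing to one, so $\hat\rho_t$ is trace class and no further issue arises.
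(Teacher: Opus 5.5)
Your proposal is correct and follows essentially the same route as the paper. It uses the same Kraus operators $\sqrt{K_{t+1}(x|h)}\,|h,x\rangle\langle h|$, the same completeness computation with strong and trace-norm convergence for countable history spaces, and induction on $t$. Your Stinespring isometry, which copies $(h_t,x)$ into an environment, matches the paper's garbage-register unitary dilation in its appendix.
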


\begin{proof}---Let ${\mathscr H}_t = \ell^2({\mathcal H}_t)$ with basis $\{\ket{h_t}:h_t \in {\mathcal H}_t\}$. For each kernel $K_{t+1}(x|h)$ define Kraus operators
\begin{eqnarray}
\hat{A}_{h,x} = \sqrt{K_{t+1}(x|h)}\ketbra{h,x}{h}.
\end{eqnarray}
Then,
\begin{eqnarray}
\sum_{h,x}\hat{A}_{h,x}^{\dagger}\hat{A}_{h,x}=\sum_h\Bigl(\sum_xK_{t+1}(x|h)\Bigr)\ketbra{h}{h}=\hat{\mathds{1}},
\end{eqnarray}
so
\begin{eqnarray}
{\mathcal E}_{t+1}(\hat{\rho})=\sum_{h,x}\hat{A}_{h,x}\hat{\rho} \hat{A}_{h,x}^{\dagger}
\end{eqnarray}
is trace-preserving and completely positive~\cite{Kraus1983,Choi1975}. For countably infinite histories, the Kraus completeness sum converges strongly and the channel sum converges in trace norm on trace-class states. Applying it to Eq.~\eqref{eq:diag-state-v03} gives
\begin{eqnarray}
{\mathcal E}_{t+1}(\hat{\rho}_t) = \sum_{h,x}\mu_t(h)K_{t+1}(x|h)\ketbra{h,x}{h,x} = \hat{\rho}_{t+1},
\end{eqnarray}
with $\hat{\rho}_{t+1}$ corresponding exactly to the HCSM distribution $\mu_{t+1}$. Induction over time proves faithfulness of all history and readout probabilities.
\end{proof}

\begin{corollary}[PTM inclusion in ideal quantum processes]
\label{cor:PTMinQTM}
Since every PTM process is an HCSM process and every HCSM process embeds into an ideal quantum process,
\begin{eqnarray}
\Ptms \subseteq \Hcsm \hookrightarrow \Qtms_{\rm proc}.
\end{eqnarray}
\end{corollary}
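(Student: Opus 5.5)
The corollary follows by composing two results already established, so the plan is to chain them and check that the composite is well defined. Take an arbitrary finite-horizon PTM process. By Theorem~\ref{thm:PTMinHCSM} it is an HCSM process $\mathcal M_{\rm PTM}$ on the same configuration sets. Its kernels are Markovian, $K_{t+1}(x_{t+1}|h_t)=P_{\rm PTM}(x_{t+1}|x_t)$, and its readouts and accepting set are copied from the PTM. The same inclusion could equally be routed through the SEnM singleton convention of Theorem~\ref{thm:SEnMinHCSM}. The direct route is shorter, however, and establishes $\Ptms\subseteq\Hcsm$ at the level of path laws.

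Next we apply Theorem~\ref{thm:HCSMtoQTM} to $\mathcal M_{\rm PTM}$. This gives history Hilbert spaces $\ell^2(\mathcal H_t)$ and diagonal states $\hat\rho_t=\sum_{h_t}\mu_t(h_t)\ketbra{h_t}{h_t}$. It also gives CPTP maps $\mathcal E_{t+1}$ with Kraus operators $\sqrt{P_{\rm PTM}(x|x_t)}\ketbra{h,x}{h}$. By the induction in that proof, $\mathcal E_{t+1}(\hat\rho_t)=\hat\rho_{t+1}$ for all $t$. The diagonal of $\hat\rho_T$ is therefore exactly the PTM path distribution.

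The remaining step is to check that readouts and acceptance are preserved, so that the map is a faithful embedding and not merely a state correspondence. For readouts, a map $R_t$ is implemented by the projective measurement $\{\Pi_m=\sum_{h_t:R_t(h_t)=m}\ketbra{h_t}{h_t}\}$. This measurement is diagonal in the history basis, so it leaves $\hat\rho_t$ unchanged. Hence the joint readout law of Theorem~\ref{thm:global-readout-joint} is reproduced. For acceptance, we use the projector $\Pi_A=\sum_{h\in A}\ketbra{h}{h}$, which gives $\tr(\Pi_A\hat\rho_T)=P_{\rm acc}$.

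The expected obstacle is one of interpretation rather than calculation. The target is $\Qtms_{\rm proc}$, which allows arbitrary real transition probabilities and separable Hilbert spaces. It is not the algorithmic $\Qtms$. I would state explicitly that no finite-gate implementability is being claimed, and that the hooked arrow denotes a faithful, probability-preserving embedding. I would also check that countable configuration sets cause no difficulty. The proof of Theorem~\ref{thm:HCSMtoQTM} already gives strong convergence of the Kraus sum and trace-norm convergence of the channel, and since the PTM horizon is finite only finitely many steps are composed. This yields $\Ptms\subseteq\Hcsm\hookrightarrow\Qtms_{\rm proc}$.
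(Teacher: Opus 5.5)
Your proposal is correct and matches the paper's proof, which simply composes Theorem~\ref{thm:PTMinHCSM} with Theorem~\ref{thm:HCSMtoQTM}. Your extra checks on the readout projectors, the acceptance projector $\Pi_A$, and countable convergence are already covered by the faithfulness claim of Theorem~\ref{thm:HCSMtoQTM}; they are harmless elaborations and not needed as separate steps.
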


\begin{proof}---The statement is immediate from Theorem~\ref{thm:PTMinHCSM} and the HCSM embedding into ideal quantum processes in Theorem~\ref{thm:HCSMtoQTM}.
\end{proof}

\begin{figure}[t]
\centering
\includegraphics[width=0.98\linewidth]{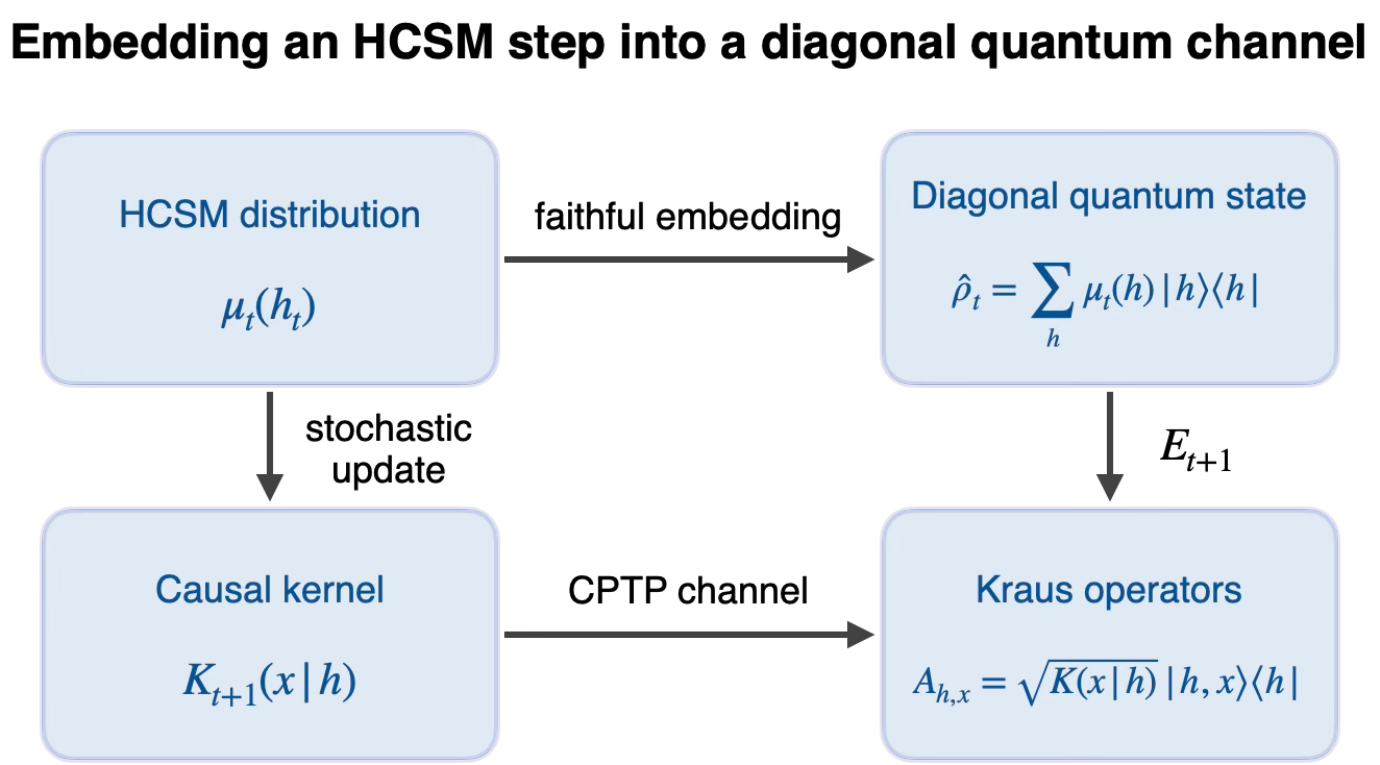}
\caption{Faithful diagonal embedding of one HCSM step. A probability distribution over histories is encoded as a diagonal density matrix over orthogonal history kets. A causal stochastic kernel becomes a quantum channel with Kraus operators $\hat{A}_{h,x}=\sqrt{K(x|h)}\ketbra{h,x}{h}$. Thus, every HCSM transition is a coherence-free quantum transition.}
\label{fig:embedding}
\end{figure}

Fig.~\ref{fig:embedding} illustrates the embedding. The proof can also be phrased unitarily: each stochastic kernel dilates to an isometry that stores the sampled successor in an orthogonal garbage register, and every such isometry extends to a unitary on a larger Hilbert space. The detailed dilation argument is given in Appendix~\ref{app:diagonal-embedding}.

\begin{definition}[Dephasing-fixed channel]
\label{def:dephasing-fixed-channel}
Fix computational bases for input and output Hilbert spaces and let $D_X$ and $D_Y$ denote the corresponding dephasing maps. A channel $\Phi:\mathcal T_1({\mathscr H}_X) \to {\mathcal T_1}({\mathscr H}_Y)$ on trace-class operators is dephasing-fixed if
\begin{eqnarray}
\Phi=D_Y \Phi D_X.
\end{eqnarray}
It neither uses input coherence nor creates output coherence in the selected bases.
\end{definition}

\begin{theorem}[Classical-sector characterization]
\label{thm:classical-sector}
For fixed finite or countable bases, dephasing-fixed quantum channels are exactly stochastic kernels written in quantum notation.
\end{theorem}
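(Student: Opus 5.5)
We prove the statement as a two-way correspondence, with the channel $\Phi$ assigned the kernel $K(y|x):=\bra{y}\Phi(\ketbra{x}{x})\ket{y}$. \emph{From kernels to channels}: given a stochastic kernel $K(y|x)$ between countable sets $X$ and $Y$, we reuse the Kraus construction from the proof of Theorem~\ref{thm:HCSMtoQTM} in its one-step, non-history form, $\hat{A}_{x,y}=\sqrt{K(y|x)}\,\ketbra{y}{x}$. Completeness reads $\sum_{x,y}\hat{A}_{x,y}^\dagger\hat{A}_{x,y}=\sum_x\bigl(\sum_yK(y|x)\bigr)\ketbra{x}{x}=\hat{\mathds{1}}$, with strong convergence in the countable case exactly as argued there. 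The resulting channel acts by $\Phi_K(\hat\rho)=\sum_{x,y}K(y|x)\bra{x}\hat\rho\ket{x}\ketbra{y}{y}$. This output depends only on the diagonal of $\hat\rho$, so $\Phi_K=\Phi_K D_X$. The output is diagonal, so $D_Y\Phi_K=\Phi_K$. Hence $\Phi_K$ is dephasing-fixed, and on diagonal inputs it acts as $\mu\mapsto \mu K$.

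\emph{From channels to kernels}: let $\Phi$ be dephasing-fixed and define $K$ as above. Positivity of $\Phi$ gives $K(y|x)\ge 0$. Trace preservation gives $\sum_yK(y|x)=\tr\Phi(\ketbra{x}{x})=1$. We then show $\Phi=\Phi_K$. For any trace-class $\hat\rho$ we use $\Phi(\hat\rho)=D_Y\Phi D_X(\hat\rho)$ and write $D_X\hat\rho=\sum_x\rho_{xx}\ketbra{x}{x}$. This series converges in trace norm, since $\sum_x|\rho_{xx}|\le\|\hat\rho\|_1$. Because $\Phi$ and $D_Y$ are bounded, hence trace-norm continuous, we may pass them through the sum. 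This gives $\Phi(\hat\rho)=\sum_x\rho_{xx}D_Y\Phi(\ketbra{x}{x})=\sum_{x,y}\rho_{xx}K(y|x)\ketbra{y}{y}$, which equals $\Phi_K(\hat\rho)$.

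Finally we check that the two assignments are mutually inverse. Starting from a kernel, $K\mapsto\Phi_K\mapsto K$ recovers $K$ directly from the formula for $\Phi_K$. Starting from a channel, $\Phi\mapsto K\mapsto\Phi_K=\Phi$ is the computation just done. The correspondence is therefore a bijection, and it respects composition, $\Phi_{K'}\circ\Phi_K=\Phi_{KK'}$, with the row-vector convention $\mu\mapsto\mu K$. We will add a remark that complete positivity is automatic here and is not an extra constraint. The reason is that the Kraus form above exhibits every dephasing-fixed channel as an entanglement-breaking measure-and-prepare map.

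The only step needing care is the countable case. There we must justify interchanging $\Phi$ and $D_Y$ with the infinite diagonal decomposition, and confirm that $D_X$ and $D_Y$ are well-defined trace-norm contractions on $\mathcal T_1$. The plan is to handle this by the boundedness and continuity argument above, citing the trace-norm convergence already used in Theorem~\ref{thm:HCSMtoQTM}. The finite-dimensional case is purely algebraic.
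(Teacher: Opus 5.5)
Your proof is correct and follows the paper's argument. It uses the same Kraus operators $\sqrt{K(y|x)}\,\ketbra{y}{x}$ in one direction, and in the other the kernel $\bra{y}\Phi(\ketbra{x}{x})\ket{y}$, with $\Phi=D_Y\Phi D_X$ extended to arbitrary trace-class inputs by trace-norm convergence, exactly as in the appendix version of the proof. The only difference is cosmetic: you use the row convention $\mu\mapsto\mu K$ (hence $\Phi_{K'}\circ\Phi_K=\Phi_{KK'}$), whereas the paper fixes the column convention $p\mapsto Kp$ with $\Phi_L\circ\Phi_K=\Phi_{LK}$.
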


\begin{proof}---Suppose $\Phi=D_Y \Phi D_X$. Define
\begin{eqnarray}
K(y|x)=\bra{y}\Phi(\ketbra{x}{x})\ket{y}.
\end{eqnarray}
The complete positivity gives $K(y|x) \geq 0$ and the trace preservation gives $\sum_y K(y|x)=1$. For any diagonal state $\hat{\rho}_p=\sum_x p_x\ketbra{x}{x}$,
\begin{eqnarray}
\Phi(\hat{\rho}_p) = \sum_y \left(\sum_x K(y|x) p_x \right)\ketbra{y}{y},
\end{eqnarray}
so $\Phi$ acts as the stochastic transition $p \mapsto Kp$. Conversely, given a stochastic kernel $K$, define Kraus operators
\begin{eqnarray}
\hat{A}_{y,x} = \sqrt{K(y|x)} \ketbra{y}{x}.
\end{eqnarray}
They satisfy $\sum_{x,y}\hat{A}_{y,x}^{\dagger}\hat{A}_{y,x}=\hat{\mathds{1}}$ and generate a CPTP map $\Phi_K(\hat{\rho})=\sum_{x,y}\hat{A}_{y,x}\hat{\rho} \hat{A}_{y,x}^{\dagger}$ satisfying $\Phi_K=D_Y\Phi_KD_X$. Hence, the two structures are equivalent.
\end{proof}

To make the process correspondence precise, the diagonal objects are pairs $({\mathscr H}_X,D_X)$, with ${\mathscr H}_X=\ell^2(X)$ for a nonempty finite or countable basis set $X$. Their states are the density operators
fixed by $D_X$. A morphism from $({\mathscr H}_X,D_X)$ to $({\mathscr H}_Y,D_Y)$ is a CPTP map on the corresponding trace-class spaces satisfying
\begin{eqnarray}
\Phi=D_Y\Phi D_X, \quad \id_{({\mathscr H}_X,D_X)}=D_X.
\label{eq:diagonal-process-objects}
\end{eqnarray}
Composition is ordinary channel composition. The specified identity fixes every admitted diagonal state, although it does not fix arbitrary coherent inputs. Thus the classical process category uses restricted diagonal objects; it is not an ordinary subcategory sharing the identity channels of the full quantum state spaces. Equivalently, it is a subcategory of the split-idempotent completion of the CPTP category, as detailed in Appendix~\ref{app:dephased}. The history correspondence below further restricts the sequential diagrams to prefix-preserving transitions.

\begin{corollary}[Maximal classical sector]
\label{cor:maximal-classical-sector}
Fix the history bases, and write $\diag(\Qtms_{\rm proc})$ for $\diag_{\rm hist}^{\rm pref}(\Qtms_{\rm proc})$: diagonal preparations and readouts with dephasing-fixed channels supported only on history extensions $(h_t,x_{t+1})$. HCSM is equivalent to this history-preserving sequential process sector. Symbolically,
\begin{eqnarray}
\Hcsm \cong \diag(\Qtms_{\rm proc}).
\end{eqnarray}
\end{corollary}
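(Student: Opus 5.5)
The equivalence $\Hcsm \cong \diag(\Qtms_{\rm proc})$ will be shown by constructing maps in both directions between HCSM tuples and diagonal prefix-preserving process diagrams, and then checking that they are mutually inverse. The inverse property will be checked at the level of induced history laws and readout distributions.

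\emph{Forward direction.} Given an HCSM $\mathcal M$, I take the objects $(\ell^2(\mathcal H_t),D_{\mathcal H_t})$. The initial state is the diagonal preparation $\hat\rho_0=\sum_{x_0}\mu_0(x_0)\ketbra{x_0}{x_0}$. The transitions are the channels $\mathcal E_{t+1}$ of Theorem~\ref{thm:HCSMtoQTM}, with Kraus operators $\sqrt{K_{t+1}(x|h)}\ketbra{h,x}{h}$. Each $\mathcal E_{t+1}$ is the kernel $\tilde K(h',x'|h)=\delta_{h',h}K_{t+1}(x'|h)$ written in the form of Theorem~\ref{thm:classical-sector}, so it is dephasing-fixed. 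Its output is supported only on extensions $(h_t,x_{t+1})$, so it is prefix-preserving. Readouts $R_t$ become diagonal measurements with projectors $\sum_{h:R_t(h)=m}\ketbra{h}{h}$, and the accepting set $A$ becomes a diagonal projector on $\mathcal H_T$. Faithfulness of all history, readout and acceptance probabilities is exactly the induction already carried out in Theorem~\ref{thm:HCSMtoQTM}.

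\emph{Reverse direction.} Given a diagram in the sector, start from a diagonal preparation $\hat\rho_0$ and read off $\mu_0(x_0)=\bra{x_0}\hat\rho_0\ket{x_0}$. For each dephasing-fixed step $\Phi_{t+1}$, Theorem~\ref{thm:classical-sector} gives a stochastic kernel $\tilde K(h'|h)=\bra{h'}\Phi_{t+1}(\ketbra{h}{h})\ket{h'}$ from $\mathcal H_t$ to $\mathcal H_{t+1}$. Prefix preservation means $\tilde K(h'|h)=0$ unless $h'=(h,x)$ for some $x\in X_{t+1}$. I therefore set $K_{t+1}(x|h):=\tilde K((h,x)|h)$, which is nonnegative and sums to $1$ in $x$. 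Diagonal readouts are POVMs that commute with the basis, and these may carry stochastic outcome noise. I would absorb such noise by enlarging $X_t$ with an outcome register and then taking $R_t$ to be the projection onto that register, so that readouts remain functions as Definition~\ref{def:hcsm} requires. Because diagonal states are fixed by $D$, composing the channels on diagonal inputs reproduces $\mu_{t+1}(h,x)=\mu_t(h)K_{t+1}(x|h)$ by the diagonal action formula in the proof of Theorem~\ref{thm:classical-sector}.

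\emph{Round trips.} The composite HCSM $\to$ diagram $\to$ HCSM returns the same $\mu_0$ and $K_{t+1}$ exactly. The composite diagram $\to$ HCSM $\to$ diagram returns the same channels as morphisms between diagonal objects: two dephasing-fixed channels with the same kernel coincide, because $\Phi=D_Y\Phi D_X$ is determined by its values on the $\ketbra{x}{x}$. Here I use the restricted identity $D_X$ of Eq.~\eqref{eq:diagonal-process-objects}. Compatibility with composition follows because sequential composition of channels corresponds to the chain-rule product of kernels, which is Eq.~\eqref{eq:hcsm-update-v03}.

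\emph{Main obstacle.} The delicate step is the precise sense of ``$\cong$''. Kernels on zero-probability histories are arbitrary in an HCSM but are fixed in a channel, so I would state the equivalence on the full kernel data, which is exactly bijective. Quotienting both sides by induced laws then still gives a bijection. A second issue is the readout modelling just described: it needs an explicit convention, either restricting to projective diagonal readouts or performing the register enlargement. The countable case also needs the strong-convergence remarks from Theorem~\ref{thm:HCSMtoQTM}, so that channels are determined by their diagonal matrix elements on trace-class operators.
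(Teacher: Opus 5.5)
Your proposal is correct and follows the same route as the paper. The forward map is the diagonal Kraus embedding of Theorem~\ref{thm:HCSMtoQTM}, the reverse map extracts kernels via Theorem~\ref{thm:classical-sector} and restricts them to prefix extensions, and the two constructions are checked to be mutually inverse and compatible with composition; your extra care about null-history kernels and stochastic diagonal readouts matches the paper's normal-form remarks and Proposition~\ref{app:prop:readout-determinization}.
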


\begin{proof}---Theorem~\ref{thm:HCSMtoQTM} identifies every HCSM process with a diagonal quantum process built from orthogonal history states, while Theorem~\ref{thm:classical-sector} shows that every dephasing-fixed quantum channel is nothing but a stochastic kernel written in quantum notation. Restricting the recovered kernel to prefix-preserving successors gives precisely the HCSM update, so both constructions are inverse and preserve history-event probabilities and sequential composition. The identity stochastic kernel is represented by dephasing $D_X$, which is the identity on diagonal states, not on arbitrary coherent inputs.
\end{proof}

A related snapshot diagonal representation is also available at the SEnM snapshot level: whenever an ensemble-snapshot law closes as a stochastic update, the snapshot distribution may be encoded as a diagonal density matrix and evolved by a dephasing-fixed channel. The snapshot transition preserves its specified law, while the history lift additionally retains the intermediate routes; Appendix~\ref{app:diag-snap-hist} distinguishes $\diag_{\mathrm{snap}}(\Qtms_{\rm proc})$ from $\diag_{\mathrm{hist}}^{\mathrm{pref}}(\Qtms_{\rm proc})$ and shows that the former embeds as a last-coordinate subtheory of the latter.

This is the precise form of the inclusion thesis. The familiar algorithmic simulation of PTM by QTM is only the starting point. PTM and SEnM embed into HCSM, which exhausts the diagonal, commutative, positive-probability sector in its fixed-basis history-preserving form. This canonical image is a proper sector of $\Qtms_{\rm proc}$. A Hadamard sends a basis state to a coherent state outside the admitted diagonal state space. The inclusion and its strictness follow from the process correspondence and the existence of coherent quantum operations, independently of a readout-inequality violation.

\section{The quantum boundary beyond the dephased sector}\label{sec:quantum-boundary}

\subsection{Born shadows and the obstruction from composition}\label{subsec:born-comp} 

The diagonal-sector correspondence locates the full HCSM history model inside $\Qtms_{\rm proc}$. We now ask why the visible basis probabilities at this interface cannot, by themselves, preserve coherent quantum composition and inverse recovery. The key object is the Born-probability shadow of a unitary.

\begin{definition}[Born-probability shadow]
\label{def:born-shadow}
For a unitary $\hat{U}$ on a $d$-dimensional Hilbert space with computational basis $\{\ket{x}\}_{x=1}^d$, define
\begin{eqnarray}
B(\hat{U})_{yx} = \abs{\bra{y}\hat{U}\ket{x}}^2.
\end{eqnarray}
\end{definition}

The matrix $B(\hat{U})$ is doubly stochastic~\cite{Birkhoff1946}. It is a unistochastic matrix in the standard terminology~\cite{Bengtsson2005Unistochastic}. It is the transition-probability shadow obtained by preparing a basis state, applying $\hat{U}$, and measuring in the same basis. But it is not a functor from unitary composition to stochastic composition.

\begin{theorem}[Composition defect formula]
\label{thm:composition-defect}
For unitaries $\hat{U},\hat{V}$,
\begin{eqnarray}
&& B(\hat{V}\hat{U})_{zx} - [B(\hat{V})B(\hat{U})]_{zx} \nonumber \\
	&& \quad = 2\operatorname{Re}\sum_{y<y'}\hat{V}_{zy}\hat{U}_{yx}(\hat{V}_{zy'})^{*}(\hat{U}_{y'x})^{*}.
\label{eq:composition-defect-v03}
\end{eqnarray}
Consequently, $B(\hat{V}\hat{U})=B(\hat{V})B(\hat{U})$ holds if and only if the total real interference sum vanishes for each output-input pair; individual cross terms may cancel.
\end{theorem}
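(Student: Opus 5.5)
The plan is a direct matrix-element computation. Fix an output index $z$ and an input index $x$, and write the composite amplitude by inserting a resolution of the identity in the computational basis:
\begin{eqnarray}
(\hat{V}\hat{U})_{zx} = \sum_{y} \hat{V}_{zy}\hat{U}_{yx} =: \sum_y a_y ,
\end{eqnarray}
where $a_y := \hat{V}_{zy}\hat{U}_{yx}$ is the route amplitude through the intermediate basis state $\ket{y}$.

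Then $B(\hat{V}\hat{U})_{zx} = \abs{\sum_y a_y}^2 = \sum_{y,y'} a_y a_{y'}^{*}$. I will split this double sum into its diagonal and off-diagonal parts.

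The diagonal part is
\begin{eqnarray}
\sum_y \abs{a_y}^2 = \sum_y \abs{\hat{V}_{zy}}^2\abs{\hat{U}_{yx}}^2 = [B(\hat{V})B(\hat{U})]_{zx}.
\end{eqnarray}
This follows from Definition~\ref{def:born-shadow} and ordinary matrix multiplication of the two shadows.

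For the off-diagonal part, I pair the terms $(y,y')$ and $(y',y)$ with $y<y'$. Each such pair equals $a_y a_{y'}^* + (a_y a_{y'}^*)^* = 2\operatorname{Re}(a_y a_{y'}^*)$. Summing over $y<y'$ and substituting $a_y a_{y'}^* = \hat{V}_{zy}\hat{U}_{yx}(\hat{V}_{zy'})^{*}(\hat{U}_{y'x})^{*}$ gives exactly the right-hand side of Eq.~\eqref{eq:composition-defect-v03}. Subtracting the diagonal part then yields the defect formula.

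The ``consequently'' clause follows by reading the identity entrywise. The matrices $B(\hat{V}\hat{U})$ and $B(\hat{V})B(\hat{U})$ are equal if and only if the difference vanishes for every pair $(z,x)$. By the formula, that happens if and only if each total sum $\sum_{y<y'}\operatorname{Re}(\cdots)$ vanishes. Vanishing of the sum does not require each cross term to vanish: a single explicit example suffices. For instance, take three intermediate routes whose pairwise real parts have mixed signs summing to zero, or more simply note that a sum of reals can vanish with nonzero summands. Since the statement only asserts that individual terms ``may'' cancel, I would mention this without constructing a specific unitary.

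There is no real obstacle here. The only care needed is the bookkeeping of the pairing $y<y'$ versus the ordered sum $y\neq y'$, and keeping the conjugations on the correct factors. Nothing beyond finite dimension $d$ is needed, since the statement is posed in the setting of Definition~\ref{def:born-shadow}.
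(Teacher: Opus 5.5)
Your proposal is correct and follows the paper's proof: both expand $\abs{\sum_y \hat{V}_{zy}\hat{U}_{yx}}^2$, identify the diagonal part with $[B(\hat{V})B(\hat{U})]_{zx}$, pair the off-diagonal terms into $2\operatorname{Re}\sum_{y<y'}$, and read the equivalence entrywise. Like the paper, you note only that the cross terms can cancel and do not exhibit a unitary where this happens, which is adequate because the statement says only ``may.''
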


\begin{widetext}
\begin{proof}---Since
\begin{eqnarray}
\bra{z}\hat{V}\hat{U}\ket{x}=\sum_y\hat{V}_{zy}\hat{U}_{yx},
\end{eqnarray}
one has
\begin{eqnarray}
B(\hat{V}\hat{U})_{zx} &=& \abs{\sum_y\hat{V}_{zy}\hat{U}_{yx}}^2 = \sum_y \abs{\hat{V}_{zy}}^2 \abs{\hat{U}_{yx}}^2 + 2\operatorname{Re}\sum_{y<y'}\hat{V}_{zy}\hat{U}_{yx}(\hat{V}_{zy'})^{*}(\hat{U}_{y'x})^{*}.
\end{eqnarray}
The first term is $[B(\hat{V})B(\hat{U})]_{zx}$, proving Eq.~\eqref{eq:composition-defect-v03}.
\end{proof}
\end{widetext}

A positive stochastic shadow therefore composes probabilities, whereas a quantum circuit composes amplitudes and only then squares. That algebraic mismatch is the operational heart of interference.

\subsection{Reversibility obstruction}\label{subsec:reversibility} 

The sharpest obstruction appears when one asks a stochastic shadow to preserve inverse gates.

\begin{theorem}[Reversibility obstruction]
\label{thm:reversibility-obstruction}
Let $\hat{U}$ be unitary. If
\begin{eqnarray}
B(\hat{U}^{\dagger})B(\hat{U}) = \hat{\mathds{1}},
\end{eqnarray}
then $\hat{U}$ is monomial: there exists a permutation $\pi$ and phases $\theta_x$ such that
\begin{eqnarray}
\hat{U} = \sum_x e^{i\theta_x}\ketbra{\pi(x)}{x}.
\end{eqnarray}
Conversely, every monomial unitary satisfies $B(\hat{U}^{\dagger})B(\hat{U})=\hat{\mathds{1}}$.
\end{theorem}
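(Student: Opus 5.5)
Write $\hat{U}_{yx}=\bra{y}\hat{U}\ket{x}$. The plan is to exploit the fact that $B(\hat{U}^{\dagger})$ is the transpose of $B(\hat{U})$. Indeed,
\begin{eqnarray}
B(\hat{U}^{\dagger})_{xy}=\abs{\bra{x}\hat{U}^{\dagger}\ket{y}}^2=\abs{\hat{U}_{yx}}^2=B(\hat{U})_{yx}.
\end{eqnarray}
So the hypothesis reads $B^{T}B=\hat{\mathds{1}}$ with $B:=B(\hat{U})$. Since $B$ is square and doubly stochastic (as noted after Definition~\ref{def:born-shadow}), $B$ is an orthogonal matrix with nonnegative entries. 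I will show directly that such a matrix is a permutation matrix.

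Look at the off-diagonal entries of the hypothesis. For $x\neq x'$,
\begin{eqnarray}
(B^{T}B)_{xx'}=\sum_y B_{yx}B_{yx'}=0.
\end{eqnarray}
Every summand is nonnegative, so every summand vanishes. Hence two distinct columns of $B$ never share a row in their supports.

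Each column is a probability vector, so it has nonempty support. Thus the supports of the $d$ columns are $d$ pairwise disjoint nonempty subsets of a $d$-element row set. Each support must therefore be a single row $\pi(x)$, and $\pi$ is a bijection. Normalization then gives $B_{\pi(x)x}=1$ and all other entries zero, so $B$ is the permutation matrix of $\pi$. This counting step is the only place finite dimension is used, and it is the key step. An alternative is the diagonal entries, where $\sum_y B_{yx}^2=1$ together with $\sum_y B_{yx}=1$ and $0\le B_{yx}\le 1$ forces a single unit entry; either route suffices.

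Next, lift this back to $\hat{U}$. From $\abs{\hat{U}_{yx}}^2=\delta_{y,\pi(x)}$, each column $\hat{U}\ket{x}$ equals $e^{i\theta_x}\ket{\pi(x)}$ for some phase $\theta_x$. Hence $\hat{U}=\sum_x e^{i\theta_x}\ketbra{\pi(x)}{x}$, which is the monomial form claimed.

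For the converse, a monomial $\hat{U}$ has $B(\hat{U})=P_\pi$, the permutation matrix of $\pi$. Its adjoint $\hat{U}^{\dagger}=\sum_x e^{-i\theta_x}\ketbra{x}{\pi(x)}$ has $B(\hat{U}^{\dagger})=P_\pi^{T}=P_{\pi^{-1}}$. Therefore $B(\hat{U}^{\dagger})B(\hat{U})=P_\pi^{T}P_\pi=\hat{\mathds{1}}$. Equivalently, Theorem~\ref{thm:composition-defect} applies with $\hat{V}=\hat{U}^{\dagger}$: each sum over $y$ has a single nonzero term, so the interference sum vanishes and $B(\hat{U}^{\dagger})B(\hat{U})=B(\hat{U}^{\dagger}\hat{U})=B(\hat{\mathds{1}})=\hat{\mathds{1}}$.

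I expect no serious obstacle. The point needing the most care is the transpose identity, which rewrites the hypothesis as orthogonality of $B$.
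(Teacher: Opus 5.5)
Your proof is correct. Its skeleton matches the paper's: the transpose identity $B(\hat{U}^{\dagger})=B(\hat{U})^{T}$, reduction to $B^{T}B=\hat{\mathds{1}}$, showing $B$ is a permutation matrix, lifting $\abs{\hat{U}_{yx}}^2=\delta_{y,\pi(x)}$ to phases, and the direct converse.

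The difference is in the permutation step. The paper's main-text proof uses only the diagonal entries of $B^{T}B=\hat{\mathds{1}}$. Each column is a probability vector with $\|v\|_2=1=\|v\|_1$, hence one-hot. The paper then invokes double stochasticity (the row sums) to place the ones in distinct rows.

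Your main route uses only the off-diagonal entries. Nonnegativity turns orthogonality into disjointness of the column supports. The pigeonhole count then gives singletons and a bijection at once, so you need only column-stochasticity and never the row sums. This is the same disjoint-support-plus-counting mechanism the paper uses in its appendix stochastic-inverse lemma for the strengthened statement, where any stochastic left inverse of $B(\hat{U})$ forces a permutation. Your argument is therefore the one that generalizes. The paper's is a one-line norm inequality but needs the doubly stochastic input.

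One small point concerns the ``diagonal entries'' alternative you mention. By itself it gives one-hot columns, but not that their ones sit in distinct rows. You would still need the off-diagonal zeros or the row sums for that, exactly as the paper does.
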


\begin{proof}---Because $B(\hat{U}^{\dagger})=B(\hat{U})^T$, the condition is $B(\hat{U})^TB(\hat{U})=\hat{\mathds{1}}$. The columns of $B(\hat{U})$ are nonnegative probability vectors. Orthogonality and unit norm imply that each column has $\ell_2$ norm equal to one. But, for a nonnegative probability vector $v$, $\|v\|_2 \leq \|v\|_1 = 1$, with equality only when $v$ is one-hot. Thus, each column of $B(\hat{U})$ is one-hot. Since $B(\hat{U})$ is also doubly stochastic, those one-hot columns occupy distinct rows, so $B(\hat{U})$ is a permutation matrix. Therefore, each row and column of $\hat{U}$ contains exactly one entry of modulus one and all other entries vanish; $\hat{U}$ is monomial. The converse is immediate. A stronger stochastic-inverse version is proved in Theorem~\ref{app:thm:reversibility-strengthened} in Appendix~\ref{app:born-shadow}.
\end{proof}

Hadamard gives the simplest example. Quantum mechanically, $\hat{H}^2=\hat{\mathds{1}}$, but
\begin{eqnarray}
B(\hat{H})=\begin{pmatrix}1/2&1/2\\1/2&1/2\end{pmatrix},
\quad
B(\hat{H})^2=B(\hat{H})\neq \hat{\mathds{1}}.
\end{eqnarray}
The readout shadow forgets the phase relation that allows the second Hadamard to undo the first.

The normalized reversibility defect is
\begin{eqnarray}
R(\hat{U}) = \frac{\left\| B(\hat{U}^{\dagger})B(\hat{U}) - \hat{\mathds{1}} \right\|_F}{\sqrt d}.
\end{eqnarray}
For $\hat{H}^{\otimes k}$ this equals $R=\sqrt{1-2^{-k}}$; the analytic derivation is given in Appendix~\ref{app:born-shadow}.

\subsection{Phase activation and the phase ledger}\label{subsec:phase-ledger} 

A phase can be locally invisible in one basis and later become observable by composition. For
\begin{eqnarray}
\hat{R}_z(\theta) = \begin{pmatrix} 1 & 0 \\ 0 & e^{i\theta} \end{pmatrix},
\end{eqnarray}
one has $B(\hat{R}_z(\theta))=\hat{\mathds{1}}$ in the computational basis. Yet,
\begin{eqnarray}
B(\hat{H}\hat{R}_z(\theta)\hat{H})=
\begin{pmatrix}
\cos^2(\theta/2)	& \sin^2(\theta/2) \\
\sin^2(\theta/2)	& \cos^2(\theta/2)
\end{pmatrix},
\end{eqnarray}
whereas
\begin{eqnarray}
B(\hat{H})B(\hat{R}_z(\theta))B(\hat{H})=B(\hat{H}).
\end{eqnarray}
Thus, a phase that is absent from the local probability shadow becomes active under future composition. This is the simplest visible trace of the hidden relational information carried by coherence. The phase-dependent quantum probabilities therefore differ from the phase-independent stochastic prediction even in this single-qubit example.

We now quantify the missing information. Here $Y$ is a simulator readout available without destroying the retained record used by the inverse continuation. It is not a projective quantum measurement physically inserted before the inverse gate, which would generally destroy the coherence needed for recovery.

\begin{definition}[Phase ledger]
\label{def:phase-ledger}
Let $X$ be an input basis-state random variable with prior $\pi$, and let $Y$ be the visible readout after applying $\hat{U}$, with
\begin{eqnarray}
P(Y=y|X=x)=B(\hat{U})_{yx}.
\end{eqnarray}
If a classical stochastic simulator stores an additional hidden variable $Z$ after the step, the information in $Z$ beyond the visible readout $Y$ is the ledger information $I(X;Z|Y)$. The ideal one-step ledger cost is
\begin{eqnarray}
\Lambda_{\pi}(\hat{U}) := H_{\pi}(X|Y).
\end{eqnarray}
For the uniform prior we write simply $\Lambda(\hat{U})$.
\end{definition}

\begin{theorem}[Exact phase-ledger identity]
\label{thm:exact-phase-ledger}
Suppose a positive stochastic simulator represents a reversible unitary step $\hat{U}$ with visible readout $Y$ and additional retained information $Z$, and suppose that a later representation of $\hat{U}^{\dagger}$ recovers the original input $X$ exactly from the complete record $(Y,Z)$. Then,
\begin{eqnarray}
I(X;Z|Y)=H(X|Y).
\end{eqnarray}
Approximate recovery admits a Fano-type lower bound, given in Appendix~\ref{app:ledger}.
\end{theorem}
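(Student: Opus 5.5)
The plan is to reduce the statement to two standard Shannon identities. Exact recovery of $X$ from $(Y,Z)$ forces $H(X|Y,Z)=0$, and the chain rule then splits $H(X|Y)$ into the ledger term plus this vanishing residual. All quantities are taken with respect to the single joint law of $(X,Y,Z)$ generated by the simulator. That law exists because the simulator is a positive stochastic process: by Theorem~\ref{thm:hcsm-normal-form} and Theorem~\ref{thm:global-readout-joint}, the input prior $\pi$, the visible readout $Y$ with $P(y|x)=B(\hat{U})_{yx}$, and the retained variable $Z$ sit on one countable probability space. Conditional entropies are finite there whenever $X$ ranges over the $d$ basis labels.

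First step: formalize exact recovery. The later representation of $\hat{U}^{\dagger}$ acts on the complete record $(Y,Z)$ through some positive stochastic kernel $Q(\hat x|y,z)$, and it reproduces $X$ with probability one. So for every $(y,z)$ of positive probability and every $x$ with $P(x,y,z)>0$, we have $Q(x|y,z)=1$. Because $Q(\cdot|y,z)$ is normalized, at most one $x$ can be compatible with a given $(y,z)$. Hence $X=f(Y,Z)$ almost surely for a deterministic function $f$, and therefore $H(X|Y,Z)=0$.

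Any internal randomness of the recovery kernel must be treated carefully, and this is the main point requiring care. The argument above shows that randomness cannot help: if two distinct inputs shared a record $(y,z)$ of positive probability, the kernel could not output both with certainty. I will state this as a short lemma. It also covers the case where the recovery is "operational" in the sense of Definition~\ref{def:compositional-process-simulation}: the readout of the recovered state must equal $X$ exactly, so the same support argument applies to the readout of the recovered state.

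Second step: apply the chain rule for conditional mutual information,
\begin{eqnarray}
I(X;Z|Y)=H(X|Y)-H(X|Y,Z).
\end{eqnarray}
Substituting $H(X|Y,Z)=0$ gives $I(X;Z|Y)=H(X|Y)=\Lambda_\pi(\hat{U})$, which is the claimed identity.

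Finally, I will remark that $H(X|Y)$ depends only on $\pi$ and $B(\hat{U})$. The ledger cost is therefore fixed by the visible shadow alone. By Theorem~\ref{thm:reversibility-obstruction}, it vanishes for every prior exactly when $B(\hat{U})$ is a permutation, i.e.\ when $\hat{U}$ is monomial. The approximate version would replace $H(X|Y,Z)=0$ by Fano's bound $H(X|Y,Z)\le h(\epsilon)+\epsilon\log(d-1)$, giving the lower bound on $I(X;Z|Y)$ deferred to the appendix.
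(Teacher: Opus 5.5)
Your proof is correct and follows the paper's route exactly: exact recovery forces $H(X|Y,Z)=0$, and the chain rule $I(X;Z|Y)=H(X|Y)-H(X|Y,Z)$ finishes it, with your support lemma for a stochastic recovery kernel making explicit a step the paper takes for granted. One small correction to your closing remark: vanishing of $H_\pi(X|Y)$ under a full-support prior gives a deterministic stochastic left inverse of $B(\hat{U})$, so the right citation is the stochastic-inverse version (Lemma~\ref{app:lem:stochastic-inverse}, Theorem~\ref{app:thm:reversibility-strengthened}), not Theorem~\ref{thm:reversibility-obstruction}, whose hypothesis is the specific identity $B(\hat{U}^{\dagger})B(\hat{U})=\hat{\mathds{1}}$.
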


\begin{proof}[Proof sketch]---Exact recovery from $(Y,Z)$ implies $H(X|Y,Z)=0$. The chain-rule identity
\begin{eqnarray}
I(X;Z|Y)=H(X|Y)-H(X|Y,Z)
\end{eqnarray}
therefore reduces to $I(X;Z|Y)=H(X|Y)$. Appendix~\ref{app:ledger} gives the exact and approximate forms in detail.
\end{proof}

For the $d$-dimensional Fourier transform $\hat{F}_d$, the Born shadow is flat: $B(\hat{F}_d)_{yx}=1/d$. Under a uniform prior, $X$ and $Y$ are independent, so $\Lambda(\hat{F}_d)=\log_2 d$. For $\hat{H}^{\otimes k}\otimes \hat{\mathds{1}}^{\otimes(n-k)}$, the ledger cost is $k$ bits. The point is not that quantum theory destroys this information. Quantum theory stores it in phase and coherence, so the inverse can recover it. A positive stochastic representation must preserve the information in its retained record; HCSM can do so by keeping the input in its history. The identity fixes conditional information and lower-bounds $H(Z|Y)$, rather than a worst-case memory length.

The conditional-information identity itself applies equally to a classical
randomizing channel with the same input--output law when exact recovery is
required. Its role here is to quantify the information needed to reproduce
a specified coherent continuation: $\hat{U}^{\dagger}$ distinguishes the
orthogonal conditional outputs $\hat{U}\ket{x}$ even when their visible
basis distributions coincide. The ledger is consequently tied to the
input ensemble, readout basis, and recovery task, rather than being a
coherence measure of a single state.

There is nevertheless a direct quantitative link to coherence for the
uniform basis ensemble. Define
$\hat{\rho}_x=\hat{U}\ketbra{x}{x}\hat{U}^{\dagger}$ and let
$C_{\rm rel}(\hat{\rho})=S(D\hat{\rho})-S(\hat{\rho})$ be the
relative-entropy coherence in the readout basis~\cite{Baumgratz2014Coherence}, where
$S$ is the von Neumann entropy in bits. In finite dimension $d$,
\begin{eqnarray}
\Lambda(\hat{U})
&=&\frac{1}{d}\sum_x C_{\rm rel}(\hat{\rho}_x).
\label{eq:ledger-generated-coherence}
\end{eqnarray}
This is the average coherence generated from the labelled basis inputs,
not the coherence of their averaged density operator. The inverse can
convert the resulting coherent distinctions back into distinguishable
basis labels; a positive simulator must preserve their predictive content
in its retained record. Appendix~\ref{app:ledger} derives
Eq.~\eqref{eq:ledger-generated-coherence} and its prior-dependent extension.

\subsection{Universal exact simulation and no-free classicalization}\label{subsec:tomographic} 

The phase-ledger identity in Theorem~\ref{thm:exact-phase-ledger} concerns recovery of a basis input in one step. For general circuits, the retained stochastic representation must support all allowed future continuations. Here the relevant representation is its probability law, not necessarily an individual sampled configuration.

\begin{definition}[Exact compositional HCSM simulation of quantum circuits]
\label{def:exact-compositional-hcsm}
For an allowed initial preparation $\hat{\rho}_0$, after a circuit prefix $\hat{C}$, let $\mu_C$ be the simulator's law on its complete retained histories $h$, including all information available to subsequent updates. Each continuation $\hat{D}$ acts by causal stochastic kernels depending only on $\hat{D}$ and that retained history, and induces normalized response probabilities $\xi_{D,y}(h)\geq0$, with $\sum_y\xi_{D,y}(h)=1$. The simulation is exact and compositional if these kernels compose consistently on reachable laws and
\begin{eqnarray}
&& \sum_h \mu_C(h)\xi_{D,y}(h) = \tr\left[\ketbra{y}{y}\hat{D}\hat{\rho}_C\hat{D}^{\dagger}\right], \nonumber \\
&& \hat{\rho}_C=\hat{C}\hat{\rho}_0\hat{C}^{\dagger},
\end{eqnarray}
for every allowed prefix, continuation, and output. No additional prefix information may be supplied outside the retained representation. Two laws are operationally equivalent if all their averaged continuation responses agree. Quantum process identities are required on these equivalence classes, without requiring redundant histories to be erased.
\end{definition}

\begin{theorem}[Tomographic sufficiency]
\label{thm:tomographic-sufficiency}
If an HCSM exactly and compositionally simulates a tomographically complete family of future continuations, then its post-prefix law is tomographically sufficient for the reachable quantum state. In particular, two distinct reachable density matrices cannot have operationally equivalent simulator laws. All unitary continuations followed by computational-basis readout form such a family.
\end{theorem}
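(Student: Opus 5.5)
The plan is to read the exactness condition of Definition~\ref{def:exact-compositional-hcsm} as a linear-algebra statement and use tomographic completeness directly. Fix an allowed prefix $\hat{C}$ with reachable state $\hat{\rho}_C$ and simulator law $\mu_C$. For each continuation $\hat{D}$ in the family and each output $y$, the averaged response $\sum_h\mu_C(h)\xi_{D,y}(h)$ equals $\tr[\hat{E}_{D,y}\hat{\rho}_C]$, where $\hat{E}_{D,y}:=\hat{D}^{\dagger}\ketbra{y}{y}\hat{D}$. Two laws are operationally equivalent exactly when these averaged responses coincide for all $(D,y)$.

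\textbf{Separating distinct states.} Suppose two prefixes $\hat{C},\hat{C}'$ give operationally equivalent laws $\mu_C\sim\mu_{C'}$. Then $\tr[\hat{E}_{D,y}(\hat{\rho}_C-\hat{\rho}_{C'})]=0$ for all $(D,y)$. Tomographic completeness means the effects $\hat{E}_{D,y}$ span the real space of Hermitian operators on the $d$-dimensional system. Hence the Hermitian, traceless operator $\Delta=\hat{\rho}_C-\hat{\rho}_{C'}$ is orthogonal in Hilbert--Schmidt inner product to a spanning set, so $\Delta=0$. Contrapositively, distinct reachable density matrices yield inequivalent laws.

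\textbf{Recovering the state from the law.} For sufficiency in the positive direction, the map from an equivalence class $[\mu_C]$ to $\hat{\rho}_C$ is well defined by the previous step. It is also given explicitly by a fixed dual frame. Choose operators $\hat{G}_{D,y}$, a finite subfamily suffices in finite dimension, with $\hat{\rho}=\sum_{D,y}\tr[\hat{E}_{D,y}\hat{\rho}]\,\hat{G}_{D,y}$ for all $\hat\rho$. Then $\hat{\rho}_C=\sum_{D,y}\big(\sum_h\mu_C(h)\xi_{D,y}(h)\big)\hat{G}_{D,y}$, so $\hat{\rho}_C$ is an affine functional of $\mu_C$. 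It depends only on the retained law, because Definition~\ref{def:exact-compositional-hcsm} forbids prefix information outside the representation and the kernels depend only on $\hat{D}$ and $h$.

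\textbf{The unitary family.} The remaining claim is that all unitary continuations followed by computational-basis readout form a tomographically complete family. I would exhibit a spanning set explicitly. Taking $\hat{D}=\hat{\mathds{1}}$ gives the projectors $\ketbra{y}{y}$. For $x<y$, a unitary mixing the pair $\{x,y\}$ by a Hadamard, optionally preceded by a phase $\hat{R}_z(\pi/2)$ on that pair and acting as the identity elsewhere, yields effects $\tfrac12(\ketbra{x}{x}+\ketbra{y}{y}\pm(\ketbra{x}{y}+\ketbra{y}{x}))$ and the analogous ones with $\pm i$. Together these span the real and imaginary off-diagonal Hermitian parts, so the span is all Hermitian operators.

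The main obstacle I anticipate is not the linear algebra. It is confirming that operational equivalence, as defined on laws over complete histories possibly containing redundant records, interacts correctly with the family: the equivalence must be tested only with the single-step continuations followed by readout. Definition~\ref{def:exact-compositional-hcsm} already phrases equivalence through averaged continuation responses, and exactness pins those to Born probabilities. I would state explicitly that no compositional consistency beyond this single-continuation exactness is needed for the theorem. For countably infinite dimension, I would replace the finite spanning argument by weak-$*$ density of the span of these effects in $\mathcal B(\mathscr H)$, which still separates trace-class operators.
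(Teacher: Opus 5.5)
Your proposal is correct and follows the paper's argument: exactness turns operational equivalence of two laws into equality of $\tr[\hat{D}^{\dagger}\ketbra{y}{y}\hat{D}\,\hat{\rho}]$ for every continuation and output, and because these effects separate density matrices, $\hat{\rho}_C=\hat{\rho}_{C'}$. The differences are only in execution: you show the unitary family is complete with an explicit finite spanning set (Hadamard and phase gates on pairs of basis states) and add a dual-frame reconstruction, whereas the paper notes that the effects exhaust all rank-one projectors and applies polarization (Lemma~\ref{app:lem:projectors-separate}), which also covers the countably infinite case without a density argument.
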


\begin{proof}
Operationally equivalent laws give the same averaged responses for every continuation. Exactness would therefore imply, for their represented states $\hat{\rho}$ and $\hat{\sigma}$,
\begin{eqnarray}
\tr\bigl[\hat{P}\hat{\rho}\bigr] = \tr\bigl[\hat{P}\hat{\sigma}\bigr],
\quad
\hat{P}=\hat{D}^{\dagger}\ketbra{y}{y}\hat{D}.
\end{eqnarray}
For all unitary continuations these are all rank-one projectors, which separate density matrices. Hence $\hat{\rho}=\hat{\sigma}$. The same argument applies to any tomographically complete continuation family. This identifies the quantum state from the operational law; it does not require every sampled history to determine that state or the laws for different states to have disjoint support. Appendix~\ref{app:tomo} gives the detailed and finite-precision forms.
\end{proof}

\begin{corollary}[Ledger dimension]
\label{cor:ledger-dimension}
Suppose the entire pure $n$-qubit state manifold admits a continuous encoding into an $m$-dimensional manifold parameterizing simulator laws, with exact tomographically complete predictions. Then $m\geq 2^{n+1}-2$. If the encoding instead covers all mixed states continuously, then $m\geq 4^n-1$.
\end{corollary}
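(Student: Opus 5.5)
The plan is to reduce the statement to invariance of domain, using Theorem~\ref{thm:tomographic-sufficiency} as the source of injectivity. Let $\mathcal S$ be the relevant state space. For pure states take $\mathcal S=\mathbb{CP}^{2^n-1}$, a compact real manifold of dimension $2(2^n-1)=2^{n+1}-2$. For mixed states take the set of $2^n\times 2^n$ density matrices, a compact convex body of real dimension $4^n-1$ with nonempty interior in the affine space of unit-trace Hermitian matrices. The hypothesis supplies a continuous map $E:\mathcal S\to\mathcal N$ into an $m$-dimensional manifold $\mathcal N$ parameterizing simulator laws.

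First I show that $E$ is injective. If $E(\hat\rho)=E(\hat\sigma)$, the two states are assigned the same simulator law. That law has identical averaged responses for every continuation, so in particular the two laws are operationally equivalent. Exact tomographically complete prediction then gives $\tr[\hat P\hat\rho]=\tr[\hat P\hat\sigma]$ for all rank-one $\hat P=\hat D^\dagger\ketbra{y}{y}\hat D$. By Theorem~\ref{thm:tomographic-sufficiency}, $\hat\rho=\hat\sigma$.

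Next I pick a coordinate ball in the source: an open set $W\subset\mathcal S$ homeomorphic to $\mathbb R^k$, with $k=2^{n+1}-2$ or $k=4^n-1$. For mixed states I choose $W$ inside the interior, for example around the maximally mixed state. Restricted to the closed ball $\overline W$, the map $E$ is a continuous injection from a compact set into a Hausdorff space, so it is a homeomorphism onto its image. Suppose $m<k$. Pass to a chart of $\mathcal N$ around a point of $E(\overline W)$, shrinking $W$ so that its image lies in one chart. Then $\mathbb R^m\subset\mathbb R^k$ via $\mathbb R^k=\mathbb R^m\times\mathbb R^{k-m}$. Composing gives a continuous injection $W\to\mathbb R^k$ whose image lies in the proper subspace $\mathbb R^m\times\{0\}$. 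By Brouwer's invariance of domain, the image of an open subset of $\mathbb R^k$ under a continuous injection into $\mathbb R^k$ is open. A nonempty set contained in a lower-dimensional subspace cannot be open, so this is a contradiction and $m\ge k$.

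The main obstacle is interpretive rather than computational. The argument needs "$m$-dimensional manifold parameterizing simulator laws" to mean that each point of $\mathcal N$ determines a single law, so that equal parameters force equal laws and hence operational equivalence. I would state this explicitly. I would also note that the bound concerns continuous encodings of the law, not the support of individual histories. Without continuity, a space-filling or cardinality-based bijection would evade the bound, which is why continuity is an assumption of the statement.
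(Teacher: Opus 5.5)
Your proposal is correct and follows essentially the same route as the paper: tomographic sufficiency gives injectivity, you restrict to a source chart (for mixed states, inside the full-rank interior) whose image lies in one target chart, pad with $k-m$ zero coordinates, and derive a contradiction from invariance of domain. The compactness/homeomorphism-onto-image step on $\overline W$ is not needed for the argument. Your interpretive caveat, that a parameter point must determine a single law, is an assumption the paper also makes implicitly.
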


\begin{proof}[Proof sketch]---Tomographic sufficiency makes the encoding injective. The pure-state manifold $\mathbb{CP}^{2^n-1}$ has real dimension $2^{n+1}-2$, and the full-rank mixed states form a manifold of dimension $4^n-1$. A continuous injection of either manifold into a manifold of lower dimension is excluded by invariance of domain applied in local charts. These bounds concern parameters of a full-state encoding, not bits in a sampled history or computational complexity. A countably described circuit family need not reach the full manifold; conversely, a countable configuration space may carry uncountably many probability laws.
\end{proof}

\begin{theorem}[No-free-classicalization]
\label{thm:no-free-classicalization}
For an inverse-closed family of allowed unitary gates, with all computational-basis inputs available, exact compositional representations obey the following resource alternatives:
\begin{itemize}
\item remain within the dephased stochastic sector, where HCSM applies directly and visible Born-shadow dynamics preserves unitary inverses only for monomial gates;
\item retain additional classical information beyond visible readout probabilities: under the recovery assumptions of Theorem~\ref{thm:exact-phase-ledger} it carries the one-step ledger information, and under universal exact continuation requirements its complete probability law must be tomographically sufficient;
\item change the representation's positive stochastic composition rule, for example by using signed weights, quasi-probabilities, or complex amplitudes as the compositional weights.
\end{itemize}
These alternatives describe representational resources; they do not prohibit positive classical simulation with adequate retained information.
\end{theorem}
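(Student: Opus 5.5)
The plan is to prove the trichotomy by a case split on the compositional rule and on what the retained representation contains beyond the visible Born readout. Fix an inverse-closed gate family $\mathcal G$ containing all computational-basis inputs, and an exact compositional representation $F$ in the sense of Definition~\ref{def:compositional-process-simulation}. The first dichotomy is whether $F$ composes by nonnegative, affine aggregation of probability mass. If it does not, we are already in the third alternative, since the weights or composition rule differ from positive stochastic ones. So assume $F$ is positive and stochastic. We then split on whether the retained state at each step is just the visible basis distribution, i.e.\ whether each gate acts on the diagonal sector alone.

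In the first sub-case the representation lives in the dephased sector. By Theorem~\ref{thm:classical-sector} and Corollary~\ref{cor:maximal-classical-sector}, each gate is represented by a dephasing-fixed channel, i.e.\ a stochastic kernel, and the process is an HCSM. Exact agreement of single-step readout from basis inputs forces the kernel of $\hat U$ to be $B(\hat U)$. Composition on $\hat U^\dagger\hat U=\hat{\mathds 1}\in\mathcal G$ then gives $B(\hat U^\dagger)B(\hat U)=\hat{\mathds 1}$, which is available by inverse closure. By Theorem~\ref{thm:reversibility-obstruction}, $\hat U$ is monomial. This establishes the first bullet: visible Born-shadow dynamics preserves inverses only for monomial gates.

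In the complementary sub-case some non-monomial gate is represented exactly, so the shadow alone cannot be what is retained. There must be extra retained data $Z$ beyond $Y$. The key step is that exactness of the $\hat U^\dagger$ continuation on basis input $X$ returns $X$ with certainty. Hence $X$ is a function of the complete record $(Y,Z)$, and Theorem~\ref{thm:exact-phase-ledger} yields $I(X;Z|Y)=H(X|Y)$. This quantity is strictly positive for non-monomial $\hat U$ under a full-support prior, because $B(\hat U)$ then has a column that is not one-hot. If universal continuations are demanded, Theorem~\ref{thm:tomographic-sufficiency} applies directly and gives tomographic sufficiency of the post-prefix law. Together these give the second bullet.

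I expect the main obstacle to be making the case split exhaustive without circularity. The issue is the precise meaning of ``beyond visible readout probabilities'' when histories are retained and identities hold only on operational equivalence classes. I would handle it by defining the first case as ``operational equivalence classes of reachable laws are determined by the visible basis marginal.'' Its negation is then exactly the existence of predictive content in $Z$, which is what the ledger and tomographic arguments measure. Finally, I would remark that the closing sentence of the statement holds because HCSM storing $X$ in its history realizes the second alternative. So the alternatives classify resources rather than forbid simulation.
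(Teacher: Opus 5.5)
Your proposal is correct and follows essentially the same route as the paper's proof sketch and Appendix~\ref{app:no-free}, differing only in the order of the case split. The three cases match: Theorem~\ref{thm:classical-sector} plus the forced shadow $B(\hat{U})$ and Theorem~\ref{thm:reversibility-obstruction} for the dephased case, Theorem~\ref{thm:exact-phase-ledger} and Theorem~\ref{thm:tomographic-sufficiency} for an enlarged positive record, and a change of weight rule otherwise. Your small additions are deriving the recovery hypothesis from exactness instead of assuming it as the paper does (make explicit that the $\hat{U}^{\dagger}$ response must equal one on the support of each post-$\hat{U}$ law, which forces disjoint supports and so $H(X|Y,Z)=0$), and the strict-positivity remark, which implicitly uses the pigeonhole step that a non-permutation doubly stochastic matrix has overlapping column supports.
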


\begin{proof}[Proof sketch]---Theorem~\ref{thm:classical-sector} identifies the fixed-basis dephased sector with stochastic dynamics. Theorem~\ref{thm:reversibility-obstruction} excludes non-monomial inverse pairs from the corresponding visible Born-shadow representation. An exact positive simulator extending beyond that representation must consequently retain information not determined by the visible probabilities. The ledger identity quantifies this information when a common retained classical state supports both the specified readout and inverse recovery; tomographic sufficiency characterizes its operational content when all separating continuations are required. Without such information, exact general simulation cannot retain the same positive stochastic representation and composition rule. Allowing other weights changes that rule, but is not by itself a sufficient construction. Appendix~\ref{app:no-free} develops these alternatives.
\end{proof}

Theorem~\ref{thm:no-free-classicalization} gives the sharpest message of this work. It is not enough to cite the algorithmic inclusion of PTM in QTM. HCSM exhausts the positive causal history sector in its diagonal quantum realization, and an exact classical representation of coherent dynamics must preserve the missing phase-dependent predictions. Amplitudes stored as classical data and updated deterministically are one realization of the hidden-ledger regime.

\subsection{Finite-precision continuation resolution}\label{subsec:continuation-resolution}

The exact boundary becomes operationally useful when it also constrains approximate predictions. Two questions remain beyond the exact statements: how many distinct predictive descriptions survive a specified error tolerance, and how does this number depend on the future experiments that the simulator must support? We study these questions on preparations with identical current basis probabilities. Any separation then comes from future phase-sensitive predictions, rather than from distinguishability already present in the visible record. The numerical task is a finite optimization over which preparations can share one predictive description; it is not a plot of the analytic ledger or state-space dimension formulas.

Let $p_i^a$ be the output distribution of continuation $a$ after preparation $i$, and let $\mathcal A$ denote the allowed continuation set. Define the continuation distance
\begin{eqnarray}
d_{\mathcal A}(i,j) &=& \sup_{a\in\mathcal A}\operatorname{TV}(p_i^a,p_j^a), \nonumber \\
\operatorname{TV}(p,q) &=& \frac12\sum_y|p(y)-q(y)|.
\label{eq:continuation-distance}
\end{eqnarray}
The error requirement is uniform: the simulated distribution differs from $p_i^a$ by at most $\epsilon$ in total variation for every tested preparation and continuation. For a binary continuation, this is simply additive error in its success probability. For all unitary continuations followed by full computational-basis readout, $d_{\mathcal A}(i,j)=\|\hat{\rho}_i-\hat{\rho}_j\|_1/2$; for pure states it equals $\sqrt{1-|\braket{\psi_i}{\psi_j}|^2}$. This full-distribution metric differs in general from the single-outcome metric $d_{\rm meas}$ in Appendix~\ref{app:tomo}; the two universal metrics coincide for the pure states used below.

\begin{proposition}[Approximate predictive resolution and finite codebooks]
\label{prop:approximate-predictive-resolution}
Suppose a positive simulator uses retained laws $\mu_i$ and a common stochastic response rule for each continuation, with the uniform error specified above. Then
\begin{eqnarray}
\operatorname{TV}(\mu_i,\mu_j)&\geq&d_{\mathcal A}(i,j)-2\epsilon.
\label{eq:law-resolution-lower-bound}
\end{eqnarray}
For a finite preparation set, form the incompatibility graph $G_\epsilon$ with an edge $ij$ whenever $d_{\mathcal A}(i,j)>2\epsilon$. If a deterministic summary assigns each preparation to one of $M$ fixed predictive laws, then $M\geq\chi(G_\epsilon)$, where $\chi$ is the chromatic number. In the special case of finitely many binary queries and arbitrary per-query Bernoulli responses, the minimum is exactly
\begin{eqnarray}
M_{\rm bin}(\epsilon)=\chi(G_\epsilon),
\quad
b_{\rm bin}(\epsilon)=\left\lceil\log_2\chi(G_\epsilon)\right\rceil.
\label{eq:exact-binary-codebook}
\end{eqnarray}
Here $b_{\rm bin}$ is the fixed-length index size of this deterministic codebook, not a memory bound for arbitrary stochastic encoders.
\end{proposition}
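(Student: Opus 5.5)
The plan has three parts: the law-resolution bound, the chromatic lower bound, and exactness in the binary case. For Eq.~\eqref{eq:law-resolution-lower-bound}, fix a continuation $a\in\mathcal A$ and write the simulated output distribution after preparation $i$ as $q_i^a(y)=\sum_h\mu_i(h)\xi_{a,y}(h)$. Here $\xi_{a,\cdot}(h)$ is the common stochastic response rule, which does not depend on $i$.

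The map $\mu\mapsto\sum_h\mu(h)\xi_{a,\cdot}(h)$ is a Markov kernel, so the data-processing (contraction) property of total variation gives $\operatorname{TV}(q_i^a,q_j^a)\le\operatorname{TV}(\mu_i,\mu_j)$. I will verify this directly: $\frac12\sum_y|\sum_h(\mu_i-\mu_j)(h)\xi_{a,y}(h)|\le\frac12\sum_h|\mu_i(h)-\mu_j(h)|$, using $\sum_y\xi_{a,y}(h)=1$.

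The triangle inequality together with the uniform error then gives $\operatorname{TV}(p_i^a,p_j^a)\le 2\epsilon+\operatorname{TV}(\mu_i,\mu_j)$. Taking the supremum over $a$ yields the bound.

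For $M\ge\chi(G_\epsilon)$, suppose a deterministic summary $c$ assigns preparation $i$ to the fixed law $\nu_{c(i)}$. If $c(i)=c(j)$, the simulated responses coincide for every $a$. By the triangle inequality, $\operatorname{TV}(p_i^a,p_j^a)\le2\epsilon$ for all $a$, so $d_{\mathcal A}(i,j)\le2\epsilon$ and $ij$ is not an edge of $G_\epsilon$. Hence $c$ is a proper coloring with at most $M$ colors, and $M\ge\chi(G_\epsilon)$. Alternatively, this follows from Eq.~\eqref{eq:law-resolution-lower-bound} with $\operatorname{TV}=0$.

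For exactness in the binary case, I will construct a codebook of size $\chi(G_\epsilon)$. Take a proper coloring with color classes $S_1,\dots,S_\chi$. Within a class, every pair satisfies $|p_i^a-p_j^a|\le2\epsilon$ for each binary query $a$, where $p_i^a$ is the success probability. For each class $S_k$ and query $a$, set the Bernoulli response to the midpoint $r_k^a=\frac12(\max_{i\in S_k}p_i^a+\min_{i\in S_k}p_i^a)$. Its additive error to every member is at most half the spread, hence at most $\epsilon$, which is the total-variation error for binary outcomes. Because queries are finitely many and responses are arbitrary per query, these choices are independent across queries. The predictive law for class $k$ can therefore be a single point mass whose response rule outputs $r_k^a$ for query $a$.

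This gives $M_{\rm bin}(\epsilon)\le\chi$, and with the lower bound, equality. The statement $b_{\rm bin}=\lceil\log_2\chi\rceil$ is just the fixed-length index size.

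The main obstacle is the midpoint step. It depends on two facts: pairwise closeness implies a common $\epsilon$-center in one dimension (Helly-type behaviour of intervals), and the responses decouple across queries. For non-binary outputs or coupled responses, pairwise $2\epsilon$-closeness need not give a common $\epsilon$-center in total variation. I will stress that this is why exactness is claimed only in the binary case.
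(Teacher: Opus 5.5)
Your proposal is correct and follows the paper's proof essentially step for step. The paper also uses total-variation contraction under the common response kernel plus the triangle inequality for the law-resolution bound. It also notes that a shared law forces non-adjacency, so every deterministic codebook is a proper coloring, and it uses the coordinatewise midpoint decoder on color classes for the binary upper bound. Your remark that exactness depends on the one-dimensional interval structure matches the paper's own caveat that pairwise closeness of full outcome distributions does not guarantee a common $\epsilon$-center.
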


\begin{proof}---A common stochastic continuation contracts total variation between retained laws. Applying the triangle inequality to the two approximation errors gives Eq.~\eqref{eq:law-resolution-lower-bound}. Preparations assigned the same law therefore cannot be adjacent in $G_\epsilon$, so every deterministic codebook defines a coloring. For binary queries, every color class has probability range at most $2\epsilon$ in each coordinate. Assigning the midpoint of that coordinate's minimum and maximum gives a valid Bernoulli response within $\epsilon$ of every class member. Thus a coloring also constructs a codebook, proving Eq.~\eqref{eq:exact-binary-codebook}. This decoder is not required to be a quantum state or a compositional simulator. Appendix~\ref{app:finite-codebook} gives the metric and certificate details.
\end{proof}

\paragraph{Experiment 1: finite continuation banks versus universal resolution.}
We prepare phase-only states on an open chain,
\begin{eqnarray}
\ket{\psi(\boldsymbol\theta)}&=&\hat{V}(\boldsymbol\theta)\ket{+}^{\otimes n},
\nonumber\\
\hat{V}(\boldsymbol\theta)&=&\exp\!\left[\frac{i}{2}\left(
\sum_{j=1}^{n}\theta_j\hat{Z}_j+
\sum_{j=1}^{n-1}\theta_{n+j}\hat{Z}_j\hat{Z}_{j+1}\right)\right].
\label{eq:numerical-phase-family}
\end{eqnarray}
Here, $\hat{Z}_j$ is Pauli $Z$ on qubit $j$. Every preparation has exactly the same computational-basis probabilities $2^{-n}$, although its future statistics depend on the phases. For each $n\in\{3,4,6,8\}$ and phase width $\sigma\in\{0.15,0.45\}$, we generate $N=384$ preparations with independent $\theta_j\sim\mathcal N(0,\sigma^2)$, using three fixed seeds. This gives 24 ensembles and 9,216 preparation instances. Each ensemble has a separately seeded bank of 12 inverse-prefix continuations $\hat{H}^{\otimes n}\hat{V}(\boldsymbol\alpha_a)^{\dagger}$, with angles drawn from the same distribution. We compare their full-output continuation distance with the universal trace distance and the identically zero present-readout distance. The error scan is $\epsilon=0.05,0.10,\ldots,0.45$.

For every distance matrix and tolerance, a multi-start clique search gives a verified lower bound $L$ on the required number of deterministic predictive labels. A greedy radius-$\epsilon$ cover by sampled physical states gives a constructive upper bound $U$: the selected representative predicts every assigned preparation's tested distributions within tolerance. These are finite preparation-set certificates, not assumptions that the search finds the chromatic number. All selected clique pairs and cover assignments are checked independently of the search, with a $10^{-10}$ safety margin. The saved data also include nested banks of one and four continuations. Thus the experiment compares realizable restricted readouts with arbitrary future readouts while keeping the preparations and their present probabilities fixed.

\begin{figure*}[t]
\centering
\includegraphics[width=0.80\linewidth]{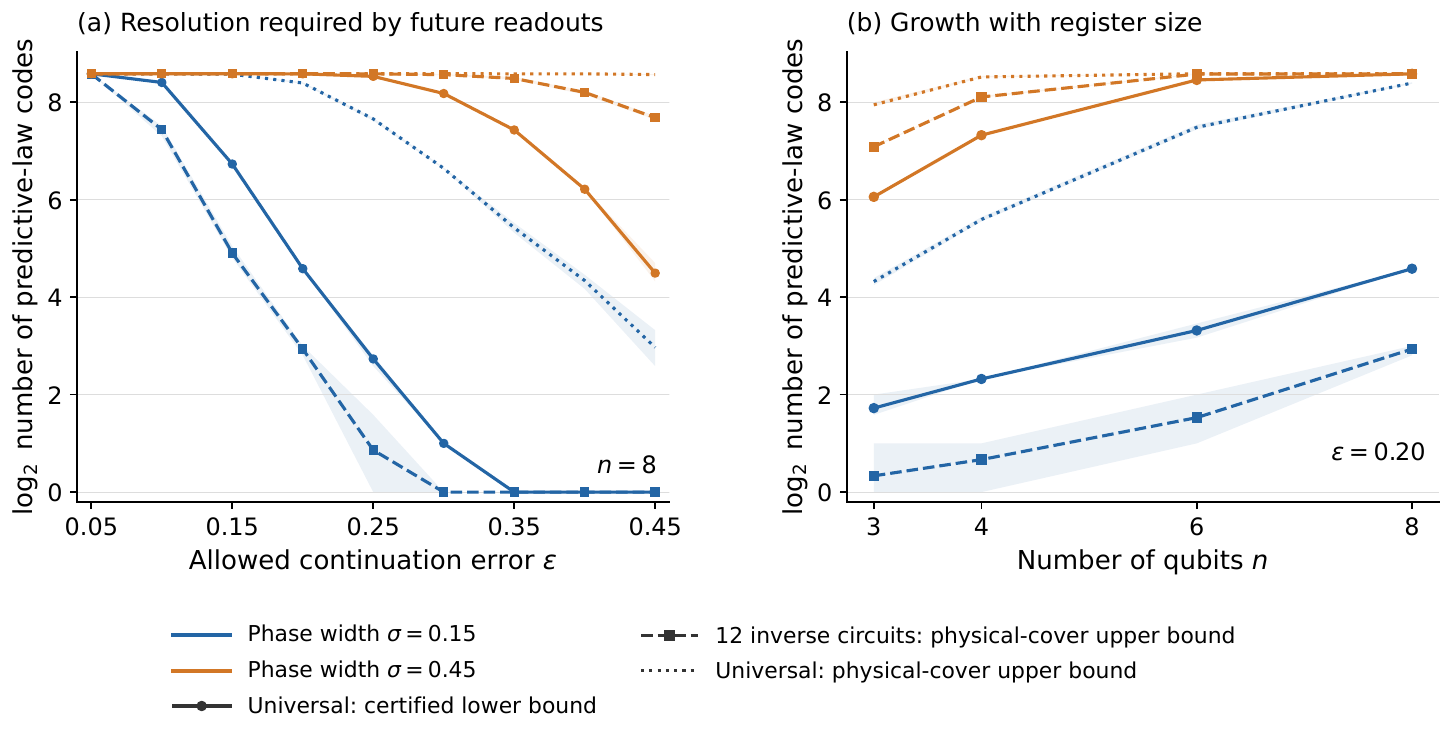}
\caption{Predictive resolution of phase-only preparations with identical present basis probabilities. Panel (a) varies the tolerance for eight qubits; panel (b) varies the register size at $\epsilon=0.20$. Solid curves certify universal-continuation lower bounds by cliques; dashed curves give constructive upper bounds for the 12-continuation bank, and dotted curves give physical-center upper bounds for universal continuations. Colors distinguish the two phase widths. Vertical coordinates are base-two logarithms of label counts; lines give their three-seed means and bands their ranges. A bank upper bound below a universal lower bound establishes an actual finite-task resolution gap, unlike a comparison of two heuristic lower bounds. All bounds concern deterministic predictive codebooks on 384 preparations, not general stochastic-encoder memory or complete compositional simulators.}
\label{fig:predictive-resolution}
\end{figure*}

Fig.~\ref{fig:predictive-resolution} shows a certified distinction between restricted and universal futures. At $n=8$, $\sigma=0.15$, and $\epsilon=0.20$, the three finite banks admit codebooks of at most $8,7,8$ labels, respectively, whereas every universal-continuation codebook requires at least 24 labels for each of those same preparation sets. Thus at most three index bits suffice for these restricted tasks, while at least five are required for universal predictions at the same tolerance. At $\epsilon=0.15$, the corresponding finite-bank upper bounds are $33,29,28$, below the universal lower bounds $108,104,107$. These are genuine gaps between a constructive upper certificate and a lower certificate, not differences between heuristics. With the broader phase width $\sigma=0.45$ and $\epsilon=0.20$, the universal lower certificates increase from $66$--$68$ labels at three qubits to $383$--$384$ at eight qubits. The latter approach the finite sample limit $N=384$ and cannot establish an asymptotic scaling law. In particular, the phase family has only $2n-1$ parameters; these data do not imply exponential memory or a complexity-class separation.

\paragraph{Experiment 2: exact codebooks and unseen continuations.}
To resolve the optimization gap itself, we next use 32 two-qubit phase preparations and binary continuation queries. Preparations are $\exp[i(a\hat{Z}_1+b\hat{Z}_2+c\hat{Z}_1\hat{Z}_2)]\ket{++}$, with $a,b,c$ independently uniform on $[0,2\pi)$. Each query projects onto an independently drawn state of the same family; it is implemented by its inverse preparation circuit followed by the binary event $00$. Four fixed seeds define independent instances, each with a nested bank of $4,16,64$ queries and tolerances $\epsilon\in\{0.05,0.10,0.15,0.20,0.25\}$. Eq.~\eqref{eq:exact-binary-codebook} now applies directly. We construct the incompatibility graph, find clique lower bounds and DSATUR colorings, and use complete colorability search if their bounds do not match. Every reported optimum is accompanied by an explicit clique and coloring, and every midpoint decoder is checked against all design probabilities.

All 60 finite-graph optima are certified: 57 have matching clique and coloring witnesses, and three require complete colorability searches to exclude one fewer color. Those three infeasibility results are independently confirmed by a mixed-integer formulation. The initial greedy colorings happen to be optimal in these instances; no improvement over that heuristic is claimed. At $\epsilon=0.20$, the minimum label counts range over $6$--$8$, $16$--$19$, and $22$--$25$ for the respective query-bank sizes, although the present basis distribution is always $(1/4,1/4,1/4,1/4)$. Their means are $7$, $17.5$, and $23.25$. The experiment thereby measures the actual information-resolution demand of the specified future interface rather than estimating it from a dimension formula.

\begin{figure*}[t]
\centering
\includegraphics[width=0.80\linewidth]{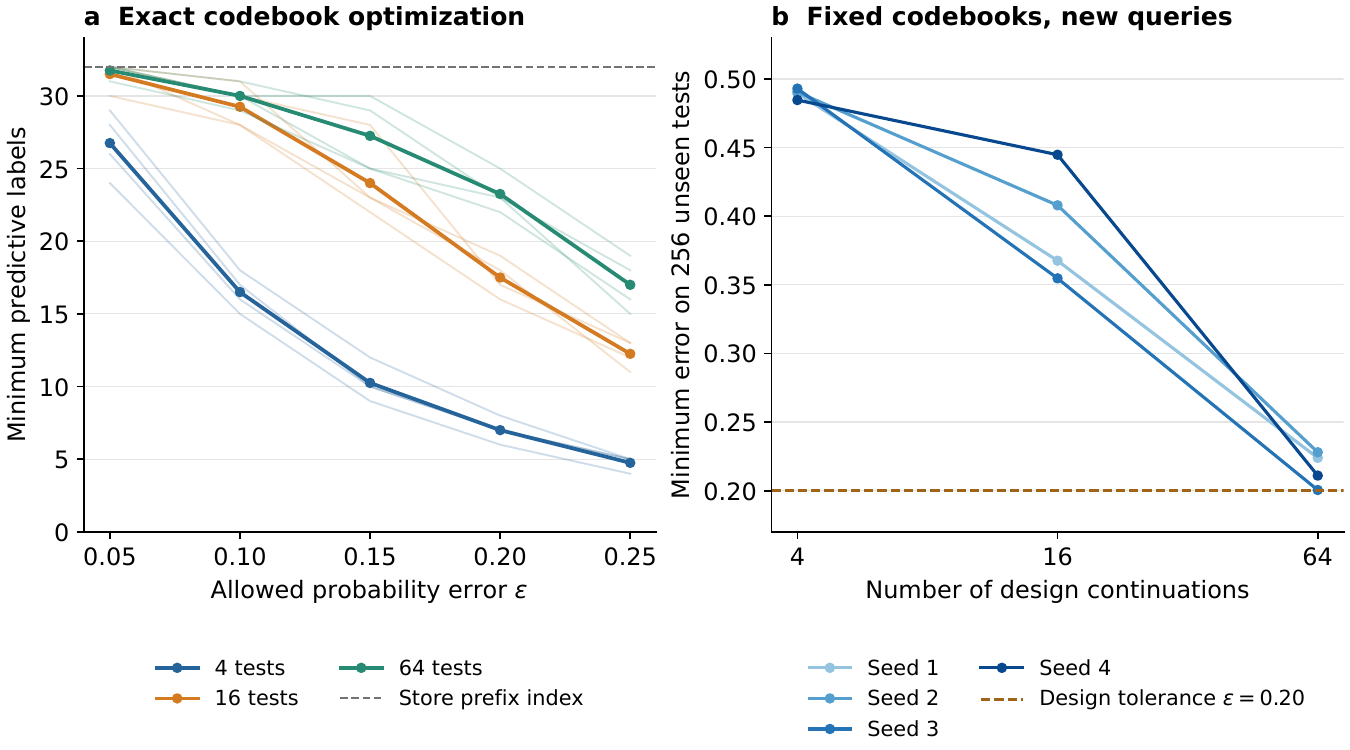}
\caption{Exact finite-query codebooks and their continuation dependence. (a) Minimum deterministic predictive labels for 32 phase-only preparations, using nested banks of binary continuation queries. Thin lines show four seeds and thick lines their means. Matching clique/coloring witnesses certify 57 instances; complete searches certify the remaining three. The dashed reference retains the full preparation index. (b) Minimum possible error on 256 new queries after fixing each encoding optimized at $\epsilon=0.20$. The new-query decoder is optimized coordinatewise, but the labels cannot be changed. The excess above the design tolerance identifies predictive distinctions discarded by the selected finite-bank encodings. This is an audit of the chosen colorings, not a bound over all optimal colorings or arbitrary stochastic encoders.}
\label{fig:exact-codebook}
\end{figure*}

We then freeze each chosen optimal encoding and challenge it with 256 independently generated, previously unused queries. To avoid blaming a particular fitted decoder, we allow the best possible new Bernoulli response for every new query. Its unavoidable worst-case error is one half of the maximum within-label probability range, as proved in Eq.~\eqref{eq:heldout-extension-error}. At design tolerance $\epsilon=0.20$, these best extension errors are $0.4846$--$0.4930$, $0.3547$--$0.4448$, and $0.2005$--$0.2280$ for encodings designed with $4,16,64$ queries, respectively (Fig.~\ref{fig:exact-codebook}). Thus even the selected finite-bank optima need not support the same tolerance for new futures. This audit concerns those particular optimal encodings; it does not optimize generalization over all equally small colorings.

Both experiments use state-vector probabilities evaluated numerically rather than shot estimates, so they isolate representational error from statistical estimation error. In the first experiment, all clique and cover witnesses pass the specified numerical checks; the smallest accepted nontrivial clique margin is $6.77\times10^{-7}$. In the second, an independent high-precision cosine expansion of all 8,192 design probabilities agrees with the double-precision calculation within $1.53\times10^{-15}$, and all 29,760 threshold comparisons agree, with minimum threshold gap $7.41\times10^{-6}$. These are numerical stability checks, not interval-arithmetic proofs; the combinatorial witnesses and complete searches certify the resulting finite graphs. Separate scripts, fixed seeds, probability matrices, assignments, and validation records accompany each experiment.

The combined result makes the approximate boundary concrete. Complete classical histories remain a valid representation, but reducing their predictive description is constrained by the future distinctions one promises to preserve. A small codebook can suffice for a restricted interface while failing for coherent continuations outside it. Approximation relaxes the information requirement quantitatively; it does not make the requirement disappear.

\section{Discussion}\label{sec:discussion}

\begin{figure*}[t]
\centering
\includegraphics[width=0.80\linewidth]{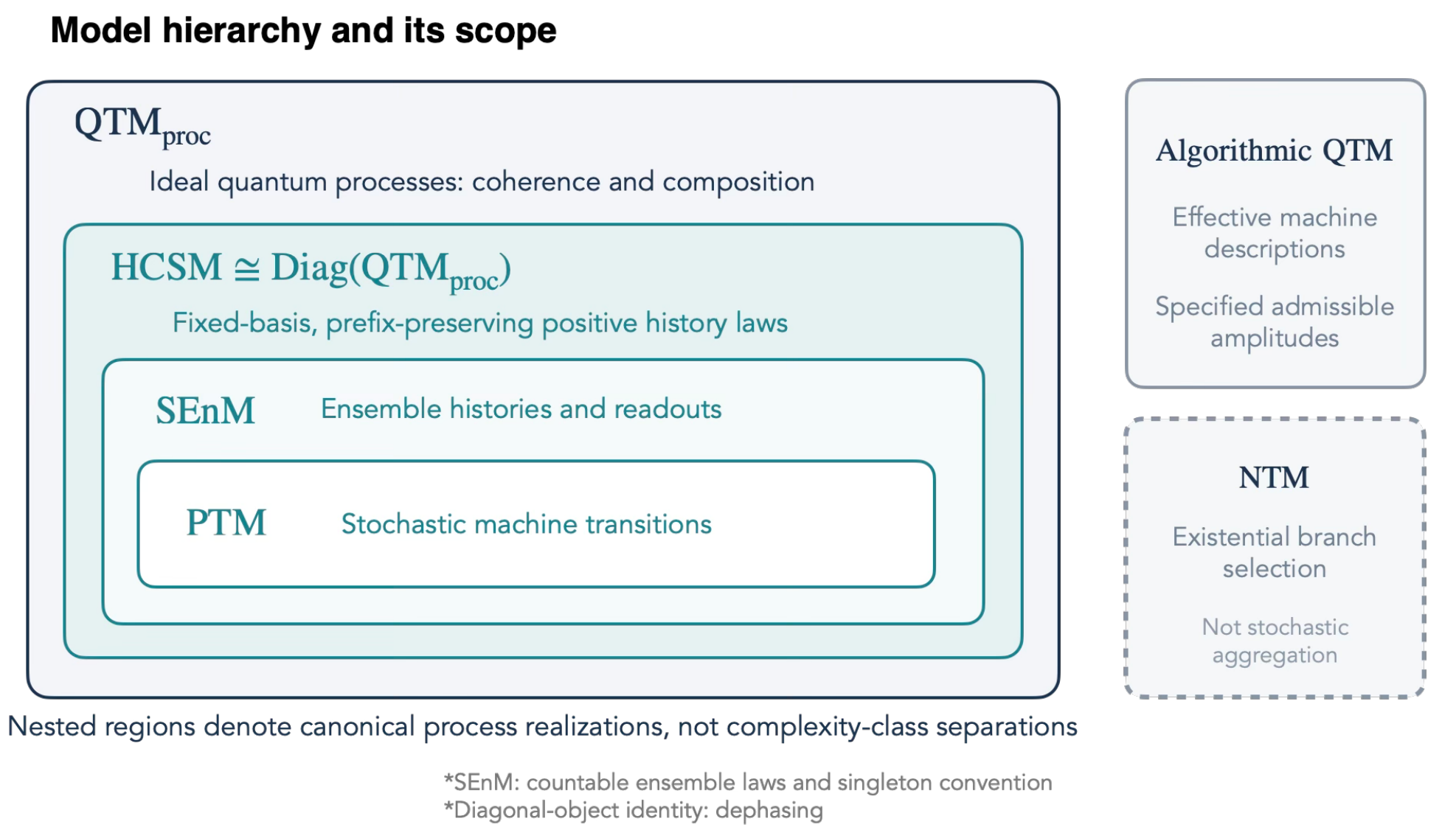}
\caption{Model hierarchy used in this work. PTM and SEnM are contained in HCSM as process models. HCSM is then embedded faithfully into the fixed-basis, history-preserving diagonal sector of ideal quantum processes $\Qtms_{\rm proc}$. The notation $\Qtms_{\rm proc}$ is distinct from the algorithmic QTM model, as specified in Sec.~\ref{subsec:process-sim}; $\diag(\Qtms_{\rm proc})$ denotes the history sector defined in Corollary~\ref{cor:maximal-classical-sector}. SEnM uses the countability and singleton conventions of Theorem~\ref{thm:SEnMinHCSM}. The NTM is drawn separately because its branch semantics is existential rather than affine in probability mass. Thus, HCSM is deliberately strong as a stochastic model, but it is not a nondeterministic branch selector.}
\label{fig:hierarchy}
\end{figure*}

The main conclusion is summarized in Fig.~\ref{fig:hierarchy}:
\begin{widetext}
\begin{eqnarray}
\Ptms \subseteq \Senm \subseteq \Hcsm \cong \diag(\Qtms_{\rm proc}) \subsetneq \Qtms_{\rm proc},
\label{eq:model-hierarchy}
\end{eqnarray}
\end{widetext}
where the inclusions concern process semantics rather than complexity-class separations. The last inclusion compares canonical state and channel descriptions. It is not an identity-preserving inclusion of categories with full quantum state spaces, since the diagonal objects have the identities specified in Eq.~\eqref{eq:diagonal-process-objects}. The final strictness is a statement about process structure, not a claim that $\mathsf{BPP} \neq \mathsf{BQP}$. Here $\diag(\Qtms_{\rm proc})$ is the complete positive stochastic sector for the fixed history bases and prefix-preserving transitions. A Hadamard sends a basis state to a coherent state outside that sector and witnesses the proper inclusion. General quantum computation uses noncommutative, coherence-bearing transformations outside it.

The result also clarifies the relation to stochastic-ensemble approaches. SEnM already allows substantial stochastic power, including non-Markovian dependence on ensemble history. Its ensemble-snapshot readout, however, leaves open what changes when complete classical routes are retained explicitly. HCSM makes those histories part of the model. It still has a global joint probability over histories and readouts, and therefore obeys the consistency conditions of ordinary probability theory, including readout inequalities of the type used in Ref.~\cite{Bang2018}. Appendix~\ref{app:senm-vs-hcsm} gives the distinction between SEnM's snapshot-ensemble semantics and HCSM's route-resolved path-measure semantics. The distinction concerns the chosen description, since an ensemble copy can itself be enlarged to retain its history. As explained in Sec.~\ref{subsec:maximality}, a readout-inequality violation tests whether the selected statistics share one joint law. The final strict inclusion in Eq.~\eqref{eq:model-hierarchy} instead follows from the diagonal-sector characterization. HCSM fills that classical sector, while coherent quantum processes extend beyond it. An enlarged classical encoding may reproduce their predictions by retaining additional quantum predictive information. The phase ledger quantifies the recovery requirement for such encodings.

HCSM already permits maximal classical branching, but it has no native representation of relational phase among branches. Quantum computation can hide a phase from one readout and later convert it into an observable probability through a change of basis. A positive stochastic model can reproduce this behavior only by storing information absent from the visible readout shadow. If the model must support all reachable circuits and unitary continuations, its hidden-state probability law becomes tomographically equivalent to the quantum state itself.

These conclusions have several limits. We do not prove a separation of complexity classes or rule out slow classical numerical simulation of finite quantum circuits. Corollary~\ref{cor:ledger-dimension} assumes a continuous encoding of a full state manifold into finite-dimensional parameters of the predictive law. Its dimension bound does not apply to one sampled history or directly count stored bits, and a countable gate orbit alone is insufficient to invoke it. The structural theorems also concern exact simulation. Section~\ref{subsec:continuation-resolution} adds quantitative finite-precision resolution bounds and numerical codebook tests. The approximate inverse-recovery bound follows from the phase-ledger identity in Theorem~\ref{thm:exact-phase-ledger} and Fano's inequality in Appendix~\ref{app:ledger}. The numerical codebook index bounds assume deterministic encoding and do not extend automatically to arbitrary stochastic encoders. Exact finite-gate implementations require compatible initial preparations and transition amplitudes. Arbitrary real countable kernels are covered by ideal quantum process semantics, but they do not automatically yield a finite-description algorithm.

The slogan $\Ptms\subseteq\Qtms$ is mathematically correct, but it leaves the boundary unspecified. Our result identifies it. Classical stochastic histories fill the fixed-basis dephased history sector of quantum computation. To represent coherent dynamics beyond visible probability closure, a model must retain the quantum predictive information in a hidden ledger or change the positive stochastic composition rule. More branching is not the missing resource. HCSM already supplies it. The remaining resource is the coherent manipulation of recoverable relations among branches.

\begin{acknowledgments}
This work was supported by the Korean ARPA-H Project through the Korea Health Industry Development Institute (KHIDI), funded by the Ministry of Health \& Welfare, Republic of Korea (RS-2025-25456722); the Ministry of Trade, Industry and Resources (MOTIR), Korea, under the project ``Industrial Technology Infrastructure Program'' (RS-2024-00466693). J.R. was supported by Korea Institute of Science and Technology Information (KISTI) (Grant No. K26L3M3C3) and by the National Research Foundation of Korea (NRF) (Grant No. RS-2023-NR119931). We acknowledge the Yonsei University Quantum Computing Project Group for providing support and access to the Quantum System One (Eagle Processor), which is operated at Yonsei University.
\end{acknowledgments}

{\bf Data Availability.}---The datasets generated and/or analyzed during the current study are not currently publicly available, as uses beyond non-commercial academic research may require institutional review under the policies of the authors’ affiliated institutions. The datasets are available from the corresponding author on reasonable request for academic research purposes.

\onecolumngrid
\appendix

\section{Classical normal form, embeddings, and readout consistency}\label{app:conventions}\label{app:normal-form}\label{app:ptm-senm}\label{app:readout}

This appendix collects the technical assumptions and readout details used in the main text. HCSM is a normal form for positive causal stochastic history laws; its precise diagonal quantum representation is developed in Appendix~\ref{app:diagonal-embedding}.

\subsection{Probability convention and scope of the normal form}
All finite-dimensional stochastic matrices in this manuscript use the column convention. If $p$ is a column vector, then
\begin{eqnarray}
 p'_y=\sum_x K(y|x)p_x .
\end{eqnarray}
Thus, a stochastic kernel satisfies
\begin{eqnarray}
 K(y|x) \geq 0,\quad \sum_yK(y|x)=1
\end{eqnarray}
for each input $x$. In this convention, a unitary Born shadow
\begin{eqnarray}
 B(\hat{U})_{yx}=\abs{\bra{y}\hat{U}\ket{x}}^2
\end{eqnarray}
is column-stochastic. Since $\hat{U}$ is unitary, it is also row-stochastic, hence doubly stochastic.

Definition~\ref{def:hcsm} uses countable spaces and a finite time horizon. This is enough for Turing-machine configurations under any fixed time bound. For countable spaces all probabilities are ordinary sums. We assume all configuration and readout sets are nonempty. A finite or countably supported ensemble-state process has the same description; the set of all infinite ensemble collections need not itself be countable. A standard Borel extension is possible using integrals and regular conditional kernels, but the countable Hilbert-space construction below is stated for the countable model.

Theorem~\ref{thm:hcsm-normal-form} already gives the normal-form proof. Its kernels are fixed by $P(h_t,x_{t+1})/P(h_t)$ whenever $P(h_t)>0$, and arbitrary on zero-probability histories. This freedom affects no path probability. The result is a factorization of a prescribed joint law; it does not by itself establish that separately specified intervention contexts share a single noninvasive joint law.

\begin{remark}[Meaning of maximality]
\label{app:rem:maximality}
Theorem~\ref{thm:hcsm-normal-form} does not say that HCSM is computationally omnipotent. It says that once a finite-horizon classical process is described by a positive joint probability distribution and causal conditioning, HCSM is a normal form for it. The maximality is relative to a fixed classical history sample space, causal conditioning, and affine aggregation over probability mass. Changing the sample space or the operational interface can represent additional information classically; the normal-form theorem alone does not exclude such representations. Postselection changes the update functional, while signed or complex representations change the weight structure.
\end{remark}

For a fixed time horizon, a probabilistic Turing machine has a countable configuration set $X_t$ at each time. A configuration includes the internal state, tape contents on the used region, and head position. Theorem~\ref{thm:PTMinHCSM} sets $K_{t+1}(x_{t+1}|h_t)=P_{\rm PTM}(x_{t+1}|x_t)$, so the path probability is
\begin{eqnarray}
 \mu_T(x_0,\ldots,x_T)=\mu_0(x_0)\prod_{t=0}^{T-1}P_{\rm PTM}(x_{t+1}|x_t),
\end{eqnarray}
which is exactly the PTM path measure. Hence PTM is not merely simulated at output level; its entire path law is reproduced.

\begin{proposition}[Affine branch aggregation is invariant under coarse graining]
\label{app:prop:affine-coarse}
Let $f:{\mathcal H}_T \to Y$ be any coarse-graining of final histories. For any event $E \subseteq Y$, the probability assigned by HCSM,
\begin{eqnarray}
{\mathcal A}_E(\mu_T)=\sum_{h_T:f(h_T) \in E}\mu_T(h_T),
\end{eqnarray}
is affine in $\mu_T$.
\end{proposition}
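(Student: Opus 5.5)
The plan is to show that $\mathcal A_E$ is a linear functional of the history distribution, restricted to the convex set of probability laws. First, I will write it as a pairing with an indicator function. Set $g_E(h_T)=\mathbf 1\{f(h_T)\in E\}$, which is a bounded function on $\mathcal H_T$ with values in $\{0,1\}$. Then
\begin{eqnarray}
\mathcal A_E(\mu_T)=\sum_{h_T\in\mathcal H_T}\mu_T(h_T)\,g_E(h_T).
\end{eqnarray}
This identity simply regroups the defining sum according to whether $f(h_T)\in E$. It is exactly the acceptance functional of Theorem~\ref{thm:hcsm-not-ntm}, with the accepting set taken to be the preimage $A=f^{-1}(E)\subseteq\mathcal H_T$.

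Next, I will take probability distributions $\mu,\nu$ on $\mathcal H_T$ and $0\le\lambda\le1$. Because every summand is nonnegative and $0\le g_E\le1$, the series $\sum_h\mu(h)g_E(h)$ and $\sum_h\nu(h)g_E(h)$ converge absolutely, with sums bounded by one. Termwise linearity of absolutely convergent sums, which is valid for countable $\mathcal H_T$ regardless of ordering, then gives
\begin{eqnarray}
\mathcal A_E(\lambda\mu+(1-\lambda)\nu)=\lambda\mathcal A_E(\mu)+(1-\lambda)\mathcal A_E(\nu).
\end{eqnarray}
The same computation extends to arbitrary finite convex combinations, and by dominated convergence also to countable ones. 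The argument is uniform in the choice of $f$ and $E$, which is the claimed invariance under coarse graining.

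The only point needing care is the countable case, where rearranging the sum must be justified. Nonnegativity together with the bound $\mathcal A_E\le 1$ settles this, so I expect no real obstacle. To tie the result to the contrast drawn in Theorem~\ref{thm:hcsm-not-ntm}, I will also remark that coarse graining merely replaces $A$ by $f^{-1}(E)$. The existential functional $\mathbf 1_{\mu(f^{-1}(E))>0}$ therefore remains non-affine under the same counterexample.
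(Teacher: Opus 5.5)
Your proof is correct and is essentially the paper's argument: the paper likewise sums $\lambda\mu(h)+(1-\lambda)\nu(h)$ over $\{h: f(h)\in E\}$ and splits it linearly, and your indicator rewriting and absolute-convergence remark simply make the countable case explicit. One small caveat concerns only your closing aside, not the proof itself: the existential counterexample needs $\emptyset\neq f^{-1}(E)\neq\mathcal H_T$, since otherwise no laws with $\mu(f^{-1}(E))=1$ and $\nu(f^{-1}(E))=0$ exist.
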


\begin{proof}---For distributions $\mu,\nu$ and $0\leq\lambda\leq 1$,
\begin{eqnarray}
{\mathcal A}_E(\lambda\mu + (1-\lambda)\nu) &=& \sum_{h:f(h)\in E}\bigl(\lambda\mu(h)+(1-\lambda)\nu(h)\bigr) \nonumber \\
	&=& \lambda\mathcal A_E(\mu) + (1-\lambda)\mathcal A_E(\nu).
\end{eqnarray}
\end{proof}

This is the same affine principle used in Theorem~\ref{thm:hcsm-not-ntm}: retaining histories does not grant existential branch selection. Postselection, when defined, instead uses
\begin{eqnarray}
\mu(h)\mapsto \mu(h | S)=\frac{\mu(h)\mathbf 1_{h \in S}}{\mu(S)}, \quad \mu(S)>0,
\end{eqnarray}
which is generally non-affine in the original distribution. Conditioning is legitimate for an external analyst, but HCSM does not grant it as a free internal operation that promotes a rare successful event to certainty. This is a distinction of branch semantics, not a computability separation from an NTM.

\subsection{History expansion and stochastic readouts}
HCSM is generally non-Markovian when viewed on the visible configuration $x_t$, because $K_{t+1}$ may depend on the full $h_t$. However, it is Markovian on the expanded history state.

\begin{proposition}[History Markovianization]
\label{app:prop:history-markov}
Every HCSM is a Markov chain on the history space $\mathcal{H}_t$ with time-dependent transition kernel
\begin{eqnarray}
\widehat{K}_{t+1}(h_{t+1} | h_t) =
\begin{cases}
 K_{t+1}(x_{t+1} | h_t), & h_{t+1}=(h_t,x_{t+1}), \\
 0, & \text{otherwise.}
\end{cases}
\end{eqnarray}
\end{proposition}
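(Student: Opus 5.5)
The plan is to check directly that the history process $(h_t)_{t=0}^T$ is a time-inhomogeneous Markov chain with the stated kernel $\widehat K_{t+1}$. There are three steps: show that $\widehat K_{t+1}$ is a genuine stochastic kernel, show that it reproduces the HCSM history law, and deduce the Markov property from the product form of that law.

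First, $\widehat K_{t+1}(\cdot\,|\,h_t)$ is a probability distribution on $\mathcal H_{t+1}$ for every $h_t\in\mathcal H_t$. Nonnegativity is inherited from $K_{t+1}$. For normalization, note that the map $x_{t+1}\mapsto(h_t,x_{t+1})$ is a bijection from $X_{t+1}$ onto the set of one-step extensions of $h_t$. Since $\widehat K_{t+1}$ vanishes off this set,
\[
\sum_{h_{t+1}}\widehat K_{t+1}(h_{t+1}|h_t)=\sum_{x_{t+1}}K_{t+1}(x_{t+1}|h_t)=1 .
\]
Countability ensures this is an ordinary sum of nonnegative terms.

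Second, the chain started from $\mu_0$ on $\mathcal H_0=X_0$ reproduces $\mu_t$. I would argue by induction on $t$ that its time-$t$ marginal $\nu_t$ equals the HCSM distribution $\mu_t$. The base case holds by construction. For the inductive step, write
\[
\nu_{t+1}(h_{t+1})=\sum_{h_t}\nu_t(h_t)\widehat K_{t+1}(h_{t+1}|h_t).
\]
Because $\widehat K_{t+1}$ is supported only on prefix extensions, this sum collapses to the single term $h_t=$ the prefix of $h_{t+1}$. The result is $\mu_t(h_t)K_{t+1}(x_{t+1}|h_t)$, which is Eq.~\eqref{eq:hcsm-update-v03}.

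Third, I would establish the Markov property. The joint law of $(h_0,\ldots,h_T)$ under the chain is
\[
\mu_0(h_0)\prod_{t}\widehat K_{t+1}(h_{t+1}|h_t).
\]
This is nonzero only on prefix-consistent sequences, which are determined by $h_T$. On those sequences it equals $\mu_T(h_T)$, matching the path law of Fig.~\ref{fig:hcsm-tree}. The product form then gives conditional independence of $h_{t+1}$ from $h_0,\ldots,h_{t-1}$ given $h_t$, whenever $P(h_t)>0$. This holds trivially here, since $h_t$ already contains all earlier prefixes.

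The only point needing care is the bookkeeping in the second step. One must be sure that no non-prefix predecessor contributes to $\nu_{t+1}$, and that zero-probability histories cause no ambiguity. Both follow from the support restriction built into $\widehat K_{t+1}$ and from the fact that arbitrary kernels on null histories do not affect any probability. I do not expect any genuine obstacle.
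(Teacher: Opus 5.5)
Your proposal is correct and follows the paper's argument: in both, the key step is that the prefix-support restriction of $\widehat K_{t+1}$ collapses the update sum to the unique predecessor of $h_{t+1}$, which recovers Eq.~\eqref{eq:hcsm-update-v03}. Your normalization check and your explicit verification of the path law and the Markov property are details the paper leaves implicit.
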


\begin{proof}---The update rule is
\begin{eqnarray}
 \mu_{t+1}(h_{t+1}) = \sum_{h_t}\widehat{K}_{t+1}(h_{t+1} | h_t)\mu_t(h_t).
\end{eqnarray}
Because only the unique predecessor of $h_{t+1}$ contributes, this is exactly the HCSM recursion
\begin{eqnarray}
 \mu_{t+1}(h_t,x_{t+1})=\mu_t(h_t)K_{t+1}(x_{t+1} | h_t).
\end{eqnarray}
\end{proof}

This observation is useful because it clarifies the phrase ``maximal positive causal history model.'' HCSM is the ordinary Markov description after the state variable has been enlarged to include the entire past.

The main text uses readout maps $R_t:{\mathcal H}_t\to M_t$. It is sometimes convenient, especially when comparing with SEnM, to allow a readout kernel
\begin{eqnarray}
Q_t(m_t | h_t) \geq 0, \quad \sum_{m_t}Q_t(m_t | h_t)=1.
\end{eqnarray}
For countable readout sets, this does not strengthen the model.

\begin{proposition}[Determinization of stochastic readouts]
\label{app:prop:readout-determinization}
Every countable HCSM equipped with countable stochastic readouts and a specified causal joint law for their sampling can be converted into a countable HCSM with deterministic readout maps. The conversion preserves the joint configuration and readout law.
\end{proposition}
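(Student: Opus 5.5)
The idea is to absorb the readout samples into the configuration itself, so that each readout becomes the projection onto a coordinate. First I will make precise what ``a specified causal joint law for their sampling'' means. Readouts are drawn sequentially: at time $t$, after the configuration history $h_t$ and the earlier readouts $m_0,\ldots,m_{t-1}$, the readout $m_t$ is sampled from a kernel $Q_t(m_t|h_t,m_{<t})$. The configuration update may then also depend on the readout history, through $K_{t+1}(x_{t+1}|h_t,m_{\le t})$. The case where readouts feed back into nothing is the special case in which both kernels ignore the $m$-arguments. With these kernels the chain rule gives the joint law
\begin{eqnarray}
P(x_0,m_0,\ldots,x_T,m_T)=\mu_0(x_0)\prod_{t=0}^{T}Q_t(m_t|h_t,m_{<t})\prod_{t=0}^{T-1}K_{t+1}(x_{t+1}|h_t,m_{\le t}).
\end{eqnarray}

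Next I will build the enlarged HCSM. Its configuration sets are $\tilde X_t:=X_t\times M_t$, which are countable because both factors are. Write $\tilde h_t=((x_0,m_0),\ldots,(x_t,m_t))$. The initial law is $\tilde\mu_0(x_0,m_0):=\mu_0(x_0)Q_0(m_0|x_0)$. The kernels are
\begin{eqnarray}
\tilde K_{t+1}((x_{t+1},m_{t+1})|\tilde h_t):=K_{t+1}(x_{t+1}|h_t,m_{\le t})\,Q_{t+1}(m_{t+1}|h_{t+1},m_{\le t}).
\end{eqnarray}
Each $\tilde K_{t+1}$ is nonnegative. Its normalization follows by summing first over $m_{t+1}$ and then over $x_{t+1}$. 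Causality holds because $\tilde K_{t+1}$ depends only on $\tilde h_t$ and on the newly sampled $x_{t+1}$, which it generates itself. The readout maps are deterministic coordinate projections, $\tilde R_t(\tilde h_t):=m_t$. The accepting set is pulled back along the map that forgets readouts, $\tilde A:=\{\tilde h_T: (x_0,\ldots,x_T)\in A\}$; if acceptance was meant to depend on readouts, I will instead take the corresponding subset of $\tilde{\mathcal H}_T$.

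To finish, I will apply the HCSM update rule~\eqref{eq:hcsm-update-v03} and induct on $t$ to get $\tilde\mu_t(\tilde h_t)=P(x_{\le t},m_{\le t})$. The projections $\tilde R_t$ then reproduce the joint readout law exactly, and marginalizing over the $m$-coordinates returns the original configuration law $\mu_T$. As a consistency check, Theorem~\ref{thm:global-readout-joint} applied to the new machine yields the same readout joint distribution.

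The main obstacle is the interpretive step: making sure that the sampling convention is fixed by ``the specified causal joint law'' and that it respects nonanticipation. In particular, $m_t$ must not depend on future configurations. Without this, the augmented kernel would be acausal. If someone specifies only a joint law $P(x_{\le T},m_{\le T})$ without kernels, I will fall back on Theorem~\ref{thm:hcsm-normal-form}, which factorizes that law on the interleaved sequence $(x_0,m_0,x_1,m_1,\ldots)$. Grouping each pair $(x_t,m_t)$ into one configuration then gives the same construction. In that case causality of the readouts is exactly the assumption that the chosen interleaving order is admissible.
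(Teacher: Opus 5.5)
Your proof is correct and follows essentially the paper's route. You enlarge the configuration to $\widetilde{x}_t=(x_t,m_t)$, multiply the configuration kernel by the readout kernel, and read out the new coordinate by projection. For correlated readouts given only as a joint law, you factor the paired process causally with Theorem~\ref{thm:hcsm-normal-form}, as the paper does.

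Your $m_{<t}$-dependent kernels just interpolate between the paper's conditionally independent case and that fallback. One small correction: the grouped factorization is causal in the HCSM sense for any countable joint law, so your admissibility caveat ("otherwise the augmented kernel would be acausal") matters only if you also require the original kernels $K_{t+1}$ to survive unchanged.
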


\begin{proof}---Enlarge the configuration to $\widetilde{x}_t=(x_t,m_t)$, which remains countable. For conditionally independent readout noises, initialize and update by
\begin{eqnarray}
\widetilde\mu_0(x_0,m_0)&=&\mu_0(x_0)Q_0(m_0|x_0),\\
\widetilde K_{t+1}(x_{t+1},m_{t+1}|\widetilde h_t)&=&K_{t+1}(x_{t+1}|h_t)Q_{t+1}(m_{t+1}|h_t,x_{t+1}).
\end{eqnarray}
Then define the deterministic readout by
\begin{eqnarray}
\widetilde R_t(\widetilde h_t)=m_t.
\end{eqnarray}
The marginal law of $(m_0,\ldots,m_T)$ is exactly the same as in the original process. For correlated readout noises, single-time kernels alone do not specify a joint law. Given such a law, factor the countable joint process $(x_t,m_t)_{t=0}^T$ causally by Theorem~\ref{thm:hcsm-normal-form}; the augmented transition then may depend on the earlier readouts. No continuous random seed is needed in the configuration space. Thus deterministic readouts are a normal-form assumption for the specified countable joint process.
\end{proof}

\subsection{Joint readout distributions and entropy inequalities}
For deterministic readouts $R_t:{\mathcal H}_t \to M_t$, the induced joint law is
\begin{eqnarray}
P(m_0,\ldots,m_T)=\sum_{h_T}\mu_T(h_T)\prod_{t=0}^T\mathbf 1\{R_t(h_t)=m_t\}.
\label{eq:app-hcsm-direct-readout-joint}
\end{eqnarray}
For stochastic readouts $Q_t(m_t | h_t)$ it becomes
\begin{eqnarray}
P(m_0,\ldots,m_T)=\sum_{h_T}\mu_T(h_T)\prod_{t=0}^TQ_t(m_t | h_t),
\end{eqnarray}
provided the readout noises are conditionally independent given the history. Without conditional independence, a specified countable joint readout law can be absorbed into the enlarged history by Proposition~\ref{app:prop:readout-determinization}.

Let $M_0,\ldots,M_T$ be finite-valued HCSM readouts, or countable readouts for which the displayed entropies are finite. Since a global joint law exists, Shannon's chain rule gives
\begin{eqnarray}
H(M_1,\ldots,M_T | M_0)=\sum_{t=1}^T H(M_t | M_0,\ldots,M_{t-1}).
\end{eqnarray}
Moreover,
\begin{eqnarray}
H(M_T | M_0)\leq H(M_1,\ldots,M_T | M_0)
\end{eqnarray}
because adding variables cannot decrease joint conditional entropy. Finally,
\begin{eqnarray}
H(M_t | M_0,\ldots,M_{t-1}) \leq H(M_t | M_{t-1})
\end{eqnarray}
by monotonicity of conditional entropy under additional conditioning. Therefore,
\begin{eqnarray}
H(M_T | M_0) \leq \sum_{t=1}^T H(M_t | M_{t-1}).
\end{eqnarray}
This is the readout inequality in Corollary~\ref{cor:hcsm-readout-inequality} and in the SEnM analysis, now stated for the history-resolved HCSM model.

The inequality can be strict at either step. The first step,
\begin{eqnarray}
 H(M_T | M_0) \leq H(M_1,\ldots,M_T | M_0),
\end{eqnarray}
is equality when $(M_1,\ldots,M_{T-1})$ is a deterministic function of $(M_0,M_T)$ up to null sets. The second step is equality for time $t\geq2$ when
\begin{eqnarray}
 I(M_t;M_0,\ldots,M_{t-2} | M_{t-1})=0.
\end{eqnarray}
Thus, the equality throughout corresponds to a special Markov-like readout situation. HCSM itself need not be Markovian; the inequality is much more general.

The logical role of the readout inequality is analogous to Bell- or Leggett-Garg-type consistency conditions~\cite{Fine1982,LeggettGarg1985}: if one can assign all readouts to a single classical joint probability space, then certain inequalities follow. Quantum processes can violate such conditions when probabilities from different measurement contexts are combined as if they came from one noninvasive joint distribution. HCSM retains the same global-joint-probability property for one specified readout process. An actual sequential quantum measurement record also has a joint law and satisfies this inequality. A violation obtained by combining different contexts tests the additional common-joint-law assumption; it does not by itself rule out a context-dependent classical simulation. This qualification is independent of copy indistinguishability.

\section{SEnM and HCSM: ensemble snapshots and history laws}\label{app:senm-vs-hcsm}

This section spells out the conceptual and mathematical difference between the displayed SEnM and HCSM descriptions. HCSM makes full classical histories available explicitly, so an argument formulated for HCSM need not rely on whether individual ensemble copies can be addressed. The resulting comparison must distinguish loss of information in snapshot marginals from the expressive power of an ensemble state that may itself encode memory.

\subsection{Ensemble-snapshot and history-resolved readout descriptions}
In the SEnM framework of Ref.~\cite{Bang2018}, the state at computational step $j$ is an ensemble collection
\begin{eqnarray}
s_j=(s_1,s_2,\ldots)_j,
\end{eqnarray}
where each $s_k$ is the state of the $k$th probabilistic-machine copy. A readout symbol is assigned to the state of a copy by a map $m(s_k) \in M$, and the ensemble readout probability at step $j$ is the empirical proportion
\begin{eqnarray}
P(m_j | s_j)=\lim_{N \to \infty}\frac{1}{N}\sum_{k=1}^{N}\delta_{m(s_k),m_j} .
\label{eq:app-senm-snapshot-readout}
\end{eqnarray}
The distinguishing convention in this comparison is that the copies are \emph{indistinguishable at the readout-route level}. One can read the snapshot distribution in Eq.~\eqref{eq:app-senm-snapshot-readout}, but one does not have an operationally addressable object of the form
\begin{eqnarray}
k \longmapsto \bigl(m_0(s_k),m_1(s_k),\ldots,m_T(s_k)\bigr)
\end{eqnarray}
for a labelled copy $k$. Consequently, the route statement
\begin{eqnarray}
m_{j-1}(s_k) \longrightarrow m_j(s_k)
\end{eqnarray}
is not a primitive observable in SEnM. SEnM may use a non-Markovian transition rule for ensemble collections,
\begin{eqnarray}
C(s_j|s_{j-1},\ldots,s_0),
\end{eqnarray}
and hence it can define a joint law over ensemble snapshots,
\begin{eqnarray}
P(s_0,\ldots,s_j) = C(s_j|s_{j-1},\ldots,s_0) \cdots C(s_1|s_0)P(s_0).
 \label{eq:app-senm-ensemble-joint}
\end{eqnarray}
But the corresponding readout joint used in the readout-inequality construction is assembled from snapshot readout probabilities,
\begin{eqnarray}
 P(m_0,\ldots,m_j) = \sum_{s_0,\ldots,s_j}P(s_0,\ldots,s_j)\prod_{\ell=0}^{j}P(m_\ell | s_\ell).
 \label{eq:app-senm-grand-readout}
\end{eqnarray}
Eq.~\eqref{eq:app-senm-grand-readout} is a legitimate classical joint distribution over readout variables, but it is not obtained by tracking the readout route of one labelled copy through time. It is a joint distribution induced by a sequence of ensemble snapshots.

Thus, the specified SEnM readout interface is a snapshot-ensemble interface. It can be very strong: it can have many copies, large ensembles, and non-Markovian dependence on the past ensemble history. Still, its observable readout structure is built from ensemble-level proportions at each time, not from labelled route records. This is a statement about that interface: an ensemble state can in principle encode past information, and non-Markovian ensemble transitions already retain an ensemble-history law. The infinite-ensemble limit in Eq.~\eqref{eq:app-senm-snapshot-readout} is assumed to exist; for a finite ensemble the corresponding normalized finite sum defines the readout, including the singleton case. The ensemble-state sums use finite or countably supported laws.

HCSM changes the primitive sample space. At time $t$ the elementary object is no longer only a current configuration or an ensemble snapshot. It is the full history
\begin{eqnarray}
h_t=(x_0,x_1,\ldots,x_t) \in {\mathcal H}_t.
\end{eqnarray}
The state of the model is a probability measure over such histories,
\begin{eqnarray}
\mu_t(h_t),
\end{eqnarray}
and the causal update is
\begin{eqnarray}
\mu_{t+1}(h_t, x_{t+1})=\mu_t(h_t)K_{t+1}(x_{t+1} | h_t).
\end{eqnarray}
A readout at time $t$ is a function of the retained history,
\begin{eqnarray}
 R_t:{\mathcal H}_t \to M_t.
\end{eqnarray}
Therefore, the entire readout sequence is a function of the final history $h_T$, and the joint law is the direct path-measure marginal in Eq.~\eqref{eq:app-hcsm-direct-readout-joint}.

This is the distinction between the two displayed descriptions. SEnM constructs readout statistics from ensemble snapshots. HCSM places probability mass directly on trajectories. In the metaphor used in the main text, SEnM counts how many copies display each readout symbol at each time, while HCSM keeps the complete ledger of possible routes.

\subsection{Snapshot information loss and the scope of the embedding}
The loss incurred by retaining only one-time readout snapshots is described by a forgetful map. Given an HCSM history law $\mu_T$, one may discard route information and keep only one-time readout snapshots,
\begin{eqnarray}
{\mathcal F}_t(\mu_T)(m)=\sum_{h_T:R_t(h_t)=m}\mu_T(h_T), \quad t=0,\ldots,T.
\label{eq:app-forgetful-snapshot-map}
\end{eqnarray}
This collapse is generally many-to-one.

\begin{proposition}[Route information is strictly stronger than snapshot information]
\label{app:prop:collapse-not-injective}
The map from a history-resolved readout law to its time-slice readout marginals is not injective. Consequently, snapshot readout data do not determine route-wise readout transitions.
\end{proposition}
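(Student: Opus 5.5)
To prove that the forgetful map in Eq.~\eqref{eq:app-forgetful-snapshot-map} is not injective, I would exhibit an explicit counterexample. It suffices to find two HCSM history laws that have identical one-time readout marginals at every time but different route-wise readout transitions.

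The simplest setting is $T=1$ with $X_0=X_1=M_0=M_1=\{0,1\}$, identity readouts $R_t(h_t)=x_t$, and uniform initial law $\mu_0(0)=\mu_0(1)=1/2$. I would compare two kernels.
\begin{itemize}
\item The first HCSM $\mathcal M$ uses the identity kernel $K_1(x_1|x_0)=\delta_{x_1,x_0}$. Its history law is $\mu_1(0,0)=\mu_1(1,1)=1/2$.
\item The second HCSM $\mathcal M'$ uses the flip kernel $K_1'(x_1|x_0)=\delta_{x_1,1-x_0}$. Its history law is $\mu_1'(0,1)=\mu_1'(1,0)=1/2$.
\end{itemize}
A fully mixing kernel $K_1''(x_1|x_0)=1/2$ would work equally well as a third example.

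Next I would compute the snapshot marginals $\mathcal F_0$ and $\mathcal F_1$ for both laws. In each case they are uniform at both times, so $\mathcal F_t(\mu_1)=\mathcal F_t(\mu_1')$ for $t=0,1$. The history laws themselves differ: $\mu_1(0,0)=1/2$ while $\mu_1'(0,0)=0$. This establishes non-injectivity.

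For the consequence, I would read off the conditional transitions. In $\mathcal M$ one has $P(M_1=0|M_0=0)=1$, while in $\mathcal M'$ this probability is $0$, even though the snapshot data coincide. Hence no function of the snapshot tuple $(\mathcal F_0,\ldots,\mathcal F_T)$ can return the route-wise transition $P(m_t|m_{t-1})$.

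I would also add a remark tying this to the readout inequality. The two examples give $H(M_1|M_0)=0$ in both cases, whereas the mixing kernel gives $H(M_1|M_0)=1$. So even entropic route quantities are not determined by the snapshots.

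There is no real obstacle here. The only care needed is to confirm that both constructions are genuine HCSMs: the kernels are nonnegative and normalized, and the readouts are deterministic maps on histories. This ensures the counterexample lies within Definition~\ref{def:hcsm}. For general $T$, the same two-point pattern can be embedded at any pair of consecutive times by padding the remaining steps with identity kernels.
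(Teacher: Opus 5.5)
Your proposal is correct and matches the paper's proof: the paper uses exactly this $T=1$ stay/flip pair, $\mu^{\mathrm{stay}}(0,0)=\mu^{\mathrm{stay}}(1,1)=1/2$ and $\mu^{\mathrm{flip}}(0,1)=\mu^{\mathrm{flip}}(1,0)=1/2$, which have identical uniform one-time marginals but opposite route structures ($P(M_1=M_0)=1$ versus $P(M_1=1-M_0)=1$). Your explicit kernel realization, the mixing-kernel entropic remark, and the padding to general $T$ are harmless extras.
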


\begin{proof}---Let $T=1$ and let the readout alphabet be $\{0,1\}$. Consider two probability laws on readout histories:
\begin{eqnarray}
\mu^{\mathrm{stay}}(0,0) = \frac12, \quad \mu^{\mathrm{stay}}(1,1)=\frac12, \quad \mu^{\mathrm{flip}}(0,1) = \frac12, \quad \mu^{\mathrm{flip}}(1,0)=\frac12,
\end{eqnarray}
with all other readout histories assigned probability zero. Both laws have the same one-time marginals:
\begin{eqnarray}
 P(M_0=0)=P(M_0=1)=P(M_1=0)=P(M_1=1)=\frac{1}{2}.
\end{eqnarray}
Thus, they give the same snapshot readout distributions at both times. However, their route structures are opposite. Under $\mu^{\mathrm{stay}}$,
\begin{eqnarray}
 P(M_1=M_0)=1,
\end{eqnarray}
whereas under $\mu^{\mathrm{flip}}$,
\begin{eqnarray}
 P(M_1=1-M_0)=1.
\end{eqnarray}
Therefore, the time-slice snapshots do not determine whether copies persist or flip. The missing object is exactly the route-wise joint information retained by HCSM and absent from a pure snapshot summary.
\end{proof}

Proposition~\ref{app:prop:collapse-not-injective} should not be misread as saying that SEnM has no joint probability at all. SEnM has the ensemble-snapshot joint law in Eq.~\eqref{eq:app-senm-ensemble-joint} and, under the assumptions of Ref.~\cite{Bang2018}, the induced readout joint in Eq.~\eqref{eq:app-senm-grand-readout}. The point is more specific: one-time marginals do not determine temporal correlations. This does not prove that an SEnM ensemble-history law cannot encode the same correlations. HCSM exposes route information directly by taking histories as sample points.

A finite or countably supported SEnM process in the earlier readout-inequality work is described by ensemble collections $s_j$ and conditional probabilities
\begin{eqnarray}
C(s_j | s_{j-1},\ldots,s_0),
\end{eqnarray}
which generate a joint distribution over ensemble histories. Theorem~\ref{thm:SEnMinHCSM} embeds this process by setting $X_j$ equal to the possible ensemble collections and
\begin{eqnarray}
K_j(s_j | s_0,\ldots,s_{j-1})=C(s_j | s_{j-1},\ldots,s_0).
\end{eqnarray}
If the SEnM readout at step $j$ is defined by an ensemble proportion $P(m_j | s_j)$ rather than a deterministic map, Proposition~\ref{app:prop:readout-determinization} converts it to a deterministic readout by adding the sampled readout to the configuration. Thus, the HCSM inclusion remains valid for either deterministic or stochastic readout conventions.

The embedding preserves the specified joint law over ensemble collections, including its non-Markovian dependence. Consequently, it is not correct to call every SEnM diagonal representation merely snapshot-faithful: using ensemble histories gives the same history-faithful construction as for any HCSM. The difference is that HCSM permits arbitrary countable classical configurations and exposes route-dependent readouts directly. Whether this yields a strict inclusion of model classes depends on which ensemble encodings and readout operations are admitted. The noninjective map above establishes information loss for a pure one-time summary, not an unconditional strict SEnM--HCSM separation.

The SEnM/HCSM contrast at the specified interfaces can be summarized in below Tab.~\ref{tab:senm-hcsm-comparison}.
\begin{table}[h]
\centering
\begin{adjustbox}{width=0.95\textwidth}
\begin{tabular}{lll}
\toprule
\parbox[t]{0.18\linewidth}{Aspect} & \parbox[t]{0.45\linewidth}{SEnM} & \parbox[t]{0.40\linewidth}{HCSM}\\
\midrule
\parbox[t]{0.18\linewidth}{Primitive state} & \parbox[t]{0.40\linewidth}{Ensemble snapshot $s_j=(s_1,s_2,\ldots)_j$} & \parbox[t]{0.40\linewidth}{Full history $h_t=(x_0,\ldots,x_t)$}\\[0.7em]
\parbox[t]{0.18\linewidth}{Readout object} & \parbox[t]{0.40\linewidth}{Ensemble proportion $P(m_j|s_j)$} & \parbox[t]{0.40\linewidth}{History readout $R_t(h_t)$}\\[0.7em]
\parbox[t]{0.18\linewidth}{Route tracking} & \parbox[t]{0.40\linewidth}{Labelled copy routes are not operationally primitive} & \parbox[t]{0.40\linewidth}{Route histories are the sample points}\\[0.7em]
\parbox[t]{0.18\linewidth}{Memory} & \parbox[t]{0.40\linewidth}{Non-Markovian dependence on ensemble snapshots} & \parbox[t]{0.40\linewidth}{Non-Markovian dependence on full branch histories}\\[0.7em]
\parbox[t]{0.18\linewidth}{Joint readout law} & \parbox[t]{0.40\linewidth}{Constructed from ensemble-history law and snapshot readout probabilities} & \parbox[t]{0.40\linewidth}{Direct marginal of the final history measure}\\[0.7em]
\parbox[t]{0.18\linewidth}{Relation to NTM} & \parbox[t]{0.40\linewidth}{Copy labelling was regarded as an NTM-like idealization in Ref.~\cite{Bang2018}} & \parbox[t]{0.40\linewidth}{Copy/history resolution is allowed, but existential branch selection is still forbidden}\\[0.7em]
\parbox[t]{0.18\linewidth}{Main limitation} & \parbox[t]{0.40\linewidth}{Labelled copy routes are not primitive in the specified readout interface} & \parbox[t]{0.40\linewidth}{Retains classical route histories; the weights are positive probabilities rather than phase amplitudes}\\[0.7em]
\parbox[t]{0.18\linewidth}{Role in this paper} & \parbox[t]{0.40\linewidth}{Starting point and motivation} & \parbox[t]{0.40\linewidth}{Normal form for positive causal classical history laws}\\
\bottomrule
\end{tabular}
\end{adjustbox}
\caption{SEnM and HCSM use different primitive descriptions at the displayed interfaces. SEnM is naturally read as an ensemble-snapshot model: it observes readout proportions at each ensemble state. HCSM is a path-measure model: it assigns probability mass to full histories. Explicit history resolution removes the copy-indistinguishability objection while preserving ordinary stochastic, affine branch aggregation. The table does not assert that an enriched ensemble encoding cannot store histories or that the two unrestricted process classes are strictly separated.}
\label{tab:senm-hcsm-comparison}
\end{table}

Thus, the slogan ``HCSM is stronger than SEnM'' refers here to explicit access to arbitrary classical histories at the specified interface. HCSM permits causal dependence on those histories and still aggregates branches by ordinary probability. The diagonal embedding therefore applies independently of the indistinguishable-copy convention. Its obstruction to reproducing coherent composition concerns discarding phase information at that interface; it does not prohibit a classical state from recording a description of a quantum state.

\section{Diagonal quantum representation and history-preserving processes}\label{app:diagonal-embedding}\label{app:dephased}

Throughout this appendix $\Qtms_{\rm proc}$ denotes ideal quantum-process semantics, allowing CPTP maps on the stated finite-dimensional or separable Hilbert spaces. It is distinct from the algorithmic QTM model, whose machine description and admissible amplitudes require effective specification. In countably infinite dimensions channels act on trace-class operators; the associated Heisenberg-picture maps are normal.

\subsection{Kraus construction and unitary dilation}
Let ${\mathcal H}_t$ be the HCSM history space and let ${\mathscr H}_t=\ell^2({\mathcal H}_t)$ with basis $\{\ket{h_t}\}$. The embedding maps
\begin{eqnarray}
\mu_t \longmapsto \hat{\rho}_t=\sum_{h_t}\mu_t(h_t)\ket{h_t}\bra{h_t}.
\end{eqnarray}
For a transition kernel $K_{t+1}(x | h)$ define
\begin{eqnarray}
\hat{A}_{h,x}=\sqrt{K_{t+1}(x | h)}\ket{h,x}\bra{h}.
\end{eqnarray}
The trace-preservation condition is
\begin{eqnarray}
 \sum_{h,x}\hat{A}_{h,x}^{\dagger}\hat{A}_{h,x} = \sum_{h,x}K_{t+1}(x|h)\ket{h}\bra{h} = \sum_h\ket{h}\bra{h}=\hat{\mathds{1}}.
\end{eqnarray}
For countably infinite history spaces the operator sum above converges strongly, and the Kraus sum below converges in trace norm on trace-class inputs. The diagonal density-operator series also converges in trace norm. Complete positivity follows from the Kraus form. Hence,
\begin{eqnarray}
{\mathcal E}_{t+1}(\hat{\rho})=\sum_{h,x}\hat{A}_{h,x}\hat{\rho} \hat{A}_{h,x}^{\dagger}
\end{eqnarray}
is CPTP. On embedded diagonal states,
\begin{eqnarray}
{\mathcal E}_{t+1}(\hat{\rho}_t) = \sum_{h,x}\mu_t(h)K_{t+1}(x | h)\ket{h,x}\bra{h,x} = \hat{\rho}_{t+1}.
\end{eqnarray}
Thus, every HCSM transition is exactly a diagonal quantum channel.

If one insists on unitary dynamics on a larger space, introduce an environment register with orthonormal basis $\{\ket{g_{h,x}}\}$ and define an isometry on basis states by
\begin{eqnarray}
\hat{V}_{t+1}\ket{h}\ket{0}_{E} = \sum_x\sqrt{K_{t+1}(x | h)}\ket{h,x}\ket{g_{h,x}}_E.
\end{eqnarray}
The output vectors for different $h$ are orthonormal because the pairs $(h,x)$ and the garbage labels are orthogonal. After embedding input and output into a common sufficiently enlarged Hilbert space, this isometry extends to a unitary; countably infinite spaces may require additional ancillary dimensions. Tracing out the environment gives exactly the Kraus channel above. If the initial classical distribution is purified, the entire stochastic process can be represented unitarily with an environment that stores the branch labels.

\begin{remark}[Exact probabilities and computability]
\label{app:rem:computability}
For arbitrary real stochastic kernels, the construction uses amplitudes $\sqrt{K(x|h)}$ and possibly infinitely many history basis states. It is an exact representation in the ideal quantum-process semantics $\Qtms_{\rm proc}$, not a claim of exact finite implementation by an algorithmic QTM. Computable probabilities alone do not guarantee exact synthesis over a fixed finite gate set. An algorithmic implementation additionally requires an effective description, suitable state preparation, and a specified precision and resource model. Finite-dimensional, effectively specified instances can be treated by appropriate approximation schemes; unrestricted countable kernels need further effective truncation or implementation assumptions.
\end{remark}

\begin{corollary}[Effective bounded-time PTM simulation]
\label{cor:effective-PTM-QTM}
Let a PTM have a finite machine description and fair-coin transitions. For every fixed finite input and finite time bound, its full bounded-time history law has an effectively constructible, exact quantum realization over a fixed finite gate library containing Hadamard, Toffoli, and bit-flip gates. With these compatible elementary operations, the realization is implementable by an algorithmic QTM. Thus the usual inclusion $\Ptms\subseteq\Qtms$ is recovered in this effective machine setting at every finite horizon.
\end{corollary}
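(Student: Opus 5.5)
The approach is to compile the bounded-time PTM into a reversible circuit over the fair-coin bits, and then realize each fair coin exactly with a Hadamard on a fresh ancilla. I would proceed in four steps.

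\textbf{Step 1 (bounded configuration space).} Fix the input $x$ and the time bound $T$. The head moves at most $T$ cells, so every configuration reachable within $T$ steps fits in a finite register. This register holds the internal state, a window of $2T+|x|+1$ tape cells, and the head position in binary. Its size is polynomial in $T$ and $|x|$ and is computable from the finite machine description. The PTM step is then a deterministic function $\delta(c,b)$ of the current configuration $c$ and a fair coin bit $b$. It is deterministic because fair-coin transitions mean each step chooses between two transition-table entries according to one uniformly random bit. Any finite dyadic branching can be handled the same way by reading several coin bits.

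\textbf{Step 2 (reversible compilation).} I would compute $\delta$ reversibly in the standard Bennett manner, with Toffoli and bit-flip (NOT) gates. The map
\begin{eqnarray}
\ket{c}\ket{b}\ket{0} \mapsto \ket{c}\ket{b}\ket{\delta(c,b)}
\end{eqnarray}
is a classical reversible circuit. Toffoli plus NOT with constant ancillas is universal for such circuits, and the circuit is effectively constructible from the transition table. Unlike the proof of Theorem~\ref{thm:HCSMtoQTM}, I would keep the previous configuration register rather than uncompute it. Each step therefore writes into a fresh register, and after $T$ steps the registers hold the whole history $(c_0,c_1,\ldots,c_T)$ together with the coin bits $(b_1,\ldots,b_T)$. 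This is exactly the history-resolved record of Definition~\ref{def:hcsm}.

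\textbf{Step 3 (exact coins).} Each coin register is prepared as $\ket{0}$ and a Hadamard is applied, giving $(\ket0+\ket1)/\sqrt2$. Because every coin register is used only as a control afterwards, the resulting global state is
\begin{eqnarray}
\ket{\Psi} = 2^{-T/2}\sum_{b\in\{0,1\}^T}\ket{b}\ket{h(b)},
\end{eqnarray}
where $h(b)$ denotes the history produced by the coin string $b$. The coin strings are orthogonal, so no two distinct $b$ interfere. The reduced state on the history registers is the diagonal state
\begin{eqnarray}
\sum_b 2^{-T}\ketbra{h(b)}{h(b)} = \sum_h \mu_T(h)\ketbra{h}{h},
\end{eqnarray}
which matches Eq.~\eqref{eq:diag-state-v03}. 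A computational-basis readout then reproduces the PTM history law exactly, including the acceptance probability. Exactness holds because every amplitude is $\pm 2^{-T/2}$ and only Hadamard and classical gates are used, so no approximation is needed.

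\textbf{Step 4 (machine-level implementation).} I would argue that the entire gate sequence is produced by a classical algorithm from $(M,x,T)$. Such a uniform family over a fixed gate library, containing Hadamard, Toffoli and bit-flip, is implementable by an algorithmic QTM with those elementary operations. This is standard QTM--circuit equivalence, and the amplitudes $1/\sqrt2$ and $1$ are admissible.

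The main obstacle is Step 4: making the QTM claim precise rather than only the uniform-circuit claim. I would handle it by stating the conclusion relative to a QTM whose admissible local transitions include these gates, and by citing uniform circuit simulation. The remaining care is in Step 3: the coin registers must never be reused or uncomputed, since reusing or uncomputing them would cause the branches to interfere.
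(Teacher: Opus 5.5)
Your proposal is correct and follows the paper's proof essentially step for step. Both arguments prepare each coin bit with a Hadamard, compute each transition reversibly with Toffoli/NOT while keeping the previous configuration and the coins, and use the orthogonality of distinct coin records to rule out interference, so that the reduced history state is diagonal with the PTM path law; both then appeal to standard reversible/uniform-circuit simulation for the QTM claim. One small correction: your contrast ``unlike the proof of Theorem~\ref{thm:HCSMtoQTM}'' is inaccurate, because that Kraus construction also retains the previous history (its Kraus operators map $\ket{h}$ to $\ket{h,x}$); this does not affect your argument.
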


\begin{proof}---Prepare each required random bit with a Hadamard gate and reversibly compute the corresponding classical transition, retaining the previous configuration, the random bits, and any computational garbage. A finite time bound gives a finite construction, and the finite PTM description determines it effectively. Record each successive configuration in a history register. Distinct random strings retain orthogonal records even when they produce the same classical history, so their amplitudes do not interfere upon discarding the auxiliary registers. Measuring the history register therefore gives exactly the PTM path probabilities. Standard reversible simulation implements the same construction with the stated compatible quantum operations.
\end{proof}

\subsection{The dephasing-fixed correspondence and its identity maps} 

The main text characterizes dephasing-fixed channels by $\Phi=D_Y\Phi D_X$, where $D_X$ and $D_Y$ are the chosen basis-dephasing maps. The correspondence with stochastic kernels is explicit as follows.

Associate to a countable classical set $X$ the Hilbert space $\ell^2(X)$ together with the distinguished basis $\{\ket{x}\}_{x \in X}$. A probability distribution $p \in \Delta(X)$ corresponds to the diagonal density matrix
\begin{eqnarray}
\hat{\rho}_p=\sum_x p_x \ketbra{x}{x}.
\end{eqnarray}
A stochastic kernel $K:Y \leftarrow X$ corresponds to the dephasing-fixed channel
\begin{eqnarray}
\Phi_K(\hat{\rho})=\sum_{x,y}K(y|x)\ketbra{y}{x}\hat{\rho}\ketbra{x}{y}.
\end{eqnarray}
This channel satisfies
\begin{eqnarray}
\Phi_K(\hat{\rho}_p)=\hat{\rho}_{Kp}.
\end{eqnarray}
Composition is preserved:
\begin{eqnarray}
\Phi_L(\Phi_K(\hat{\rho}_p)) = \hat{\rho}_{L(Kp)}=\hat{\rho}_{(LK)p}=\Phi_{LK}(\hat{\rho}_p).
\end{eqnarray}
Since both sides are dephasing-fixed channels determined by their action on basis states, $\Phi_L\circ\Phi_K=\Phi_{LK}$ on all trace-class inputs. The identity stochastic kernel corresponds to $D_X$. More formally, the objects and morphisms in Eq.~\eqref{eq:diagonal-process-objects} satisfy
\begin{eqnarray}
D_Z(\Psi \circ \Phi)D_X = \Psi \circ \Phi, \quad D_Y \circ \Phi = \Phi=\Phi \circ D_X
\label{eq:diagonal-category-identities}
\end{eqnarray}
whenever $\Phi=D_Y \Phi D_X$ and $\Psi=D_Z\Psi D_Y$. Associativity follows from channel composition. An algebraic description uses the diagonal commutative von Neumann algebra
\begin{eqnarray}
{\mathcal A}_X = \left\{\sum_x a_x\ketbra{x}{x}:(a_x) \in \ell^\infty(X) \right\}, \quad ({\mathcal A}_X)_* \simeq \ell^1(X).
\label{eq:diagonal-algebra-predual}
\end{eqnarray}
The bounded-operator sum is understood strongly. The diagonal trace-class space is identified with this predual; its positive normalized elements are the probability distributions. It is therefore distinct from the algebra ${\mathcal A}_X$ itself. Stochastic state maps are dual to normal unital completely positive maps between these commutative algebras. In the equivalent split-idempotent description, an object is a quantum space equipped with an idempotent CPTP map $P$, an arrow obeys $\Phi=Q\Phi P$, and the identity at that object is $P$. Taking $P=D_X$ gives the present diagonal objects. This makes the identity convention explicit without asserting that $D_X$ is the identity on the full quantum state space.

Let $\Phi$ be a CPTP map on trace-class operators satisfying $\Phi=D_Y\Phi D_X$. Define
\begin{eqnarray}
K(y|x)=\bra{y}\Phi(\ketbra{x}{x})\ket{y}.
\end{eqnarray}
Positivity of $\Phi(\ketbra{x}{x})$ gives $K(y|x) \geq 0$, and trace preservation gives $\sum_y K(y|x)=1$. For an arbitrary input $\hat{\rho}$,
\begin{eqnarray}
\Phi(\hat{\rho}) = D_Y \Phi D_X(\hat{\rho}) = \sum_y\left(\sum_xK(y|x)\hat{\rho}_{xx}\right)\ketbra{y}{y}.
\end{eqnarray}
For countable spaces the expansion converges in trace norm. Thus, $\Phi$ is exactly the stochastic kernel $K$ written in quantum notation, as stated in Theorem~\ref{thm:classical-sector}.

The sector is proper. The Hadamard channel
\begin{eqnarray}
{\mathcal U}_H(\hat{\rho})=\hat{H}\hat{\rho} \hat{H}^{\dagger}
\end{eqnarray}
is not dephasing-fixed, since
\begin{eqnarray}
{\mathcal U}_H(\ket{0}\bra{0})=\ketbra{+}{+}
\end{eqnarray}
contains off-diagonal terms in the computational basis. Thus,
\begin{eqnarray}
D{\mathcal U}_H D \neq {\mathcal U}_H.
\end{eqnarray}
This shows that the family of dephasing-fixed maps is proper within all quantum channels. At the snapshot level, with the diagonal-object identity convention just specified, we write
\begin{eqnarray}
 \diag_{\rm snap}(\Qtms_{\rm proc}) \subsetneq \Qtms_{\rm proc}.
\end{eqnarray}
Here the strict inclusion compares the admitted state and channel
descriptions; its categorical interpretation uses the diagonal objects
above rather than the full quantum identity morphisms.
The statement is structural, not a claim that $\mathsf{BPP} \neq \mathsf{BQP}$.

\subsection{Snapshot and prefix-preserving history representations}\label{app:diag-snap-hist}
The comparison with diagonal quantum processes depends on which classical state is retained. A closed Markov dynamics on current configurations has a snapshot-diagonal representation. Expanding that state to a full history gives a history-diagonal representation. To keep these descriptions distinct, we introduce two diagonal process families.

Let $X_t$ denote a finite or countable set of current configurations, or in the SEnM interpretation a set of possible ensemble snapshots. A snapshot-diagonal process has diagonal states on ${\mathscr X}_t = \ell^2(X_t)$ of the form
\begin{eqnarray}
\hat{\rho}_t^{\mathrm{snap}} =\sum_{x_t\in X_t}p_t(x_t)\ket{x_t}\bra{x_t},
\end{eqnarray}
and dephasing-fixed channels
\begin{eqnarray}
\Phi_{t+1}^{\mathrm{snap}} = D_{X_{t+1}}\Phi_{t+1}^{\mathrm{snap}}D_{X_t}.
\end{eqnarray}
Equivalently, each such channel is a stochastic kernel
\begin{eqnarray}
 K_{t+1}^{\mathrm{snap}}(x_{t+1} | x_t),
\end{eqnarray}
with update
\begin{eqnarray}
 p_{t+1}(x_{t+1}) = \sum_{x_t}K_{t+1}^{\mathrm{snap}}(x_{t+1} | x_t)p_t(x_t).
\end{eqnarray}
We denote this state-level diagonal sector by $\diag_{\mathrm{snap}}(\Qtms_{\rm proc})$. It is the precise sense in which an affine SEnM snapshot dynamics, when closed on the snapshot distribution itself, admits a diagonal quantum representation.

By contrast, a history-diagonal prefix-preserving process uses the history spaces
\begin{eqnarray}
{\mathcal H}_t = X_0 \times \cdots \times X_t
\end{eqnarray}
and diagonal states on ${\mathscr H}_t=\ell^2({\mathcal H}_t)$ of the form
\begin{eqnarray}
\hat{\rho}_t^{\mathrm{hist}}=\sum_{h_t \in {\mathcal H}_t}\mu_t(h_t)\ket{h_t}\bra{h_t}.
\end{eqnarray}
Its transition is a dephasing-fixed channel induced by a stochastic kernel
\begin{eqnarray}
\widetilde K_{t+1}(h_{t+1} | h_t),
\end{eqnarray}
with the prefix-preserving constraint
\begin{eqnarray}
\widetilde K_{t+1}(h_{t+1} | h_t)=0 \quad\text{unless}\quad h_{t+1}=(h_t, x_{t+1})
\end{eqnarray}
for some $x_{t+1} \in X_{t+1}$. We denote this process-level sector by $\diag_{\mathrm{hist}}^{\mathrm{pref}}(\Qtms_{\rm proc})$. The HCSM embedding in Theorem~\ref{thm:HCSMtoQTM} is exactly a correspondence of specified history processes
\begin{eqnarray}
 \Hcsm\cong_{\mathrm{proc}}\diag_{\mathrm{hist}}^{\mathrm{pref}}(\Qtms_{\rm proc}),
\end{eqnarray}
because the diagonal entries are the full history probabilities $\mu_t(h_t)$ and all cylinder-event probabilities are preserved.

The prefix condition also has an intrinsic expression in the diagonal
category. Let
$\operatorname{pr}_t:{\mathcal H}_{t+1}\to{\mathcal H}_t$ delete the final
configuration and define the diagonal prefix channel by
\begin{eqnarray}
{\mathcal P}_t(\hat{\rho})
&=&\sum_{h_{t+1}}
\bra{h_{t+1}}\hat{\rho}\ket{h_{t+1}}
\ketbra{\operatorname{pr}_t(h_{t+1})}{\operatorname{pr}_t(h_{t+1})}.
\label{eq:diagonal-prefix-channel}
\end{eqnarray}
For the diagonal channel $\Phi_{t+1}^{\rm hist}$ induced by
$\widetilde K_{t+1}$, prefix preservation is equivalent to
\begin{eqnarray}
{\mathcal P}_t\circ\Phi_{t+1}^{\rm hist}
&=&D_{{\mathcal H}_t}
=\id_{({\mathscr H}_t,D_{{\mathcal H}_t})}.
\label{eq:prefix-diagonal-retraction}
\end{eqnarray}
Evaluating this equality on $\ketbra{h_t}{h_t}$ forces the total
nonnegative probability assigned to every incompatible prefix to vanish.
Conversely, the support condition and normalization imply the equality.
At every finite horizon, the specified initial law and such a sequence of
diagonal arrows therefore give exactly the HCSM history law. Conversely,
a history law yields such a diagram through its conditional extension
kernels; choices on null histories may differ without changing any event
probability, as already allowed in the normal form. These generally
rectangular history extensions can possess a stochastic left inverse;
they do not contradict the finite square inverse obstruction in
Lemma~\ref{app:lem:stochastic-inverse}.

\begin{proposition}[Snapshot diagonal processes lift to prefix-preserving history processes]
\label{app:prop:snap-lift}
For fixed nonempty configuration sets, every snapshot-diagonal stochastic process, specified by its initial law and its kernels even on null histories, has a canonical injective lift into $\diag_{\mathrm{hist}}^{\mathrm{pref}}(\Qtms_{\rm proc})$. Its image consists exactly of the prefix-preserving history processes whose next-step law depends only on the last coordinate of the current history.
\end{proposition}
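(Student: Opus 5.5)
The plan is to construct the lift explicitly, check that it lands in $\diag_{\mathrm{hist}}^{\mathrm{pref}}(\Qtms_{\rm proc})$, and then prove injectivity and the image characterization separately. Given a snapshot-diagonal process with initial law $p_0\in\Delta(X_0)$ and kernels $K^{\mathrm{snap}}_{t+1}(x_{t+1}|x_t)$, defined for every $x_t$ including null ones, define on histories
\begin{eqnarray}
\widetilde K_{t+1}\bigl((h_t,x_{t+1})\,|\,h_t\bigr)=K^{\mathrm{snap}}_{t+1}(x_{t+1}|x_t),
\end{eqnarray}
where $x_t$ is the last coordinate of $h_t$, and set $\widetilde K_{t+1}=0$ on non-extensions. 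Take $\mu_0=p_0$ on $\mathcal H_0=X_0$. Nonnegativity and normalization are inherited row by row, so each $\widetilde K_{t+1}$ is a stochastic kernel. Theorem~\ref{thm:classical-sector} then gives the dephasing-fixed channel $\Phi^{\rm hist}_{t+1}$ with Kraus operators $\sqrt{\widetilde K}\,\ketbra{h_t,x}{h_t}$, exactly as in Theorem~\ref{thm:HCSMtoQTM}. Prefix preservation holds by construction, and one may verify Eq.~\eqref{eq:prefix-diagonal-retraction} by evaluating on $\ketbra{h_t}{h_t}$. This is the same Markovian-kernel embedding used in Theorem~\ref{thm:PTMinHCSM}. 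I will also note consistency: the last-coordinate marginal of $\mu_t$ is $p_t$. This follows by induction, summing the history recursion over $h_{t-1}$.

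For injectivity, I use the fact that the specified data include the kernels on all $x_t$, null ones included. Suppose two snapshot processes have the same lift. Then $p_0=\mu_0$ agrees. For any $x_t\in X_t$, pick some history $h_t$ ending in $x_t$; one exists because all $X_s$ are nonempty. Reading off $\widetilde K_{t+1}((h_t,x_{t+1})|h_t)$ recovers $K^{\mathrm{snap}}_{t+1}(x_{t+1}|x_t)$. So distinct specifications give distinct diagrams. The argument therefore identifies lifts as diagrams (initial law plus kernels), not merely as history laws. This matches the phrase ``specified by its initial law and its kernels even on null histories.''

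For the image characterization, one inclusion is immediate: a lifted kernel depends on $h_t$ only through its last coordinate. Conversely, take a prefix-preserving process whose $\widetilde K_{t+1}((h_t,\cdot)|h_t)$ depends only on $\mathrm{last}(h_t)$. Define $K^{\mathrm{snap}}_{t+1}(x'|x):=\widetilde K_{t+1}((h_t,x')|h_t)$ for any $h_t$ ending in $x$; this is well defined by the hypothesis, and the choice exists by nonemptiness. Its lift reproduces the original diagram, and it is the unique preimage by the injectivity step.

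The main obstacle is bookkeeping rather than analysis. ``Depends only on the last coordinate'' must be imposed on all histories, including null ones; otherwise the image is only determined up to null-history modifications, and injectivity at the level of laws would fail. I would state this convention explicitly and remark that the countable case needs no convergence beyond that already handled in Theorem~\ref{thm:HCSMtoQTM}.
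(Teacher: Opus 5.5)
Your proposal is correct and follows essentially the same route as the paper: the last-coordinate lift $\widetilde K_{t+1}((h_t,x_{t+1})|h_t)=K^{\mathrm{snap}}_{t+1}(x_{t+1}|\ell_t(h_t))$, the marginalization check, injectivity by reading off the kernel at any history ending in $x_t$, and the image characterization by defining the snapshot kernel from any such history. Three points you make explicit are left implicit in the paper's proof or stated only in the discussion after it: nonemptiness of the configuration sets guarantees such a history exists, the initial laws must agree, and the last-coordinate condition must also hold on null histories.
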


\begin{proof}---Let
\begin{eqnarray}
 \ell_t:{\mathcal H}_t \to X_t, \quad \ell_t(x_0,\ldots,x_t)=x_t
\end{eqnarray}
be the last-coordinate projection. Given a snapshot kernel $K_{t+1}^{\mathrm{snap}}(x_{t+1}|x_t)$, define a history kernel by
\begin{eqnarray}
{\widetilde K}_{t+1}\bigl((h_t,x_{t+1}) | h_t\bigr) = K_{t+1}^{\mathrm{snap}}\bigl(x_{t+1} | \ell_t(h_t)\bigr)
\end{eqnarray}
and set ${\widetilde K}_{t+1}(h_{t+1} | h_t)=0$ whenever $h_{t+1}$ is not of the form $(h_t,x_{t+1})$. This kernel is prefix-preserving by construction, and it is normalized since
\begin{eqnarray}
\sum_{h_{t+1} \in \mathcal H_{t+1}}{\widetilde K}_{t+1}(h_{t+1} | h_t)=\sum_{x_{t+1} \in X_{t+1}} K_{t+1}^{\mathrm{snap}}(x_{t+1} | \ell_t(h_t))=1.
\end{eqnarray}
The Kraus construction above therefore gives a dephasing-fixed CPTP map in the prefix-preserving history sector.

The lift preserves the original snapshot dynamics after marginalization. If $\mu_0=p_0$ and
\begin{eqnarray}
\mu_{t+1}(h_t,x_{t+1}) = \mu_t(h_t)K_{t+1}^{\mathrm{snap}}(x_{t+1}|\ell_t(h_t)),
\end{eqnarray}
then the last-coordinate marginal obeys
\begin{eqnarray}
(\ell_{t+1})_* \mu_{t+1} (x_{t+1}) &=& \sum_{h_t}\mu_t(h_t)K_{t+1}^{\mathrm{snap}}(x_{t+1}|\ell_t(h_t)) \nonumber \\
	&=& \sum_{x_t}K_{t+1}^{\mathrm{snap}}(x_{t+1}|x_t) \sum_{h_t:\ell_t(h_t)=x_t}\mu_t(h_t) \nonumber\\
	&=& \sum_{x_t}K_{t+1}^{\mathrm{snap}}(x_{t+1}|x_t)p_t(x_t) \nonumber \\
	&=& p_{t+1}(x_{t+1}).
\end{eqnarray}
Thus, evolving in the lifted history theory and then forgetting the history gives the same snapshot distribution as evolving directly in the snapshot theory.

The lift is injective. If two snapshot kernels $K$ and $K'$ have the same lifted kernel, then for any $x_t$ and $x_{t+1}$, choose any history $h_t$ with $\ell_t(h_t)=x_t$. The lifted equality gives
\begin{eqnarray}
 K(x_{t+1} | x_t) = \widetilde{K}((h_t,x_{t+1}) | h_t) = \widetilde{K}'((h_t,x_{t+1}) | h_t) = K'(x_{t+1}|x_t).
\end{eqnarray}
Therefore, $K=K'$.

Finally, the image is characterized by the last-coordinate condition
\begin{eqnarray}
\widetilde{K}_{t+1}((h_t,x)|h_t) = \widetilde{K}_{t+1}((h'_t,x)|h'_t) \quad\text{whenever}\quad \ell_t(h_t)=\ell_t(h'_t).
\end{eqnarray}
Indeed, this condition allows one to define a unique snapshot kernel by
\begin{eqnarray}
 K_{t+1}^{\mathrm{snap}}(x|x_t) := \widetilde{K}_{t+1}((h_t,x) | h_t)
\end{eqnarray}
for any $h_t$ ending in $x_t$. This proves the claim.
\end{proof}

It follows that, after identifying a snapshot process with its lifted image,
\begin{eqnarray}
\diag_{\mathrm{snap}}(\Qtms_{\rm proc}) \cong \diag_{\mathrm{last}}^{\mathrm{pref}}(\Qtms_{\rm proc}) \subseteq \diag_{\mathrm{hist}}^{\mathrm{pref}}(\Qtms_{\rm proc}),
\label{eq:app-snap-last-hist-chain}
\end{eqnarray}
where $\diag_{\mathrm{last}}^{\mathrm{pref}}(\Qtms_{\rm proc})$ denotes the last-coordinate subtheory described in Proposition~\ref{app:prop:snap-lift}. For fixed coordinate spaces, the inclusion can be strict. For example, let $X_0=X_1=X_2=\{0,1\}$ and define
\begin{eqnarray}
\widetilde{K}_2\bigl((x_0,x_1,x_2) | (x_0,x_1)\bigr) =\mathbf{1}\{x_2=x_0\}.
\end{eqnarray}
This kernel is prefix-preserving, hence belongs to $\diag_{\mathrm{hist}}^{\mathrm{pref}}(\Qtms_{\rm proc})$. It is not in the last-coordinate image, because the two histories $(0,0)$ and $(1,0)$ have the same last coordinate $x_1=0$ but force different next states $x_2=0$ and $x_2=1$.

This example separates current-state Markov kernels from history-dependent kernels on the same fixed coordinate spaces. It does not separate them after arbitrary state enlargement: replacing the current state by $h_t$ makes any HCSM Markovian, as Proposition~\ref{app:prop:history-markov} shows. The injectivity statement concerns specified kernels; if processes are identified only by a law from one initial distribution, changes on unreachable histories are invisible and must be quotiented out.

For a SEnM whose snapshot law closes as a stochastic update, the same lift applies. For the more general non-Markovian SEnM convention, take the classical state to be its ensemble history and apply the direct embedding in Theorem~\ref{thm:SEnMinHCSM}. In that case the diagonal representation preserves the full ensemble-history and readout law. Consequently, the valid hierarchy is an inclusion of the specified PTM and SEnM processes into the HCSM normal form, followed by a faithful history-diagonal representation in $\Qtms_{\rm proc}$; strictness of SEnM versus HCSM requires further restrictions on ensemble encodings and readouts.

\section{Born shadows and phase-ledger bounds}\label{app:born-shadow}

\subsection{Stochastic inverses and reversible shadows}

The interference contribution in Eq.~\eqref{eq:composition-defect-v03} is a total real sum of distinct-path products. Its vanishing does not require each cross term to vanish separately, since terms can cancel. Conversely, equal relative phases do not eliminate interference: overlapping equal-phase paths can interfere constructively. Disjoint path support is sufficient, and monomial gates guarantee compositional closure throughout their group.

The reversibility obstruction in Theorem~\ref{thm:reversibility-obstruction} can be seen as an instance of the purely classical Lemma~\ref{app:lem:stochastic-inverse}.

\begin{lemma}[Stochastic inverse lemma]
\label{app:lem:stochastic-inverse}
Let $K$ be a finite $d\times d$ column-stochastic matrix. If there exists a column-stochastic matrix $L$ such that
\begin{eqnarray}
LK=\hat{\mathds{1}},
\end{eqnarray}
then $K$ is a permutation matrix.
\end{lemma}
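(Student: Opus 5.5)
The plan is a support-counting argument that uses only nonnegativity and column normalization. We never invoke any orthogonality, so the argument will not reduce to the $\ell_2$ argument of Theorem~\ref{thm:reversibility-obstruction}.

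\emph{Step 1: distinct columns of $K$ have disjoint supports.} Write $S_x=\{y:K_{yx}>0\}$ for the support of column $x$; each $S_x$ is nonempty because column $x$ sums to one. Fix $x\neq x'$ and evaluate the diagonal entry $(LK)_{xx}=\sum_y L_{xy}K_{yx}=1$. Column $x$ of $K$ is a probability vector and every $L_{xy}\le 1$, since the columns of $L$ are probability vectors. So the sum is a convex combination of the numbers $L_{xy}$ with $y\in S_x$, and it equals one only if $L_{xy}=1$ for all $y\in S_x$. Column normalization of $L$ then gives $L_{x'y}=0$ for every $x'\neq x$ and every $y\in S_x$. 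Now look at the off-diagonal entry $(LK)_{x'x}=\sum_{y\in S_x}L_{x'y}K_{yx}=0$; this is automatically consistent. The useful consequence runs the other way: a row index $y$ with $L_{xy}=1$ cannot lie in $S_{x'}$, because the diagonal condition for $x'$ would force $L_{x'y}=1$ as well, and column $y$ of $L$ cannot carry two ones. Hence $S_x\cap S_{x'}=\emptyset$.

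\emph{Step 2: counting gives a permutation.} The $d$ sets $S_x$ are nonempty, pairwise disjoint subsets of a $d$-element row index set. Each must therefore be a singleton, say $S_x=\{\pi(x)\}$, and $\pi$ must be a bijection. Column normalization then forces $K_{\pi(x)x}=1$, so $K$ is the permutation matrix of $\pi$. As a consistency check, $L$ is then forced to equal $K^T$ on the relevant columns, which is indeed a stochastic inverse.

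\emph{Main obstacle.} The delicate point is the deduction in Step 1 that $L_{xy}=1$ on all of $S_x$. It needs the strict inequality $\sum_y L_{xy}K_{yx}<1$ whenever some $L_{xy}<1$ with $K_{yx}>0$, and this holds because the weights $K_{yx}$ are positive on $S_x$ and sum to one. Squareness is essential and is the only place finiteness enters. For a rectangular extension kernel, such as the history extensions of Eq.~\eqref{eq:prefix-diagonal-retraction}, disjoint supports do not force singletons, which is why those maps can have stochastic left inverses.

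Finally, to recover Theorem~\ref{thm:reversibility-obstruction} as an instance, set $K=B(\hat U)$ and $L=B(\hat U^\dagger)$, which is column-stochastic by double stochasticity. The lemma makes $B(\hat U)$ a permutation matrix, and therefore $\hat U$ is monomial.
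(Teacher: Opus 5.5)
Your proof is correct and takes essentially the same route as the paper's. The paper reads $Lk_x=e_x$ as a convex combination of the columns of $L$ reaching an extreme point, which forces every column of $L$ indexed by $S_x$ to equal $e_x$; your entrywise deduction that $L_{xy}=1$ on $S_x$ is the same step, and both arguments then use disjointness of the column supports and the same counting step to conclude.
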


\begin{proof}---Let $k_x$ denote the $x$th column of $K$. The equation $LK=\hat{\mathds{1}}$ says that
\begin{eqnarray}
 L k_x = e_x,
\end{eqnarray}
where $e_x$ is the $x$th standard basis vector. Since $k_x$ is a probability vector, $Lk_x$ is a convex combination of the columns of $L$. A convex combination of probability vectors equals the extreme point $e_x$ only if every column of $L$ with positive coefficient in $k_x$ is itself $e_x$. Thus the support of $k_x$ can include only columns $j$ of $L$ equal to $e_x$.

Now take two distinct inputs $x \neq x'$. If $k_x$ and $k_{x'}$ had overlapping support at some index $j$, then the $j$th column of $L$ would have to be both $e_x$ and $e_{x'}$, impossible. Therefore the supports of the columns of $K$ are pairwise disjoint. There are $d$ nonempty pairwise disjoint column supports contained in $d$ output indices. Each support must therefore consist of exactly one output index. Hence every column of $K$ is one-hot and the one-hot positions are distinct. Therefore $K$ is a permutation matrix.
\end{proof}

The square-matrix condition is necessary: the rectangular stochastic map $K=(1/2,1/2)^T$ has stochastic left inverse $L=(1,1)$ without being a permutation. Hidden-alphabet enlargement must therefore be distinguished from an inverse on the unchanged visible alphabet. For doubly stochastic $K$, the proof is even shorter: if $K^T K=\hat{\mathds{1}}$, the columns are nonnegative probability vectors of Euclidean norm one, hence one-hot.

\begin{theorem}[Reversibility obstruction, strengthened]
\label{app:thm:reversibility-strengthened}
Let $\hat{U}$ be unitary. If the Born shadow $B(\hat{U})$ has any positive stochastic inverse, then $\hat{U}$ is monomial. In particular, if $B(\hat{U}^{\dagger})B(\hat{U})=\hat{\mathds{1}}$, then $\hat{U}$ is monomial.
\end{theorem}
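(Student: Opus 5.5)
The plan is to reduce the strengthened statement directly to Lemma~\ref{app:lem:stochastic-inverse}. Set $K=B(\hat{U})$. By the convention fixed in Appendix~\ref{app:conventions}, $K$ is a finite $d\times d$ column-stochastic matrix; it is in fact doubly stochastic because $\hat{U}$ is unitary. I read the hypothesis ``$B(\hat{U})$ has a positive stochastic inverse'' as the existence of a column-stochastic $d\times d$ matrix $L$ with $LK=\hat{\mathds{1}}$.

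For a square matrix, a left inverse is automatically a two-sided inverse, so requiring $KL=\hat{\mathds{1}}$ instead would give the same conclusion. To cover that reading explicitly, I would note that $KL=\hat{\mathds{1}}$ implies $LK=\hat{\mathds{1}}$ by finite-dimensional linear algebra. Lemma~\ref{app:lem:stochastic-inverse} then shows that $K$ is a permutation matrix $P_\pi$, so that $\abs{\bra{y}\hat{U}\ket{x}}^2=\delta_{y,\pi(x)}$.

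Next I translate this back to $\hat{U}$. Each column $\hat{U}\ket{x}$ has exactly one nonzero entry, located in row $\pi(x)$, and that entry has modulus one. Writing it as $e^{i\theta_x}$ gives $\hat{U}=\sum_x e^{i\theta_x}\ketbra{\pi(x)}{x}$, which is the monomial form of Theorem~\ref{thm:reversibility-obstruction}.

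For the ``in particular'' clause, I would observe that $B(\hat{U}^{\dagger})_{yx}=\abs{\bra{y}\hat{U}^{\dagger}\ket{x}}^2=\abs{\bra{x}\hat{U}\ket{y}}^2=B(\hat{U})_{xy}$. Hence $B(\hat{U}^{\dagger})=B(\hat{U})^T$, and this matrix is itself the Born shadow of a unitary, so it is doubly and in particular column-stochastic. It is therefore an admissible $L$, and the first part applies.

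The main obstacle is interpretive rather than technical: I must make sure that ``positive stochastic inverse'' is understood on the same visible alphabet, meaning a square $d\times d$ matrix. The remark after the lemma shows that rectangular left inverses on an enlarged alphabet exist without forcing a permutation. I would state this scope explicitly, and otherwise rely on the lemma as already proved.
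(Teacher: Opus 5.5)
Your proposal is correct and follows essentially the same route as the paper: apply Lemma~\ref{app:lem:stochastic-inverse} to $K=B(\hat{U})$ to obtain a permutation matrix, read off $\hat{U}=\sum_x e^{i\theta_x}\ketbra{\pi(x)}{x}$, and use $B(\hat{U}^{\dagger})=B(\hat{U})^T$ as the explicit stochastic inverse for the special case. Your extra remarks are harmless refinements: a square left inverse is automatically two-sided, and distinctness of the nonzero positions already follows from $\pi$ being a permutation, whereas the paper appeals to unitarity for that step.
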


\begin{proof}---If $B(\hat{U})$ has a stochastic inverse, Lemma~\ref{app:lem:stochastic-inverse} implies that $B(\hat{U})$ is a permutation matrix. Thus, for each input basis vector $x$, exactly one output $y=\pi(x)$ satisfies $|\hat{U}_{yx}|^2=1$, and all other entries in that column vanish. Unitarity then forces the nonzero output positions to be distinct across columns. Hence,
\begin{eqnarray}
 \hat{U}=\sum_x e^{i\theta_x}\ketbra{\pi(x)}{x},
\end{eqnarray}
which is monomial. The special condition $B(\hat{U}^{\dagger})B(\hat{U})=\hat{\mathds{1}}$ provides the stochastic inverse $B(\hat{U}^{\dagger})$.
\end{proof}

For $\hat{U}=\hat{H}^{\otimes k}$, the Born shadow is
\begin{eqnarray}
 B(\hat{U})=\frac{1}{2^k}J_{2^k},
\end{eqnarray}
where $J_d$ is the $d\times d$ all-ones matrix. Since $J_d^2=dJ_d$,
\begin{eqnarray}
 B(\hat{U})^TB(\hat{U})=B(\hat{U})^2=B(\hat{U}).
\end{eqnarray}
The normalized reversibility defect used in the main text is
\begin{eqnarray}
 R(\hat{U})^2 = \frac{1}{d} \left\| B(\hat{U})^T B(\hat{U}) - \hat{\mathds{1}} \right\|_F^2 = \frac{1}{d} \left\| B(\hat{U}) - \hat{\mathds{1}} \right\|_F^2 = 1-\frac{1}{d}.
\end{eqnarray}
Thus,
\begin{eqnarray}
R(\hat{H}^{\otimes k})=\sqrt{1-2^{-k}}.
\end{eqnarray}

\subsection{Exact recovery, approximate recovery, and examples}\label{app:ledger}

Let $X$ be the input basis variable, $Y$ the visible readout after a unitary step, and $Z$ the additional hidden information retained by a positive stochastic simulator after that step. The visible readout law is
\begin{eqnarray}
P(Y=y | X=x)=B(\hat{U})_{yx}.
\end{eqnarray}
The readout $Y$ is assumed to be available without destroying the simulator record used by the inverse continuation. It is not a projective quantum measurement inserted before $\hat{U}^{\dagger}$. If that continuation recovers $X$ exactly from the complete retained record $(Y,Z)$, then
\begin{eqnarray}
H(X | Y,Z)=0
\end{eqnarray}
and hence
\begin{eqnarray}
I(X;Z | Y) = H(X | Y) - H(X | Y,Z) = H(X|Y).
\end{eqnarray}
Thus, the exact ledger cost is precisely
\begin{eqnarray}
\Lambda_{\pi}(\hat{U})=H_{\pi}(X | Y)
\end{eqnarray}
under input prior $\pi$. This is the exact ledger identity in Theorem~\ref{thm:exact-phase-ledger}. It fixes the conditional information $I(X;Z|Y)$ and implies $H(Z|Y)\geq H(X|Y)$ when these entropies are finite. It is not, by itself, a worst-case memory-length bound. It extends naturally to approximate recovery via Fano's inequality~\cite{CoverThomas}.

Suppose $X$ takes values in a set of size $m\geq2$ and an estimator $\widehat X=g(Y,Z)$ has error probability $e$, with
\begin{eqnarray}
e=\Pr[\widehat{X} \neq X] \leq \varepsilon \leq 1-1/m.
\end{eqnarray}
Fano's inequality gives
\begin{eqnarray}
H(X | Y,Z) \leq h_2(e)+e\log_2(m-1) \leq h_2(\varepsilon) + \varepsilon\log_2(m-1),
\end{eqnarray}
where $h_2$ is the binary entropy. The final substitution uses monotonicity of $h_2(e)+e\log_2(m-1)$ on $0\leq e\leq1-1/m$. Therefore,
\begin{eqnarray}
 I(X;Z | Y) \geq H(X | Y) - h_2(\varepsilon) - \varepsilon\log_2(m-1).
\end{eqnarray}
This is the approximate ledger lower bound; its right side may be replaced by zero when negative. If only a looser error upper bound is known, one may use the actual $e$ in Fano's inequality or the trivial nonnegative lower bound. In the maximally mixing case, $H(X|Y)=\log_2m$.

For finite-dimensional $\hat{U}$, the coherence link in
Eq.~\eqref{eq:ledger-generated-coherence} follows from purity of each
$\hat{\rho}_x=\hat{U}\ketbra{x}{x}\hat{U}^{\dagger}$:
\begin{eqnarray}
C_{\rm rel}(\hat{\rho}_x)
&=&H\bigl(B(\hat{U})_{\cdot x}\bigr),
\nonumber\\
\sum_x\pi_x C_{\rm rel}(\hat{\rho}_x)&=&H_{\pi}(Y|X).
\label{eq:ledger-conditional-coherence}
\end{eqnarray}
The Shannon chain rule therefore gives, for any input prior $\pi$,
\begin{eqnarray}
\Lambda_{\pi}(\hat{U})
&=&\sum_x\pi_x C_{\rm rel}(\hat{\rho}_x)
+H_{\pi}(X)-H_{\pi}(Y).
\label{eq:ledger-general-prior-coherence}
\end{eqnarray}
For uniform $\pi$, bistochasticity of $B(\hat{U})$ makes $Y$ uniform too,
so $H(X)=H(Y)=\log_2d$ and Eq.~\eqref{eq:ledger-generated-coherence}
follows. The average density operator is then
$d^{-1}\sum_x\hat{\rho}_x=\hat{\mathds{1}}/d$, whose coherence is zero;
the equality concerns the labelled ensemble. For a deterministic input
label, $\Lambda_{\pi}(\hat{U})=0$ even if that output is coherent.
Consequently the ledger is not a state-coherence monotone, and a positive
value by itself is not a witness that the input--output law has no
classical realization. Its operational force comes from requiring that
the same retained positive representation also support the coherent
inverse and, in the universal setting, all allowed future continuations.
Likewise, $\Lambda(\hat{R}_z(\theta))=0$ for basis inputs does not make
$\theta$ irrelevant on coherent inputs: the sequence
$\hat{H}\hat{R}_z(\theta)\hat{H}$ in Sec.~\ref{subsec:phase-ledger}
activates precisely such phase-dependent predictions.

For a monomial unitary $\hat{U}$, $Y$ determines $X$ under any prior after applying the inverse permutation. Hence $H(X|Y)=0$ and no hidden ledger is needed. For a Fourier transform $\hat{F}_d$,
\begin{eqnarray}
B(\hat{F}_d)_{yx}=\frac1d,
\end{eqnarray}
so under the uniform prior $X$ and $Y$ are independent and
\begin{eqnarray}
\Lambda(\hat{F}_d)=H(X)=\log_2d.
\end{eqnarray}
For $\hat{H}^{\otimes k} \otimes \hat{\mathds{1}}^{\otimes(n-k)}$ under a uniform $n$-bit prior, exactly the $k$ Hadamard-active input bits are hidden by the visible readout, so
\begin{eqnarray}
\Lambda=k.
\end{eqnarray}

For a Haar-random unitary in dimension $d$, each column of $B(\hat{U})$ is Dirichlet$(1,\ldots,1)$ distributed. Under a uniform input prior, double stochasticity makes $Y$ uniform and gives $H(X|Y)=d^{-1}\sum_x H(B(\hat{U})_{\cdot x})$. The expected ledger is therefore
\begin{eqnarray}
\mathbb E[\Lambda(\hat{U})]=\frac{H_d-1}{\ln 2},
\end{eqnarray}
where $H_d=\sum_{j=1}^d j^{-1}$. This follows by differentiating $\mathbb E[p^a]=\Gamma(d)\Gamma(a+1)/\Gamma(d+a)$ at $a=1$ for one component $p$ of a Dirichlet column and summing the $d$ entropy contributions. Thus the random-unitary benchmark follows analytically, without requiring a separate histogram.

\section{Tomographic sufficiency and no-free classicalization}\label{app:tomo}

\subsection{Operational hidden states and tomography} 

Let ${\mathcal C}$ be a family of reachable quantum prefixes and ${\mathcal D}$ a family of allowed future continuations. Two quantum states $\hat{\rho}$ and $\hat{\sigma}$ are operationally equivalent relative to ${\mathcal D}$ if
\begin{eqnarray}
\tr\left[\ketbra{y}{y}\hat{D}\hat{\rho} \hat{D}^{\dagger}\right] = \tr\left[\ketbra{y}{y}\hat{D}\hat{\sigma} \hat{D}^{\dagger}\right]
\end{eqnarray}
for every $\hat{D} \in {\mathcal D}$ and every computational-basis outcome $y$. If ${\mathcal D}$ contains all unitaries, these equivalence classes are singletons.

\begin{lemma}[Rank-one projectors separate states]
\label{app:lem:projectors-separate}
If $\hat{\rho}$ and $\hat{\sigma}$ are density matrices and
\begin{eqnarray}
\tr(\hat{P}\hat{\rho})=\tr(\hat{P}\hat{\sigma})
\end{eqnarray}
for every rank-one projector $\hat{P}$, then $\hat{\rho}=\hat{\sigma}$.
\end{lemma}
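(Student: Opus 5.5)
The plan is to reduce the claim to the vanishing of a single self-adjoint trace-class operator. Set $\hat{\Delta}=\hat{\rho}-\hat{\sigma}$. It is self-adjoint and trace class; in the finite-dimensional case it is simply Hermitian. Every rank-one projector has the form $\hat{P}=\ketbra{\psi}{\psi}$ with $\|\psi\|=1$. The hypothesis therefore reads
\begin{eqnarray}
\bra{\psi}\hat{\Delta}\ket{\psi}=0
\end{eqnarray}
for every unit vector $\ket{\psi}$, and by homogeneity for every vector.

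Next, polarization turns this vanishing quadratic form into vanishing matrix elements. For any vectors $\ket{\phi},\ket{\chi}$, the complex polarization identity gives
\begin{eqnarray}
\bra{\phi}\hat{\Delta}\ket{\chi}=\frac14\sum_{k=0}^{3}i^{-k}\bigl(\bra{\phi}+(-i)^{k}\bra{\chi}\bigr)\hat{\Delta}\bigl(\ket{\phi}+i^{k}\ket{\chi}\bigr).
\end{eqnarray}
Each term on the right is a diagonal expectation $\bra{\eta}\hat{\Delta}\ket{\eta}$, up to normalizing $\ket{\eta}$ when it is nonzero, and so vanishes. Hence every matrix element of $\hat{\Delta}$ vanishes and $\hat{\Delta}=0$, i.e. $\hat{\rho}=\hat{\sigma}$.

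A second route avoids polarization and gives a cross-check. Because $\hat{\Delta}$ is self-adjoint and compact, it has an eigenvector $\ket{e}$ for each nonzero eigenvalue $\lambda$. Choosing $\hat{P}=\ketbra{e}{e}$ yields $\lambda=0$, a contradiction. So the spectrum is $\{0\}$, and a compact self-adjoint operator with trivial spectrum is $0$.

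I expect no real obstacle here. The only care point is the separable infinite-dimensional case. Trace-class operators are bounded, so the sesquilinear form $\bra{\phi}\hat{\Delta}\ket{\chi}$ is well defined, and vanishing of all matrix elements forces a bounded operator to be zero. I would state the argument for general separable Hilbert spaces in that form.

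Finally, I would note the link to Theorem~\ref{thm:tomographic-sufficiency}. The projectors $\hat{D}^{\dagger}\ketbra{y}{y}\hat{D}$ range over all rank-one projectors as $\hat{D}$ ranges over unitaries, since every unit vector is the image of $\ket{y}$ under some unitary. This completes the detailed version of the tomographic-sufficiency argument referenced there.
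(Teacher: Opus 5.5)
Your proof is correct and follows the paper's argument: form $\hat{\rho}-\hat{\sigma}$, read the hypothesis as a vanishing quadratic form, and use polarization to kill every matrix element. Your four-term identity checks out in the physics convention, and it does not even need self-adjointness; the spectral-theorem cross-check and the remarks on the separable infinite-dimensional case are valid additions that the paper leaves implicit.
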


\begin{proof}---Let $\hat{A}=\hat{\rho} - \hat{\sigma}$. Then $\hat{A}$ is Hermitian and $\bra{v}\hat{A}\ket{v}=0$ for every unit vector $v$. The polarization identity implies $\bra{u}\hat{A}\ket{v}=0$ for all $u,v$. Hence, $\hat{A}=0$.
\end{proof}

\begin{theorem}[Tomographic sufficiency, detailed form]
\label{app:thm:tomo-detailed}
Assume an HCSM simulator exactly predicts the computational-basis output probabilities for every future unitary continuation. If two reachable prefixes produce distinct quantum states $\hat{\rho}\neq\hat{\sigma}$, then their HCSM predictive laws must be operationally distinct. Conversely, exact simulations of the same quantum state have operationally equivalent predictive laws.
\end{theorem}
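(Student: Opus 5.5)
We argue by contraposition for the forward direction and by direct substitution for the converse, working with averaged continuation responses as in Definition~\ref{def:exact-compositional-hcsm}. Fix two reachable prefixes $\hat{C}_1,\hat{C}_2$ with represented states $\hat{\rho}=\hat{C}_1\hat{\rho}_0\hat{C}_1^{\dagger}$ and $\hat{\sigma}=\hat{C}_2\hat{\rho}_0\hat{C}_2^{\dagger}$. Let their simulator laws on retained histories be $\mu_1,\mu_2$. By the definition, each unitary continuation $\hat{D}$ acts through causal kernels that depend only on $\hat{D}$ and the retained history. Hence the response functions $\xi_{D,y}(h)$ are the same for both prefixes, and the averaged response is the linear functional $\mu\mapsto\sum_h\mu(h)\xi_{D,y}(h)$. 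This common-response property is the key structural input, and we state it explicitly first: no prefix-specific information may enter outside the retained law.

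For the forward direction, we suppose $\mu_1$ and $\mu_2$ are operationally equivalent, meaning that all averaged responses agree. Exactness then gives $\tr[\hat{P}\hat{\rho}]=\tr[\hat{P}\hat{\sigma}]$ for every $\hat{P}=\hat{D}^{\dagger}\ketbra{y}{y}\hat{D}$. Every rank-one projector $\ketbra{v}{v}$ has this form: take a unitary $\hat{D}$ with $\hat{D}\ket{v}=\ket{y}$ for a fixed basis label $y$, which exists by extending $\ket{v}$ to an orthonormal basis. Lemma~\ref{app:lem:projectors-separate} then yields $\hat{\rho}=\hat{\sigma}$, contradicting $\hat{\rho}\neq\hat{\sigma}$. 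In countably infinite dimension, the same polarization argument applies to the trace-class Hermitian difference, since it is determined by its quadratic form.

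For the converse, we suppose the two prefixes yield the same state $\hat{\rho}=\hat{\sigma}$. Exactness gives $\sum_h\mu_i(h)\xi_{D,y}(h)=\tr[\ketbra{y}{y}\hat{D}\hat{\rho}\hat{D}^{\dagger}]$ for $i=1,2$. The right-hand side is identical for both, so every averaged response agrees, which is precisely operational equivalence. Individual histories or supports need not coincide; this matches the remark after Theorem~\ref{thm:tomographic-sufficiency}.

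The main obstacle is not the linear algebra but pinning down exactly what ``operationally distinct'' quantifies over. It must refer to the single-continuation averaged responses, not to multi-step or adaptive continuations. If composite continuations are admitted, we would also use the compositional consistency on reachable laws to reduce them to single unitaries, since $\hat{D}_2\hat{D}_1$ is itself a unitary continuation.
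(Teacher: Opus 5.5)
Your proposal is correct and follows essentially the same route as the paper. You assume operationally equivalent laws, use exactness to equate $\tr[\hat{D}^{\dagger}\ketbra{y}{y}\hat{D}\hat{\rho}]$ with $\tr[\hat{D}^{\dagger}\ketbra{y}{y}\hat{D}\hat{\sigma}]$ for all unitaries and outcomes, observe that these operators exhaust the rank-one projectors, invoke Lemma~\ref{app:lem:projectors-separate}, and read the converse off directly from exactness. Your explicit unitary construction $\hat{D}\ket{v}=\ket{y}$ and your emphasis on the common response functions $\xi_{D,y}$ only spell out steps the paper's proof leaves implicit.
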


\begin{proof}---Suppose distinct $\hat{\rho}$ and $\hat{\sigma}$ were represented by operationally equivalent HCSM laws $\mu_\rho$ and $\mu_\sigma$, as defined in Definition~\ref{def:exact-compositional-hcsm}. Since these laws give the same averaged response to the chosen continuation, it would assign identical probabilities to $\hat{\rho}$ and $\hat{\sigma}$ for every future unitary $\hat{D}$ and output $y$:
\begin{eqnarray}
 \tr\left[\ketbra{y}{y}\hat{D}\hat{\rho} \hat{D}^{\dagger}\right] = \tr\left[\ketbra{y}{y}\hat{D}\hat{\sigma} \hat{D}^{\dagger}\right].
\end{eqnarray}
Equivalently,
\begin{eqnarray}
\tr\left[\hat{D}^{\dagger}\ketbra{y}{y}\hat{D}\hat{\rho}\right] = \tr\left[\hat{D}^{\dagger}\ketbra{y}{y}\hat{D}\hat{\sigma}\right].
\end{eqnarray}
As $\hat{D}$ and $y$ vary, $\hat{D}^{\dagger}\ketbra{y}{y}\hat{D}$ ranges over all rank-one projectors. Lemma~\ref{app:lem:projectors-separate} then gives $\hat{\rho}=\hat{\sigma}$, contradicting the assumption. Therefore, the predictive laws must be operationally distinct. The converse follows because exactness assigns equal future predictions to equal quantum states. No preparation-independent choice of representing law is required.
\end{proof}

\subsection{Finite precision and continuous representation dimension} 

If predictions are required only up to additive error $\delta$, then the predictive laws must distinguish sufficiently separated quantum states in the corresponding measurement distance. Let
\begin{eqnarray}
d_{\rm meas}(\hat{\rho},\hat{\sigma})=\sup_{\hat{P}\,\mathrm{rank}\,1}\abs{\tr\left[\hat{P}(\hat{\rho}-\hat{\sigma})\right]}.
\end{eqnarray}
For Hermitian $\hat{A}=\hat{\rho}-\hat{\sigma}$, the spectral theorem gives $d_{\rm meas}(\hat{\rho},\hat{\sigma})=\|\hat{A}\|_\infty$. Since $\hat{A}$ is traceless, this is at most $\|\hat{A}\|_1/2$. If a simulator uses the same predictive law for both states, each true outcome probability is within $\delta$ of their common prediction. The triangle inequality therefore gives $d_{\rm meas}(\hat{\rho},\hat{\sigma})\leq2\delta$. States separated by more than $2\delta$ cannot be merged; the factor two accounts for both approximation errors. This is the finite-precision analogue of Theorem~\ref{app:thm:tomo-detailed}.

The real dimension of the pure $n$-qubit state space is the dimension of complex projective space
\begin{eqnarray}
\mathbb{CP}^{2^n-1},
\end{eqnarray}
namely
\begin{eqnarray}
2(2^n)-2=2^{n+1}-2.
\end{eqnarray}
The real dimension of the full density-matrix space on $d=2^n$ dimensions is
\begin{eqnarray}
d^2-1=4^n-1.
\end{eqnarray}
To make the dimension statement precise, suppose a full reachable state manifold $\mathcal M$ is assigned continuously to $k$ real parameters specifying its predictive law. Exactness and tomographic sufficiency make this assignment injective. For a manifold-valued parameter space, continuity allows restriction to a source neighborhood whose image lies in one target chart; the argument then uses those local target coordinates. If $\mathcal M$ has dimension $q>k$, restrict the assignment to a local chart $O\subset\mathbb R^q$ and append $q-k$ zero coordinates to its image. Invariance of domain would make the resulting continuous injection $O\to\mathbb R^q$ have open image, although its image lies in a proper linear subspace. This contradiction proves $k\geq q$ and hence the bounds in Corollary~\ref{cor:ledger-dimension}.

The continuity and full-manifold conditions are essential. A countable exact gate alphabet with a fixed initial state generates only countably many finite circuit states; density of this orbit alone does not justify the manifold dimension bound without a continuous extension. A probability law on countably many configurations can itself vary continuously, so these are dimensions of regular predictive-law encodings, not a lower bound on coordinates in one sampled configuration, bit storage, or running time. If arbitrary pure-state configurations are used directly as labels, their continuum requires a measurable-state-space extension of the countable HCSM definition.

For example, a mixed-state ensemble may select $\ket{0}$ or $\ket{1}$ with weights $(1/2,1/2)$ or $(3/4,1/4)$. Evolving the selected pure state unitarily represents distinct density matrices with the same initial configuration support. The distinction resides in the preparation law, not in the identity of one sampled configuration. For a countable gate alphabet the exactly reached pure-state labels in this example remain countable.

\subsection{Continuation distances and finite-codebook certificates}\label{app:finite-codebook}

For the setting of Proposition~\ref{prop:approximate-predictive-resolution}, write $q_i^a(y)=\sum_h\mu_i(h)\xi_{a,y}(h)$, where the response probabilities are nonnegative and sum to one. Then
\begin{eqnarray}
\operatorname{TV}(q_i^a,q_j^a)
&\leq&\frac12\sum_{y,h}|\mu_i(h)-\mu_j(h)|\xi_{a,y}(h)
=\operatorname{TV}(\mu_i,\mu_j),
\nonumber\\
\operatorname{TV}(p_i^a,p_j^a)
&\leq&2\epsilon+\operatorname{TV}(\mu_i,\mu_j).
\end{eqnarray}
Taking the supremum over $a$ proves the law-resolution bound without assuming a deterministic encoder. A codebook restriction is a separate assumption: a preparation index selects one of $M$ complete laws. An encoder that instead changes mixture weights over a smaller sampled alphabet need not satisfy this restriction, so a packing of laws cannot be reinterpreted as an alphabet-size bound.

For all unitary continuations with full basis readout, measurement contraction bounds the distance by $\|\hat{\rho}_i-\hat{\rho}_j\|_1/2$. Choosing the readout basis to diagonalize the Hermitian difference attains this bound. For two pure states the difference has two possible nonzero eigenvalues, $\pm\sqrt{1-|\braket{\psi_i}{\psi_j}|^2}$, proving the formula used in the first experiment. Restricting the continuation family can only decrease the distance. In contrast, the universal single rank-one outcome distance is the operator norm, as defined earlier in this appendix.

A clique with $L$ vertices in $G_\epsilon$ gives $L\leq\chi(G_\epsilon)$ and can be checked directly by testing all its pairs. An explicit coloring with $U$ colors gives $\chi(G_\epsilon)\leq U$ and is checked by verifying that every edge joins different colors. Matching witnesses certify the finite graph's exact chromatic number independently of the search heuristic. For binary queries the midpoint decoder then certifies its operational interpretation in Eq.~\eqref{eq:exact-binary-codebook}. For full outcome distributions, pairwise distances at most $2\epsilon$ do not generally guarantee a common normalized decoder of radius $\epsilon$; we therefore do not use the binary equality in the first experiment. Instead, a selected physical preparation within distance $\epsilon$ of every assigned member gives a constructive center and a normalized response for all tested continuations. Covering the finite preparation set in this way yields an upper bound for its prediction codebook, but supplies no update rule for arbitrary subsequent compositions.

The error of extending a fixed binary-query encoding to a new test bank can also be evaluated exactly. For its color classes $\mathcal C_s$ and new probabilities $p_i^a(1)$, the smallest achievable worst-case error of any extension decoder is
\begin{eqnarray}
\epsilon_{\rm ext}&=&\frac12\max_{s,a}
\left[\max_{i\in\mathcal C_s}p_i^a(1)-\min_{i\in\mathcal C_s}p_i^a(1)\right].
\label{eq:heldout-extension-error}
\end{eqnarray}
Each coordinate midpoint attains the bound. This evaluates the selected encoding; it is not an optimization over all colorings with the same number of labels. Consequently, a small codebook optimal for one continuation bank need not retain enough information for a larger bank. Both experiments fix their preparation and continuation seeds before optimization and retain all tested instances in the accompanying data.

\subsection{The no-free-classicalization alternatives}\label{app:no-free}

Theorem~\ref{thm:no-free-classicalization} separates three representation regimes while retaining the original hierarchy of classical and quantum processes.
\begin{enumerate}
\item On the unchanged visible basis alphabet, basis preparation and basis readout fix the stochastic gate map to $B(\hat{U})$. An inverse-closed family therefore requires $B(\hat{U}^{\dagger})B(\hat{U})=\hat{\mathds{1}}$. The stochastic-inverse lemma implies monomiality, and monomial unitaries conversely act by permutations that compose correctly. This is the reversible probability-only part of the dephased classical sector.
\item If a positive model enlarges its retained description to represent non-monomial reversible gates, the inverse-recovery assumptions of Theorem~\ref{thm:exact-phase-ledger} give $I(X;Z|Y)=H(X|Y)$. If all future unitary continuations must be predicted, Theorem~\ref{app:thm:tomo-detailed} additionally forces the predictive-law classes to distinguish all reachable density matrices. Thus the hidden ledger can carry the quantum predictive information in a classical representation.
\item A representation that accommodates such coherent dynamics without this positive hidden-information description cannot also remain closed under ordinary stochastic updating of the same visible basis probabilities. It must change its representation rules, for example by using signed or quasiprobability weights. Complex amplitudes may serve as such non-stochastic weights; when they are instead stored as classical configuration data and updated deterministically, they belong to the preceding positive hidden-ledger regime.
\end{enumerate}

The trichotomy concerns representation resources. The canonical diagonal image of HCSM has positive history weights, but an enlarged HCSM can store a circuit description or amplitude information and reproduce coherent quantum predictions. It then satisfies the information requirement rather than eliminating it. Accordingly, the phase ledger and tomographic statement identify the cost of classicalization beyond visible probability closure; they do not prohibit exact positive classical simulation with sufficient retained information.


\begin{thebibliography}{26}%
\makeatletter
\providecommand \@ifxundefined [1]{%
 \@ifx{#1\undefined}
}%
\providecommand \@ifnum [1]{%
 \ifnum #1\expandafter \@firstoftwo
 \else \expandafter \@secondoftwo
 \fi
}%
\providecommand \@ifx [1]{%
 \ifx #1\expandafter \@firstoftwo
 \else \expandafter \@secondoftwo
 \fi
}%
\providecommand \natexlab [1]{#1}%
\providecommand \enquote  [1]{``#1''}%
\providecommand \bibnamefont  [1]{#1}%
\providecommand \bibfnamefont [1]{#1}%
\providecommand \citenamefont [1]{#1}%
\providecommand \href@noop [0]{\@secondoftwo}%
\providecommand \href [0]{\begingroup \@sanitize@url \@href}%
\providecommand \@href[1]{\@@startlink{#1}\@@href}%
\providecommand \@@href[1]{\endgroup#1\@@endlink}%
\providecommand \@sanitize@url [0]{\catcode `\\12\catcode `\$12\catcode
  `\&12\catcode `\#12\catcode `\^12\catcode `\_12\catcode `\%12\relax}%
\providecommand \@@startlink[1]{}%
\providecommand \@@endlink[0]{}%
\providecommand \url  [0]{\begingroup\@sanitize@url \@url }%
\providecommand \@url [1]{\endgroup\@href {#1}{\urlprefix }}%
\providecommand \urlprefix  [0]{URL }%
\providecommand \Eprint [0]{\href }%
\providecommand \doibase [0]{https://doi.org/}%
\providecommand \selectlanguage [0]{\@gobble}%
\providecommand \bibinfo  [0]{\@secondoftwo}%
\providecommand \bibfield  [0]{\@secondoftwo}%
\providecommand \translation [1]{[#1]}%
\providecommand \BibitemOpen [0]{}%
\providecommand \bibitemStop [0]{}%
\providecommand \bibitemNoStop [0]{.\EOS\space}%
\providecommand \EOS [0]{\spacefactor3000\relax}%
\providecommand \BibitemShut  [1]{\csname bibitem#1\endcsname}%
\let\auto@bib@innerbib\@empty
\bibitem [{\citenamefont {Feynman}(1986)}]{Feynman1986}%
  \BibitemOpen
  \bibfield  {author} {\bibinfo {author} {\bibfnamefont {R.~P.}\ \bibnamefont
  {Feynman}},\ }\href@noop {} {\bibfield  {journal} {\bibinfo  {journal}
  {Found. Phys.}\ }\textbf {\bibinfo {volume} {16}},\ \bibinfo {pages} {507}
  (\bibinfo {year} {1986})}\BibitemShut {NoStop}%
\bibitem [{\citenamefont {Grover}(1997)}]{Grover1997}%
  \BibitemOpen
  \bibfield  {author} {\bibinfo {author} {\bibfnamefont {L.~K.}\ \bibnamefont
  {Grover}},\ }\href@noop {} {\bibfield  {journal} {\bibinfo  {journal} {Phys.
  Rev. Lett.}\ }\textbf {\bibinfo {volume} {79}},\ \bibinfo {pages} {325}
  (\bibinfo {year} {1997})}\BibitemShut {NoStop}%
\bibitem [{\citenamefont {Shor}(1997)}]{Shor1997}%
  \BibitemOpen
  \bibfield  {author} {\bibinfo {author} {\bibfnamefont {P.~W.}\ \bibnamefont
  {Shor}},\ }\href@noop {} {\bibfield  {journal} {\bibinfo  {journal} {SIAM J.
  Comput.}\ }\textbf {\bibinfo {volume} {26}},\ \bibinfo {pages} {1484}
  (\bibinfo {year} {1997})}\BibitemShut {NoStop}%
\bibitem [{\citenamefont {Yao}(1993)}]{Yao1993}%
  \BibitemOpen
  \bibfield  {author} {\bibinfo {author} {\bibfnamefont {A.~C.-C.}\
  \bibnamefont {Yao}},\ }in\ \href@noop {} {\emph {\bibinfo {booktitle}
  {Proceedings of the 34th Annual Symposium on Foundations of Computer
  Science}}}\ (\bibinfo  {publisher} {IEEE},\ \bibinfo {year} {1993})\ pp.\
  \bibinfo {pages} {352--361}\BibitemShut {NoStop}%
\bibitem [{\citenamefont {Bernstein}\ and\ \citenamefont
  {Vazirani}(1993)}]{BernsteinVazirani1993}%
  \BibitemOpen
  \bibfield  {author} {\bibinfo {author} {\bibfnamefont {E.}~\bibnamefont
  {Bernstein}}\ and\ \bibinfo {author} {\bibfnamefont {U.}~\bibnamefont
  {Vazirani}},\ }in\ \href@noop {} {\emph {\bibinfo {booktitle} {Proceedings of
  the 25th Annual ACM Symposium on Theory of Computing}}}\ (\bibinfo
  {publisher} {ACM},\ \bibinfo {year} {1993})\ pp.\ \bibinfo {pages}
  {11--20}\BibitemShut {NoStop}%
\bibitem [{\citenamefont {Bang}\ \emph {et~al.}(2018)\citenamefont {Bang},
  \citenamefont {Ryu}, \citenamefont {Lee}, \citenamefont {Yee}, \citenamefont
  {Lee},\ and\ \citenamefont {Son}}]{Bang2018}%
  \BibitemOpen
  \bibfield  {author} {\bibinfo {author} {\bibfnamefont {J.}~\bibnamefont
  {Bang}}, \bibinfo {author} {\bibfnamefont {J.}~\bibnamefont {Ryu}}, \bibinfo
  {author} {\bibfnamefont {C.-W.}\ \bibnamefont {Lee}}, \bibinfo {author}
  {\bibfnamefont {K.~H.}\ \bibnamefont {Yee}}, \bibinfo {author} {\bibfnamefont
  {J.}~\bibnamefont {Lee}},\ and\ \bibinfo {author} {\bibfnamefont
  {W.}~\bibnamefont {Son}},\ }\href@noop {} {\bibfield  {journal} {\bibinfo
  {journal} {Adv. Quantum Technol.}\ }\textbf {\bibinfo {volume} {1}},\
  \bibinfo {pages} {1800037} (\bibinfo {year} {2018})}\BibitemShut {NoStop}%
\bibitem [{\citenamefont {Cover}\ and\ \citenamefont
  {Thomas}(2006)}]{CoverThomas}%
  \BibitemOpen
  \bibfield  {author} {\bibinfo {author} {\bibfnamefont {T.~M.}\ \bibnamefont
  {Cover}}\ and\ \bibinfo {author} {\bibfnamefont {J.~A.}\ \bibnamefont
  {Thomas}},\ }\href@noop {} {\emph {\bibinfo {title} {{Elements of Information
  Theory}}}},\ \bibinfo {edition} {2nd}\ ed.\ (\bibinfo  {publisher} {Wiley},\
  \bibinfo {address} {Hoboken},\ \bibinfo {year} {2006})\BibitemShut {NoStop}%
\bibitem [{\citenamefont {Coecke}\ \emph {et~al.}(2016)\citenamefont {Coecke},
  \citenamefont {Heunen},\ and\ \citenamefont
  {Kissinger}}]{Coecke2016Channels}%
  \BibitemOpen
  \bibfield  {author} {\bibinfo {author} {\bibfnamefont {B.}~\bibnamefont
  {Coecke}}, \bibinfo {author} {\bibfnamefont {C.}~\bibnamefont {Heunen}},\
  and\ \bibinfo {author} {\bibfnamefont {A.}~\bibnamefont {Kissinger}},\
  }\href@noop {} {\bibfield  {journal} {\bibinfo  {journal} {Quantum Inf.
  Process.}\ }\textbf {\bibinfo {volume} {15}},\ \bibinfo {pages} {5179}
  (\bibinfo {year} {2016})}\BibitemShut {NoStop}%
\bibitem [{\citenamefont {Chiribella}\ \emph {et~al.}(2009)\citenamefont
  {Chiribella}, \citenamefont {D'Ariano},\ and\ \citenamefont
  {Perinotti}}]{Chiribella2009Networks}%
  \BibitemOpen
  \bibfield  {author} {\bibinfo {author} {\bibfnamefont {G.}~\bibnamefont
  {Chiribella}}, \bibinfo {author} {\bibfnamefont {G.~M.}\ \bibnamefont
  {D'Ariano}},\ and\ \bibinfo {author} {\bibfnamefont {P.}~\bibnamefont
  {Perinotti}},\ }\href@noop {} {\bibfield  {journal} {\bibinfo  {journal}
  {Phys. Rev. A}\ }\textbf {\bibinfo {volume} {80}},\ \bibinfo {pages} {022339}
  (\bibinfo {year} {2009})}\BibitemShut {NoStop}%
\bibitem [{\citenamefont {Pollock}\ \emph {et~al.}(2018)\citenamefont
  {Pollock}, \citenamefont {Rodr{\'i}guez-Rosario}, \citenamefont {Frauenheim},
  \citenamefont {Paternostro},\ and\ \citenamefont
  {Modi}}]{Pollock2018Processes}%
  \BibitemOpen
  \bibfield  {author} {\bibinfo {author} {\bibfnamefont {F.~A.}\ \bibnamefont
  {Pollock}}, \bibinfo {author} {\bibfnamefont {C.}~\bibnamefont
  {Rodr{\'i}guez-Rosario}}, \bibinfo {author} {\bibfnamefont {T.}~\bibnamefont
  {Frauenheim}}, \bibinfo {author} {\bibfnamefont {M.}~\bibnamefont
  {Paternostro}},\ and\ \bibinfo {author} {\bibfnamefont {K.}~\bibnamefont
  {Modi}},\ }\href@noop {} {\bibfield  {journal} {\bibinfo  {journal} {Phys.
  Rev. A}\ }\textbf {\bibinfo {volume} {97}},\ \bibinfo {pages} {012127}
  (\bibinfo {year} {2018})}\BibitemShut {NoStop}%
\bibitem [{\citenamefont {Milz}\ \emph {et~al.}(2020)\citenamefont {Milz},
  \citenamefont {Sakuldee}, \citenamefont {Pollock},\ and\ \citenamefont
  {Modi}}]{Milz2020Extension}%
  \BibitemOpen
  \bibfield  {author} {\bibinfo {author} {\bibfnamefont {S.}~\bibnamefont
  {Milz}}, \bibinfo {author} {\bibfnamefont {F.}~\bibnamefont {Sakuldee}},
  \bibinfo {author} {\bibfnamefont {F.~A.}\ \bibnamefont {Pollock}},\ and\
  \bibinfo {author} {\bibfnamefont {K.}~\bibnamefont {Modi}},\ }\href@noop {}
  {\bibfield  {journal} {\bibinfo  {journal} {Quantum}\ }\textbf {\bibinfo
  {volume} {4}},\ \bibinfo {pages} {255} (\bibinfo {year} {2020})}\BibitemShut
  {NoStop}%
\bibitem [{\citenamefont {Baumgratz}\ \emph {et~al.}(2014)\citenamefont
  {Baumgratz}, \citenamefont {Cramer},\ and\ \citenamefont
  {Plenio}}]{Baumgratz2014Coherence}%
  \BibitemOpen
  \bibfield  {author} {\bibinfo {author} {\bibfnamefont {T.}~\bibnamefont
  {Baumgratz}}, \bibinfo {author} {\bibfnamefont {M.}~\bibnamefont {Cramer}},\
  and\ \bibinfo {author} {\bibfnamefont {M.~B.}\ \bibnamefont {Plenio}},\
  }\href@noop {} {\bibfield  {journal} {\bibinfo  {journal} {Phys. Rev. Lett.}\
  }\textbf {\bibinfo {volume} {113}},\ \bibinfo {pages} {140401} (\bibinfo
  {year} {2014})}\BibitemShut {NoStop}%
\bibitem [{\citenamefont {Chitambar}\ and\ \citenamefont
  {Gour}(2016)}]{ChitambarGour2016Comparison}%
  \BibitemOpen
  \bibfield  {author} {\bibinfo {author} {\bibfnamefont {E.}~\bibnamefont
  {Chitambar}}\ and\ \bibinfo {author} {\bibfnamefont {G.}~\bibnamefont
  {Gour}},\ }\href@noop {} {\bibfield  {journal} {\bibinfo  {journal} {Phys.
  Rev. A}\ }\textbf {\bibinfo {volume} {94}},\ \bibinfo {pages} {052336}
  (\bibinfo {year} {2016})},\ \bibinfo {note} {erratum, Phys. Rev. A
  \textbf{95}, 019902 (2017)}\BibitemShut {NoStop}%
\bibitem [{\citenamefont {Bengtsson}\ \emph {et~al.}(2005)\citenamefont
  {Bengtsson}, \citenamefont {Ericsson}, \citenamefont {Ku{\'s}}, \citenamefont
  {Tadej},\ and\ \citenamefont {{\.Z}yczkowski}}]{Bengtsson2005Unistochastic}%
  \BibitemOpen
  \bibfield  {author} {\bibinfo {author} {\bibfnamefont {I.}~\bibnamefont
  {Bengtsson}}, \bibinfo {author} {\bibfnamefont {{\AA}.}~\bibnamefont
  {Ericsson}}, \bibinfo {author} {\bibfnamefont {M.}~\bibnamefont {Ku{\'s}}},
  \bibinfo {author} {\bibfnamefont {W.}~\bibnamefont {Tadej}},\ and\ \bibinfo
  {author} {\bibfnamefont {K.}~\bibnamefont {{\.Z}yczkowski}},\ }\href@noop {}
  {\bibfield  {journal} {\bibinfo  {journal} {Commun. Math. Phys.}\ }\textbf
  {\bibinfo {volume} {259}},\ \bibinfo {pages} {307} (\bibinfo {year}
  {2005})}\BibitemShut {NoStop}%
\bibitem [{\citenamefont {Spekkens}(2008)}]{Spekkens2008Negativity}%
  \BibitemOpen
  \bibfield  {author} {\bibinfo {author} {\bibfnamefont {R.~W.}\ \bibnamefont
  {Spekkens}},\ }\href@noop {} {\bibfield  {journal} {\bibinfo  {journal}
  {Phys. Rev. Lett.}\ }\textbf {\bibinfo {volume} {101}},\ \bibinfo {pages}
  {020401} (\bibinfo {year} {2008})}\BibitemShut {NoStop}%
\bibitem [{\citenamefont {Veitch}\ \emph {et~al.}(2012)\citenamefont {Veitch},
  \citenamefont {Ferrie}, \citenamefont {Gross},\ and\ \citenamefont
  {Emerson}}]{Veitch2012Negativity}%
  \BibitemOpen
  \bibfield  {author} {\bibinfo {author} {\bibfnamefont {V.}~\bibnamefont
  {Veitch}}, \bibinfo {author} {\bibfnamefont {C.}~\bibnamefont {Ferrie}},
  \bibinfo {author} {\bibfnamefont {D.}~\bibnamefont {Gross}},\ and\ \bibinfo
  {author} {\bibfnamefont {J.}~\bibnamefont {Emerson}},\ }\href@noop {}
  {\bibfield  {journal} {\bibinfo  {journal} {New J. Phys.}\ }\textbf {\bibinfo
  {volume} {14}},\ \bibinfo {pages} {113011} (\bibinfo {year}
  {2012})}\BibitemShut {NoStop}%
\bibitem [{\citenamefont {Montina}(2011)}]{Montina2011Dimension}%
  \BibitemOpen
  \bibfield  {author} {\bibinfo {author} {\bibfnamefont {A.}~\bibnamefont
  {Montina}},\ }\href@noop {} {\bibfield  {journal} {\bibinfo  {journal} {Phys.
  Rev. A}\ }\textbf {\bibinfo {volume} {83}},\ \bibinfo {pages} {032107}
  (\bibinfo {year} {2011})}\BibitemShut {NoStop}%
\bibitem [{\citenamefont {Cobucci}\ \emph {et~al.}(2026)\citenamefont
  {Cobucci}, \citenamefont {Bernal}, \citenamefont {Renner},\ and\
  \citenamefont {Tavakoli}}]{Cobucci2026Operational}%
  \BibitemOpen
  \bibfield  {author} {\bibinfo {author} {\bibfnamefont {G.}~\bibnamefont
  {Cobucci}}, \bibinfo {author} {\bibfnamefont {A.}~\bibnamefont {Bernal}},
  \bibinfo {author} {\bibfnamefont {M.~J.}\ \bibnamefont {Renner}},\ and\
  \bibinfo {author} {\bibfnamefont {A.}~\bibnamefont {Tavakoli}},\ }\href@noop
  {} {\bibfield  {journal} {\bibinfo  {journal} {Nat. Commun.}\ }\textbf
  {\bibinfo {volume} {17}},\ \bibinfo {pages} {1104} (\bibinfo {year}
  {2026})}\BibitemShut {NoStop}%
\bibitem [{\citenamefont {Benioff}(1982)}]{Benioff1982}%
  \BibitemOpen
  \bibfield  {author} {\bibinfo {author} {\bibfnamefont {P.}~\bibnamefont
  {Benioff}},\ }\href@noop {} {\bibfield  {journal} {\bibinfo  {journal} {Phys.
  Rev. Lett.}\ }\textbf {\bibinfo {volume} {48}},\ \bibinfo {pages} {1581}
  (\bibinfo {year} {1982})}\BibitemShut {NoStop}%
\bibitem [{\citenamefont {Deutsch}(1985)}]{Deutsch1985}%
  \BibitemOpen
  \bibfield  {author} {\bibinfo {author} {\bibfnamefont {D.}~\bibnamefont
  {Deutsch}},\ }\href@noop {} {\bibfield  {journal} {\bibinfo  {journal} {Proc.
  R. Soc. Lond. A}\ }\textbf {\bibinfo {volume} {400}},\ \bibinfo {pages} {97}
  (\bibinfo {year} {1985})}\BibitemShut {NoStop}%
\bibitem [{\citenamefont {Watrous}(2018)}]{Watrous2018}%
  \BibitemOpen
  \bibfield  {author} {\bibinfo {author} {\bibfnamefont {J.}~\bibnamefont
  {Watrous}},\ }\href@noop {} {\emph {\bibinfo {title} {{The Theory of Quantum
  Information}}}}\ (\bibinfo  {publisher} {Cambridge University Press},\
  \bibinfo {address} {Cambridge},\ \bibinfo {year} {2018})\BibitemShut
  {NoStop}%
\bibitem [{\citenamefont {Kraus}(1983)}]{Kraus1983}%
  \BibitemOpen
  \bibfield  {author} {\bibinfo {author} {\bibfnamefont {K.}~\bibnamefont
  {Kraus}},\ }\href@noop {} {\emph {\bibinfo {title} {{States, Effects, and
  Operations}}}}\ (\bibinfo  {publisher} {Springer},\ \bibinfo {address}
  {Berlin},\ \bibinfo {year} {1983})\BibitemShut {NoStop}%
\bibitem [{\citenamefont {Choi}(1975)}]{Choi1975}%
  \BibitemOpen
  \bibfield  {author} {\bibinfo {author} {\bibfnamefont {M.-D.}\ \bibnamefont
  {Choi}},\ }\href@noop {} {\bibfield  {journal} {\bibinfo  {journal} {Linear
  Algebra Appl.}\ }\textbf {\bibinfo {volume} {10}},\ \bibinfo {pages} {285}
  (\bibinfo {year} {1975})}\BibitemShut {NoStop}%
\bibitem [{\citenamefont {Birkhoff}(1946)}]{Birkhoff1946}%
  \BibitemOpen
  \bibfield  {author} {\bibinfo {author} {\bibfnamefont {G.}~\bibnamefont
  {Birkhoff}},\ }\href@noop {} {\bibfield  {journal} {\bibinfo  {journal}
  {Univ. Nac. Tucuman Rev. Ser. A}\ }\textbf {\bibinfo {volume} {5}},\ \bibinfo
  {pages} {147} (\bibinfo {year} {1946})}\BibitemShut {NoStop}%
\bibitem [{\citenamefont {Fine}(1982)}]{Fine1982}%
  \BibitemOpen
  \bibfield  {author} {\bibinfo {author} {\bibfnamefont {A.}~\bibnamefont
  {Fine}},\ }\href@noop {} {\bibfield  {journal} {\bibinfo  {journal} {Phys.
  Rev. Lett.}\ }\textbf {\bibinfo {volume} {48}},\ \bibinfo {pages} {291}
  (\bibinfo {year} {1982})}\BibitemShut {NoStop}%
\bibitem [{\citenamefont {Leggett}\ and\ \citenamefont
  {Garg}(1985)}]{LeggettGarg1985}%
  \BibitemOpen
  \bibfield  {author} {\bibinfo {author} {\bibfnamefont {A.~J.}\ \bibnamefont
  {Leggett}}\ and\ \bibinfo {author} {\bibfnamefont {A.}~\bibnamefont {Garg}},\
  }\href@noop {} {\bibfield  {journal} {\bibinfo  {journal} {Phys. Rev. Lett.}\
  }\textbf {\bibinfo {volume} {54}},\ \bibinfo {pages} {857} (\bibinfo {year}
  {1985})}\BibitemShut {NoStop}%
\end{thebibliography}

%

\end{document}